\def\ci{\perp\!\!\!\perp}
\newcommand{\oo}{{\circ\!{\--}\!\circ}}
\newcommand{\starstar}{{\star\!{\--}\!\star}}
\newcommand{\conflict}{{x\!{\--}\!x}}
\newtheorem{myassum}{Assumptions}
\newtheorem{mydef}{Definition}
\newtheorem{mythm}{Theorem}
\newtheorem{mylemma}{Lemma}
\theoremstyle{definition}
\title{Discovering contemporaneous and lagged causal relations in autocorrelated nonlinear time series datasets}
\author{ {\bf Jakob Runge} \\
German Aerospace Center \\
Institute of Data Science\\
07745 Jena, Germany \\
and\\
Technische Universit\"at Berlin \\
10623 Berlin, Germany \\
}
\begin{document}

\maketitle

\begin{abstract}
The paper introduces a novel conditional independence (CI) based method for linear and nonlinear, lagged and contemporaneous causal discovery from observational time series in the causally sufficient case. Existing CI-based methods such as the PC algorithm and also common methods from other frameworks suffer from low recall and partially inflated false positives for strong autocorrelation which is an ubiquitous challenge in time series. The novel method, PCMCI$^+$, extends PCMCI [Runge et al., 2019b] to include discovery of contemporaneous links. PCMCI$^+$ improves the reliability of CI tests by optimizing the choice of conditioning sets and even benefits from autocorrelation. The method is order-independent and consistent in the oracle case. A broad range of numerical experiments demonstrates that PCMCI$^+$ has higher adjacency detection power and especially more contemporaneous orientation recall compared to other methods while better controlling false positives. Optimized conditioning sets also lead to much shorter runtimes than the PC algorithm. PCMCI$^+$ can be of considerable use in many real world application scenarios where often time resolutions are too coarse to resolve time delays and strong autocorrelation is present.
\end{abstract}

\section{INTRODUCTION}
A number of frameworks address the problem of causal discovery from observational data utilizing different assumptions. 
Next to Bayesian score-based methods \cite{Chickering2002}, classical Granger causality (GC) \cite{Granger1969}, and the more recent restricted structural causal models (SCM) framework \cite{Peters2018,Spirtes2016},  conditional independence (CI) based network learning algorithms \cite{Spirtes2000} form a main pillar. 
A main representative of the CI framework in the causally sufficient case (no unobserved common drivers) is the PC algorithm \cite{Spirtes1991}. Its advantages lie, firstly, in the flexibility of utilizing a wide and growing class of CI tests, from linear partial correlation (ParCorr) and non-parametric residual-based approaches \cite{Ramsey2014,Runge2018d} to Kernel measures \cite{Zhang2011}, tests based on conditional mutual information \cite{Runge2018a}, and neural networks \cite{Sen2017}.
Secondly, the PC algorithm utilizes sparsity making it applicable also to large numbers of variables while score- and SCM-based methods are more difficult to adapt to nonlinear high-dimensional causal discovery.

Causal discovery in the time series case is partially less and partially more challenging \cite{Runge2019a}. Obviously, time-order greatly helps in identifying causal directions for lagged links (causes precede effects). This forms the basis of GC which, however, cannot deal with contemporaneous links and suffers from the curse of dimensionality \cite{Runge2018d}. SCM-based methods such as LiNGAM \cite{hyvarinen2010estimation} and also CI-based methods \cite{Runge2018d,Entner2010,malinsky2018causal} have been adapted to the time series case. In \cite{moneta2011causal} GC is augmented by the PC algorithm.
However, properties such as non-stationarity and especially autocorrelation can make causal discovery much less reliable. 

Here I show that autocorrelation, an ubiquitous property of time series (e.g., temperature data), is especially detrimental and propose a novel CI-based method, PCMCI$^+$, that extends the PCMCI method from \cite{Runge2018d} to also include discovery of contemporaneous links, which requires substantial changes. PCMCI$^+$ is based on two central ideas that deviate from the PC algorithm and the time-series adaptations of FCI in \cite{Entner2010,malinsky2018causal}: First, an edge removal phase is conducted separately for lagged and contemporaneous conditioning sets and the lagged phase uses much fewer CI tests. Secondly, and more importantly, PCMCI$^+$ optimizes the choice of conditioning sets for the individual CI tests to make them better calibrated under autocorrelation and increase detection power by utilizing the momentary conditional independence idea \cite{Runge2018d}. 
The paper is structured as follows. Section~\ref{sec:causal_discovery} briefly introduces the problem and Sect.~\ref{sec:pcmciplus} describes the method and states theoretical results. Numerical experiments in Sect.~\ref{sec:numerics} show that PCMCI$^+$ benefits from strong autocorrelation and yields much more adjacency detection power and especially more orientation recall for contemporaneous links while better controlling false positives at much shorter runtimes than the PC algorithm. A Supplementary Material (SM) contains proofs and further numerical experiments. 

\section{TIME SERIES CAUSAL DISCOVERY} \label{sec:causal_discovery}

\subsection{PRELIMINARIES}
We are interested in discovering \textit{time series graphs} (e.g., \cite{Runge2018b}) that can represent the temporal dependency structure underlying complex dynamical systems. Consider an underlying discrete-time structural causal process $\mathbf{X}_t=(X^1_t,\ldots,X^N_t)$ with 
\begin{align} \label{eq:causal_model}
X^j_t &= f_j\left(\mathcal{P}(X^j_t),\,\eta^j_t\right)
\end{align}
where $f_j$ are arbitrary measurable functions with non-trivial dependencies on their arguments and $\eta^j_t$ represents mutually ($i\neq j$) and serially ($t'\neq t$) independent dynamical noise. The nodes in a time series graph $\mathcal{G}$ (example in Fig.~\ref{fig:motivation}) represent the variables $X^j_t$ at different lag-times and the set of variables that $X^j_t$ depends on defines the causal parents $\mathcal{P}(X^j_t)\subset \mathbf{X}^-_{t+1}=(\mathbf{X}_{t}, \mathbf{X}_{t-1},\ldots){\setminus} \{ X^j_t\}$. We denote \emph{lagged parents} by $\mathcal{P}^-_t(X^j_t) = \mathcal{P}(X^j_t) \cap \mathbf{X}^-_t$. A lagged ($\tau>0$) or contemporaneous ($\tau=0$) causal link $X^i_{t-\tau} \to X^j_t$ exists if $X^i_{t-\tau}\in \mathcal{P}(X^j_t)$.
Throughout this work the graph $\mathcal{G}$ is assumed \emph{acyclic} and the causal links \emph{stationary} meaning that if $X^i_{t-\tau} \to X^j_t$ for some $t$, then $X^i_{t^\prime - \tau} \to X^j_{t^\prime}$ for all $t^\prime\neq t$. Then we can always fix one variable at $t$ and take $\tau \geq 0$. Note that the stationarity assumption may be relaxed.
The graph is actually infinite in time, but in practice only considered up to some maximum time lag $\tau_{\max}$. 
We define the set of adjacencies $\mathcal{A}(X^j_t)$ of a variable $X^j_t$ to include all $X^i_{t-\tau}$ for $\tau \geq 0$ that have a (lagged or contemporaneous) link with $X^j_t$ in $ \mathcal{G}$. We define contemporaneous adjacencies as $\mathcal{A}_t(X^j_t)=\mathcal{A}(X^j_t) \cap \mathbf{X}_t$. A sequence of $m$ contemporaneous links is called a \emph{directed contemporaneous path} if for all $k\in \{1,\ldots,m\}$ the link $X^{i+k-1}_{t} \to X^{i+k}_t$ occurs.  We call $X^i_t$ a \emph{contemporaneous ancestor} of $X^j_t$ if there is a directed contemporaneous path from $X^i_t$ to $X^j_t$ and we denote the set of all contemporaneous ancestors as $\mathcal{C}_t(X^j_t)$ (which excludes $X^j_t$ itself). We denote separation in the graph by $\bowtie$, see \cite{Runge2018b} for further notation details.

\subsection{PC ALGORITHM}
The PC algorithm is the most wide-spread CI-based causal discovery algorithm for the causally sufficient case and utilizes the Markov and Faithfulness assumptions as formally defined in Sect.~\ref{sec:definitions}. Adapted to time series (analogously to the methods for the latent case in \cite{Entner2010,malinsky2018causal}), it consists of three phases: First, a skeleton of adjacencies is learned based on iteratively testing which pairs of variables (at different time lags) are conditionally independent at some significance level $\alpha_{\rm PC}$ (Alg.~\ref{algo:step2} with the PC option). For lagged links, time-order automatically provides orientations, while for contemporaneous links a collider phase (Alg.~\ref{algo:step3_SM}) and rule phase (Alg.~\ref{algo:step4_SM}) determine the orientation of links. CI-based discovery algorithms can identify the contemporaneous graph structure only up to a Markov equivalence class represented as a completed partially directed acyclic graph (CPDAG). We denote links for which more than one orientation occurs in the Markov equivalence class by $X^i_{t} \oo X^j_t$. Here we consider a modification of PC that removes an undesired dependence on the order of variables, called PC-stable \cite{Colombo2014}. These modifications also include either the \emph{majority} or \emph{conservative} \cite{ramsey2006adjacency} rule for handling ambiguous triples where separating sets are inconsistent, and conflicting links where different triples in the collider or orientation phase lead to conflicting link orientations. With the \emph{conservative} rule the PC algorithm is consistent already under the weaker Adjacency Faithfulness condition \cite{ramsey2006adjacency}. 
Another approach for the time series case (considered in the numerical experiments) is to combine vector-autoregressive modeling to identify lagged links with the PC algorithm for the contemporaneous causal structure \cite{moneta2011causal}.

\subsection{AUTOCORRELATION}
\begin{figure*}[t]  
\centering
\includegraphics[width=1.\linewidth]{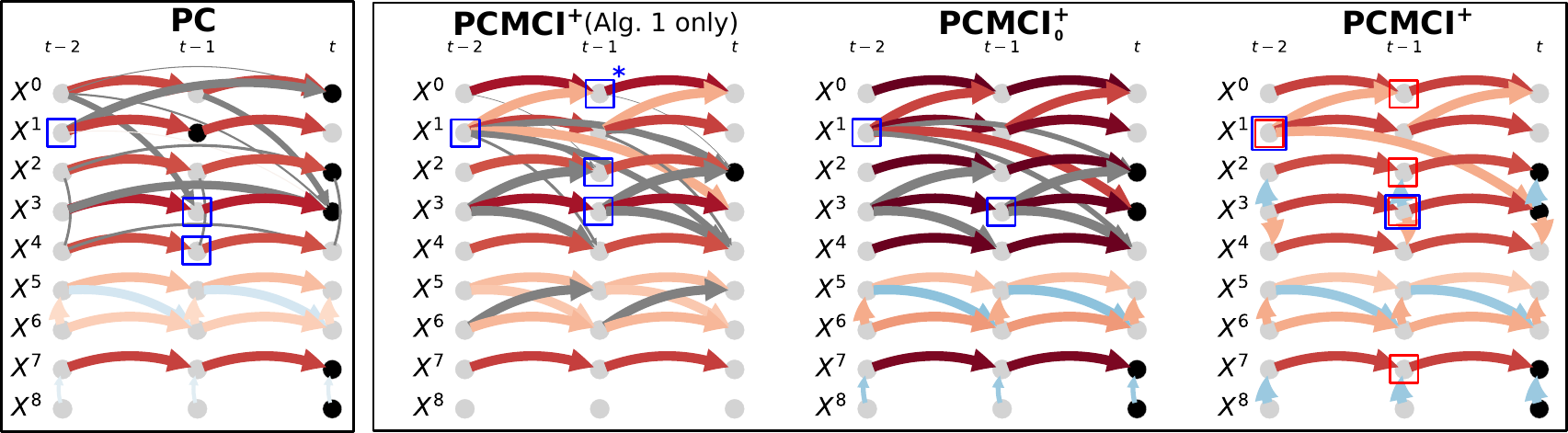}%
\caption{The curse and blessing of autocorrelation. Linear example of model~\eqref{eq:numericalmodel} with ground truth links shown for the PCMCI$^+$ case (right panel). All autodependency coefficients are $0.95$ (except $0.475$ for $X^{5,6}$) and all cross-coupling coefficients are $0.4$ ($\pm$ indicated by red/blue links). The graphs show true and false link detection rates as the link width (if $>$ 0.06) for true (color indicating ParCorr) and incorrect links (grey) for the PC algorithm, Alg.~\ref{algo:step1}, and the variants PCMCI$^+$ and PCMCI$^+_0$ as explained in the text (detection rates based on $500$ realizations run at $\alpha_{\rm PC}=0.01$ for $T=500$).}
\label{fig:motivation}
\end{figure*}
To illustrate the challenge of autocorrelation, in Fig.~\ref{fig:motivation} we consider a linear example with lagged and contemporaneous ground truth links shown for the PCMCI$^+$ case (right panel). The PC algorithm (Alg.~\ref{algo:step2} with ParCorr CI test) starts by testing all unconditional independencies ($p=0$). Here the coupled pairs $(X^5,X^6)$ as well as $(X^7,X^8)$ are independent of the other variables and removed from each others adjacency sets, which shows how PC exploits sparsity and reduces the estimation dimension compared to fitting a full model on the whole past as in the GC framework.
Due to the strong autocorrelation the remaining variables, on the other hand, are almost all adjacent to each other at multiple time lags in this iteration. In the next iteration, CI for all remaining links is tested conditional on all one-dimensional ($p=1$) conditioning sets. Here the PC algorithm removes the true lagged link $X^1_{t-1}\to X^0_t$ (black dots) due to the incorrect CI result $X^1_{t-1} \ci X^0_t|X^1_{t-2}$ (condition marked by blue box). Later this then leads to the false positive $X^1_{t-2}\to X^0_t$ (grey link) since $X^1_{t-1}$ is not conditioned on. In a similar way the true link $X^1_{t-2}\to X^3_t$ is missed leading to the false positive $X^0_{t-1}\to X^3_t$. 
Further, the true contemporaneous link $X^2_{t}\oo X^3_t$ (similarly $X^3_{t}\oo X^4_t$) is removed when conditioning on $\mathcal{S}=(X^4_{t-1},X^3_{t-1})$ (blue boxes), which leads to the false positive autodependencies at lag $2$ for $X^2_{t}, X^4_{t}$, while the false autodependency $X^3_{t-2}\to X^3_t$ is due to missing $X^1_{t-2}\to X^3_t$. This illustrates the pattern of a cascade of false negative errors (missing links) leading to false positives in later stages of the PC algorithm.

What determines the removal of a true link in the finite sample case? Detection power depends on sample size, the significance level $\alpha_{\rm PC}$, the CI test dimension ($p+2$), and effect size, e.g., the absolute ParCorr (population) value, here denoted $I(X^i_{t-\tau};X^j_t|\mathcal{S})$ for some conditioning set $\mathcal{S}$. Within each $p$-iteration the sample size, $\alpha_{\rm PC}$, and the dimension are the same and a link will be removed if $I(X^i_{t-\tau};X^j_t|\mathcal{S})$ falls below the $\alpha_{\rm PC}$-threshold for \emph{any} considered $\mathcal{S}$. Hence, the overall minimum effect size $\min_{\mathcal{S}} [I(X^i_{t-\tau};X^j_t|\mathcal{S})]$ determines whether a link is removed. The PC algorithm will iterate through \emph{all} subsets of adjacencies such that this minimum can become very small. Low effect size can be understood as a low (causal) signal-to-noise ratio: Here $I(X^1_{t-1};X^0_t|X^1_{t-2})$ is small since the signal $X^1_{t-1}$ is reduced by conditioning on its autodependency $X^1_{t-2}$ and the `noise' in $X^0_t$ is large due to its strong autocorrelation. 

But autocorrelation can also be a blessing. The contemporaneously coupled pair $(X^7,X^8)$ illustrates a case where autocorrelation helps to identify the orientation of the link. Without autocorrelation the output of PC would be an unoriented link to indicate the Markov equivalence class. On the other hand, the detection rate here is rather weak since, as above, the signal (link from $X^8_t$) is small compared to the noise (autocorrelation in $X^7$).

This illustrates the curse and blessing of autocorrelation. In summary, the PC algorithm often results in false negatives (low recall) and these then lead to false positives. Another reason for false positives are ill-calibrated tests: To correctly model the null distribution, each individual CI test would need to account for autocorrelation, which is difficult in a complex multivariate and potentially nonlinear setting \cite{Runge2018b}. In the experiments we will see that the PC algorithm features inflated false positives.

As a side comment, the pair $(X^5,X^6)$ depicts a feedback cycle. These often occur in real data and the example shows that time series graphs allow to resolve time-delayed feedbacks while an aggregated \emph{summary graph} would contain a cyclic dependency and summary graph-based methods assuming acyclic graphs would not work. The orientation of the contemporaneous link $X^6_t\to X^5_t$ is achieved via rule R1 in the orientation phase of PC (Alg.~\ref{algo:step4_SM}).

\section{PCMCI$^+$} \label{sec:pcmciplus}
\begin{figure*}[t]  
\centering
\includegraphics[width=1.\linewidth]{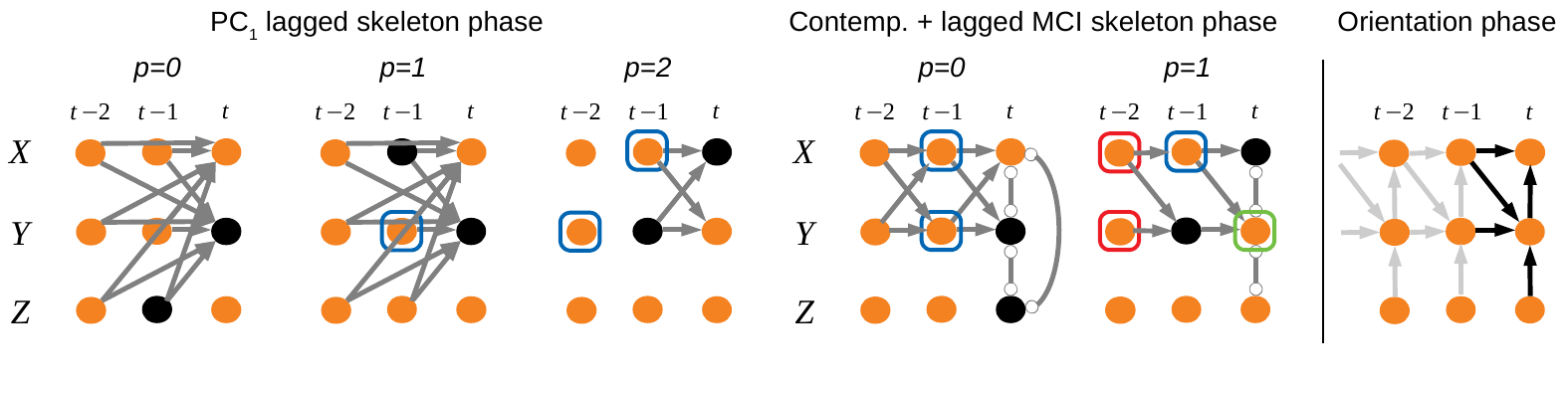}%
\caption{Schematic of PCMCI$^+$. Note that for ease of visualization repeating edges due to stationarity are only partially shown.}
\label{fig:schematic}
\end{figure*}
\subsection{ALGORITHM}
The goal of PCMCI$^+$ is to optimize the choice of conditioning sets in CI tests in order to increase detection power and at the same time maintain well-calibrated tests. The approach is based on two central ideas, (1)~separating the skeleton edge removal phase into a lagged and contemporaneous conditioning phase with much fewer CI tests and (2)~utilizing the momentary conditional independence (MCI) test \cite{Runge2018d} idea in the contemporaneous conditioning phase. Below, I explain the reasoning behind. Figure~\ref{fig:schematic} illustrates the steps.

First, the goal of PC's skeleton phase is to remove all those adjacencies that are due to indirect paths and common causes by conditioning on subsets $\mathcal{S}$ of the variables' neighboring adjacencies in each iteration. Consider a variable $X^j_t$. If we test lagged adjacencies from nodes $X^i_{t-\tau}\in \mathbf{X}^-_t$ conditional on the whole past, i.e., $\mathcal{S}=\mathbf{X}^-_t\setminus\{X^i_{t-\tau}\}$, the only indirect adjacencies remaining are due to paths through contemporaneous parents of $X^j_t$.  This is in contrast to conditioning sets on contemporaneous adjacencies which can also open up paths $X^j_t \to X^k_t \leftarrow X^i_{t-\tau}$ if $X^k_t$ is conditioned on. One reason why the PC algorithm tests \emph{all} combinations of subsets $\mathcal{S}$ is to avoid opening up such collider paths. 
Therefore, one approach would be to start by $\mathcal{S}=\mathbf{X}^-_t\setminus\{X^i_{t-\tau}\}$ and then iterate through contemporaneous conditions. A similar idea lies behind the combination of GC and the PC algorithm in \cite{moneta2011causal}. However, conditioning on large-dimensional conditioning sets strongly affects detection power \cite{Runge2018d}. To avoid this, the lagged conditioning phase of PCMCI$^+$ (Alg.~\ref{algo:step1}, see Fig.~\ref{fig:schematic} left panels) tests all pairs $(X^i_{t-\tau}, X^j_t)$ for $\tau>0$ conditional on only the \emph{strongest} $p$ adjacencies of $X^j_t$ in each $p$-iteration without going through all $p$-dimensional subsets of adjacencies. This choice $(i)$~improves the causal signal-to-noise ratio and recall since for a given test $X^i_{t-\tau} \ci X^j_t~|~\mathcal{S}$ the `noise' in $X^j_t$ due to other lagged adjacencies is conditioned out, $(ii)$~leads to fewer CI tests further improving recall, and $(iii)$~speeds up the skeleton phase. We denote the lagged adjacency set resulting from Alg.~\ref{algo:step1} as $\widehat{\mathcal{B}}^-_t(X^j_t)$. Lemma~\ref{thm:step1b} in Sect.~\ref{sec:consistency} states that the only remaining indirect adjacencies in $\widehat{\mathcal{B}}^-_t(X^j_t)$ are then due to paths passing through contemporaneous parents of $X^j_t$. In the schematic in Fig.~\ref{fig:schematic} this is the link $Y_{t-1}\to X_t$.

Secondly, in Alg.~\ref{algo:step2} (Fig.~\ref{fig:schematic} center panels) the graph $\mathcal{G}$ is  initialized with all contemporaneous adjacencies plus all lagged adjacencies from $\widehat{\mathcal{B}}^-_t(X^j_t)$ for all $X^j_t$. Algorithm~\ref{algo:step2} tests all (unordered lagged and ordered contemporaneous) adjacent pairs $(X^i_{t-\tau}, X^j_t)$ and iterates through contemporaneous conditions $\mathcal{S}\subseteq\mathcal{A}_t(X^j_t)$ with the MCI test
\begin{align}
X^i_{t-\tau}&{\ci} X^j_t~|~\mathcal{S},\widehat{\mathcal{B}}^-_t(X^j_t){\setminus}\{X^i_{t{-}\tau}\},\widehat{\mathcal{B}}^-_{t{-}\tau}(X^i_{t{-}\tau}) \label{eq:pcmci}.
\end{align}
In the schematic in Fig.~\ref{fig:schematic} the condition on $\mathcal{S}=Y_t$, as part of the full conditioning set $\mathcal{S},\widehat{\mathcal{B}}^-_t(X^j_t){\setminus}\{X^i_{t{-}\tau}\},\widehat{\mathcal{B}}^-_{t{-}\tau}(X^i_{t{-}\tau})$, removes the link $X_t\oo Z_t$.
The condition on $\widehat{\mathcal{B}}^-_t(X^j_t)$ blocks paths through lagged parents and the advantage of the additional conditioning on $\widehat{\mathcal{B}}^-_{t{-}\tau}(X^i_{t{-}\tau})$ is discussed in the following. We denote the variant without the condition on $\widehat{\mathcal{B}}^-_{t{-}\tau}(X^i_{t{-}\tau})$ as PCMCI$^+_0$.
Both versions are followed by the collider orientation phase (Alg.~\ref{algo:step3_SM}) and rule orientation phase (Alg.~\ref{algo:step4_SM}) which are deferred to the SM since they are equivalent to the PC algorithm with the modification that the additional CI tests in the collider phase for the conservative or majority rule are also based on the test~\eqref{eq:pcmci} (Fig.~\ref{fig:schematic} right panel). 

We now discuss PCMCI$^+_0$ and PCMCI$^+$ on the example in Fig.~\ref{fig:motivation}.
Algorithm~\ref{algo:step1} tests $X^1_{t-1}\to X^0_t$ conditional on $\mathcal{S}=\{X^0_{t-1}\}$ for $p=1$ and $\mathcal{S}=\{X^0_{t-1},X^1_{t-2}\}$ for $p=2$ as the two strongest adjacencies (as determined by the test statistic value, see pseudo-code). In both of these tests the effect size $I$ (causal signal-to-noise ratio) is much larger than for the condition on $\mathcal{S}=\{X^1_{t-2}\}$ which lead to the removal of $X^1_{t-1}\to X^0_t$ in the PC algorithm.
In Sect.~\ref{sec:consistency} we elaborate more rigorously on effect size. 
In the example $\widehat{\mathcal{B}}^-_t(X^2_t)$ is indicated as blue boxes in the second panel and contains lagged parents as well as adjacencies due to paths passing through contemporaneous parents of $X^2_t$. One false positive, likely due to an ill-calibrated test caused by autocorrelation, is marked by a star.

Based on these lagged adjacencies, Alg.~\ref{algo:step2} with the PCMCI$^+_0$ option then recovers all lagged links (3rd panel), but it still the misses contemporaneous adjacencies $X^2_{t}\oo X^3_t$ and $X^3_{t}\oo X^4_t$ and we also see strong lagged false positives from $X^3$ to $X^2$ and $X^4$. What happened here? The problem are now tests on contemporaneous links: The CI test for PCMCI$^+_0$ in the $p=0$ loop, like the original PC algorithm, will test \emph{ordered} contemporaneous pairs. Hence, first $X^2_{t}\oo X^3_t$ conditional on $\widehat{\mathcal{B}}^-_t(X^3_t)$ and, if the link is not removed, $X^3_{t}\oo X^2_t$ conditional on $\widehat{\mathcal{B}}^-_t(X^2_t)$. Here $X^2_{t}\oo X^3_t$ is removed conditional on $\widehat{\mathcal{B}}^-_t(X^3_t)$ (indicated by blue boxes in the panel) because $I(X^2_{t}; X^3_t|\widehat{\mathcal{B}}^-_t(X^3_t))$ falls below the significance threshold.

The second central idea of PCMCI$^+$ is to improve the effect size of CI tests for contemporaneous links by conditioning on \emph{both} lagged adjacencies $\widehat{\mathcal{B}}^-_t$ in the CI test~\eqref{eq:pcmci} (see blue and red boxes in Fig.~\ref{fig:motivation} right panel). At least for the initial phase $p=0$ one can prove that for non-empty $\widehat{\mathcal{B}}^-_t$ the effect size of the PCMCI$^+$ CI test is always strictly larger than that of the PCMCI$^+_0$ test (Thm.~\ref{thm:effect_size}). I conjecture that this similarly holds for PCMCI$^+$ vs. the PC algorithm. Higher effect size leads to higher recall and PCMCI$^+$ now recovers all lagged as well as contemporaneous links and also correctly removes the lagged false positives that PCMCI$^+_0$ obtains. Also the contemporaneous coupled pair $(X^7,X^8)$ is now much better detected since the MCI effect size $I(X^7_t;X^8_t|X^7_{t-1})$ is larger than $I(X^7_t;X^8_t)$, one of the two PCMCI$^+_0$ and PC algorithm effect sizes tested here.

Another advantage, discussed in \cite{Runge2018d} is that PCMCI$^+$ CI tests are better calibrated, in contrast to PCMCI$^+_0$ and PC algorithm tests, since the condition on both parents removes autocorrelation effects. Note that for lagged links the effect size of PCMCI$^+$ is generally smaller than that of PCMCI$^+_0$ since the extra condition on $\widehat{\mathcal{B}}^-_{t{-}\tau}(X^i_{t{-}\tau})$ can only reduce effect size (see \cite{Runge2012b}). This is the cost of avoiding inflated false positives.

In summary, the central PCMCI$^+$ idea is to increase effect size in individual CI tests to achieve higher detection power and at the same time maintain well-controlled false positives also for high autocorrelation. Correct adjacency information then leads to better orientation recall in Alg.~\ref{algo:step3_SM},\,\ref{algo:step4_SM}. The other advantage of PCMCI$^+$ compared to the PC algorithm is a much faster and, as numerical examples show, also much less variable runtime.

The full algorithm is detailed in pseudo-code Algorithms~\ref{algo:step1},\ref{algo:step2},\ref{algo:step3_SM},\ref{algo:step4_SM} with differences to PC and PCMCI$^+_0$ indicated. Note that pairs $(X^i_{t-\tau},X^j_t)$ in lines 5 and 6 of Alg.~\ref{algo:step2} are ordered for $\tau=0$ and unordered for $\tau>0$. 
One can construct (rather conservative) $p$-values for the skeleton adjacencies $(X^i_{t-\tau},X^j_t)$ by taking the maximum $p$-value over all CI tests conducted in Alg.~\ref{algo:step2}. A link strength can be defined corresponding to the test statistic value of the maximum $p$-value. Based on the PC stable variant, PCMCI$^+$ is fully order-independent. Here shown is the majority-rule implementation of the collider phase, the version without handling ambiguous triples and for the conservative rule are detailed in Alg.~\ref{algo:step3_SM}. Note that the tests in the collider phase also use the CI tests~\eqref{eq:pcmci}.

Like other CI-based methods, PCMCI$^+$ has the free parameters $\alpha_{\rm PC}$, $\tau_{\max}$, and the choice of the CI test. $\alpha_{\rm PC}$ can be chosen based on cross-validation or an information criterion (implemented in \texttt{tigramite}). $\tau_{\max}$ should be larger or equal to the maximum true time lag of any parent and can in practice also be chosen based on model selection. However, the numerical experiments indicate that, in contrast to GC, a too large $\tau_{\max}$ does not degrade performance much and $\tau_{\max}$ can also be chosen based on the lagged dependence functions, see \cite{Runge2018d}. PCMCI$^+$ can flexibly be combined with different CI tests for nonlinear causal discovery, and for different variable types (discrete or continuous, univariate or multivariate).


The computational complexity of PCMCI$^+$ strongly depends on the network structure. The sparser the causal dependencies, the faster the convergence. Compared to the original PC algorithm with worst-case exponential complexity, the complexity is much reduced since Alg.~\ref{algo:step1} only has polynomial complexity \cite{Runge2018d} and Alg.~\ref{algo:step2} only iterates through contemporaneous conditioning sets, hence the worst-case exponential complexity only applies to $N$ and not to $N \tau_{\max}$. 

\begin{algorithm*}[h!]
\caption{(PCMCI$^+$ / PCMCI$^+_0$ lagged skeleton phase)}
\begin{algorithmic}[1]
\Require Time series dataset $\mathbf{X}=(X^1,\,\ldots,X^N)$, max. time lag $\tau_{\max}$, significance threshold $\alpha_{\rm PC}$, CI test ${\rm CI}(X,\,Y,\,\mathbf{Z})$ returning $p$-value and test statistic value $I$
\ForAll{$X^j_{t}$ in $\mathbf{X}_t$} 
    \State Initialize $\widehat{\mathcal{B}}^-_t(X^j_t){=}\mathbf{X}^{-}_{t}{=}(\mathbf{X}_{t-1},\dots,\mathbf{X}_{t-\tau_{\max}})$ and  $I^{\min}(X^i_{t-\tau},X^j_t)=\infty ~~\forall~ X^i_{t-\tau} \in \widehat{\mathcal{B}}^-_t(X^j_t)$
    \State Let $p=0$
    \While {any $X^i_{t-\tau}\in \widehat{\mathcal{B}}^-_t(X^j_t)$ satisfies $|\widehat{\mathcal{B}}^-_t(X^j_t){\setminus}\{X^i_{t-\tau}\}| \geq p$}
        \ForAll{$X^i_{t-\tau}$ in $\widehat{\mathcal{B}}^-_t(X^j_t)$ satisfying $|\widehat{\mathcal{B}}^-_t(X^j_t){\setminus}\{X^i_{t-\tau}\}| \geq p$} 
            \State $\mathcal{S}=$ first $p$ variables in $\widehat{\mathcal{B}}^-_t(X^j_{t})\setminus \{X^i_{t-\tau}\}$
            \State $(\text{$p$-value},\,I) \gets$ \Call{CI}{$X^i_{t-\tau},\,X^j_{t},\,\mathcal{S}$}
            \State $I^{\min}(X^i_{t-\tau},X^j_t)=\min(|I|,I^{\min}(X^i_{t-\tau},X^j_t))$
            \If{$p$-value $> \alpha_{\rm PC}$} mark $X^i_{t-\tau}$ for removal
            \EndIf
        \EndFor
        \State Remove non-significant entries and sort $\widehat{\mathcal{B}}^-_t(X^j_t)$ by $I^{\min}(X^i_{t-\tau},X^j_t)$ from largest to smallest
        \State Let $p=p+1$
    \EndWhile
\EndFor
\State \Return $\widehat{\mathcal{B}}^-_t(X^j_t)$ for all $X^j_{t}$ in $\mathbf{X}_t$
\end{algorithmic}
 \label{algo:step1}
\end{algorithm*}

\begin{algorithm*}[h!]
\caption{(PCMCI$^+$ / PCMCI$^+_0$ contemporaneous skeleton phase / PC full skeleton phase)}
\begin{algorithmic}[1]
\Require Time series dataset $\mathbf{X}=(X^1,\,\ldots,X^N)$, max. time lag $\tau_{\max}$, significance threshold $\alpha_{\rm PC}$, ${\rm CI}(X,\,Y,\,\mathbf{Z})$, PCMCI$^+$ / PCMCI$^+_0$: $\widehat{\mathcal{B}}^-_t(X^j_t)$ for all $X^j_{t}$ in $\mathbf{X}_t$
\State PCMCI$^+$ / PCMCI$^+_0$: Form time series graph $\mathcal{G}$ with lagged links from $\widehat{\mathcal{B}}^-_t(X^j_t)$ for all $X^j_{t}$ in $\mathbf{X}_t$ and fully connect all contemporaneous variables, i.e., add  $X^i_t \oo X^j_t$ for all $X^i_t\neq X^j_t\in \mathbf{X}_t$
\item[] PC: Form fully connected time series graph $\mathcal{G}$ with lagged and contemporaneous links
\State PCMCI$^+$ / PCMCI$^+_0$: Initialize contemporaneous adjacencies $\widehat{\mathcal{A}}(X^j_t):=\widehat{\mathcal{A}}_t(X^j_t)=\{X^i_{t}{\neq} X^j_t \in \mathbf{X}_t: X^i_t \oo X^j_t ~\text{in $\mathcal{G}$}\}$
\item[] PC: Initialize full adjacencies $\widehat{\mathcal{A}}(X^j_t)$ for all (lagged and contemporaneous) links in $\mathcal{G}$
\State Initialize $I^{\min}(X^i_{t-\tau},X^j_t)=\infty$ for all links in $\mathcal{G}$
\State Let $p=0$
\While {any adjacent pairs $(X^i_{t-\tau}, X^j_t)$ for $\tau \geq 0$ in $\mathcal{G}$ satisfy $|\widehat{\mathcal{A}}(X^j_t){\setminus}\{X^i_{t-\tau}\}| \geq p$}
    \State Select new adjacent pair $(X^i_{t-\tau}, X^j_t)$ for $\tau \geq 0$ satisfying $|\widehat{\mathcal{A}}(X^j_t){\setminus}\{X^i_{t-\tau}\}| \geq p$
    \While {$(X^i_{t-\tau}, X^j_t)$ are adjacent in $\mathcal{G}$ and not all $\mathcal{S}\subseteq\widehat{\mathcal{A}}(X^j_t){\setminus}\{X^i_{t-\tau}\}$ with $|\mathcal{S}|=p$ have been considered}
        \State Choose new $\mathcal{S}{\subseteq}\widehat{\mathcal{A}}(X^j_t){\setminus}\{X^i_{t-\tau}\}$ with $|\mathcal{S}|{=}p$ 
        \State PCMCI$^+$: Set $\mathbf{Z}{=}(\mathcal{S},\widehat{\mathcal{B}}^-_t(X^j_t){\setminus}\{X^i_{t{-}\tau}\},\widehat{\mathcal{B}}^-_{t{-}\tau}(X^i_{t{-}\tau}))$
        \item[] \hspace*{2.7em} PCMCI$^+_0$: Set $\mathbf{Z}{=}(\mathcal{S},\widehat{\mathcal{B}}^-_t(X^j_t){\setminus}\{X^i_{t{-}\tau}\})$
        \item[] \hspace*{2.7em} PC: Set $\mathbf{Z}{=}\mathcal{S}$
        \State $(\text{$p$-value},\,I) \gets$ \Call{CI}{$X^i_{t{-}\tau},X^j_{t},\mathbf{Z}}$
        \State $I^{\min}(X^i_{t-\tau},X^j_t)=\min(|I|,I^{\min}(X^i_{t-\tau},X^j_t))$
        \If{$p$-value $> \alpha_{\rm PC}$}
            \State Delete link $X^i_{t-\tau}\to X^j_t$ for $\tau>0$ (or $X^i_{t}\oo X^j_t$ for $\tau=0$) from $\mathcal{G}$
            \State Store (unordered) ${\rm sepset}(X^i_{t-\tau}, X^j_t)=\mathcal{S}$ 
        \EndIf
    \EndWhile
    \State Let $p=p+1$
    \State Re-compute $\widehat{\mathcal{A}}(X^j_t)$ from $\mathcal{G}$ and sort by $I^{\min}(X^i_{t-\tau},X^j_t)$ from largest to smallest
\EndWhile
\State \Return $\mathcal{G}$, sepset
\end{algorithmic}
 \label{algo:step2}
\end{algorithm*}

\subsection{THEORETICAL RESULTS} \label{sec:consistency}
This section states asymptotic consistency, finite sample order-independence, and further results regarding effect size and false positive control.
The consistency of network learning algorithms is separated into \emph{soundness}, i.e., the returned graph has correct adjacencies, and \emph{completeness}, i.e., the returned graph is also maximally informative (links are oriented as much as possible).
We start with the following assumptions.
\begin{myassum}[Asymptotic case] \label{assum:asymptotic}
Throughout this paper we assume Causal Sufficiency, the Causal Markov Condition, the Adjacency Faithfulness Conditions, and consistent CI tests (oracle). In the present time series context we also assume stationarity and time-order and that the maximum time lag $\tau_{\max}\geq \tau^{\mathcal{P}}_{\max}$, where $ \tau^{\mathcal{P}}_{\max}$ is the maximum time lag of any parent in the SCM~\eqref{eq:causal_model}.
Furthermore, we rule out \emph{selection variables} and \emph{measurement error}.
\end{myassum}
Definitions of these assumptions, adapted from \cite{Spirtes2000} to the time series context, are in Sect.~\ref{sec:definitions} and all proofs are in Sect.~\ref{sec:proofs}. We start with the following lemma.
\begin{mylemma} \label{thm:step1b}
Under Assumptions~\ref{assum:asymptotic} Alg.~\ref{algo:step1} returns a set that always contains the parents of $X^j_t$ and, \emph{at most}, the lagged parents of all contemporaneous ancestors of $X^j_t$, i.e., $\widehat{\mathcal{B}}^-_t(X^j_t)= \bigcup_{X^i_t\in \{X^j_t\} \cup \mathcal{C}_t(X^j_t)}  \mathcal{P}^-_t(X^i_t)$.
\end{mylemma}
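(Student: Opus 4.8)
The plan is to work entirely in the oracle regime, where by the Causal Markov Condition and Faithfulness (Assumption~\ref{assum:asymptotic}) every CI test in Alg.~\ref{algo:step1} returns the exact $d$-separation verdict, and to establish the asserted identity by proving the two inclusions separately. Write $\mathcal{B}:=\bigcup_{X^k_t\in\{X^j_t\}\cup\mathcal{C}_t(X^j_t)}\mathcal{P}^-_t(X^k_t)$ for the target set. The single structural feature I would exploit throughout is that \emph{every} conditioning set tested in Alg.~\ref{algo:step1} is a subset of the strict past $\mathbf{X}^-_t$, so no contemporaneous (time-$t$) variable is ever conditioned on.

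First I would show $\widehat{\mathcal{B}}^-_t(X^j_t)\supseteq\mathcal{B}$, i.e.\ no element of $\mathcal{B}$ is ever removed. If $X^i_{t-\tau}\in\mathcal{P}^-_t(X^j_t)$ it is adjacent to $X^j_t$, so Adjacency Faithfulness makes it dependent given any subset of $\mathbf{X}^-_t\setminus\{X^i_{t-\tau}\}$ and the test never fires. If instead $X^i_{t-\tau}\in\mathcal{P}^-_t(X^k_t)$ for a contemporaneous ancestor $X^k_t\in\mathcal{C}_t(X^j_t)$, there is a path $X^i_{t-\tau}\to X^k_t\to\cdots\to X^j_t$ formed by the lagged edge into $X^k_t$ followed by a directed contemporaneous path to $X^j_t$; all its intermediate nodes lie at time $t$ and hence are never in any conditioning set of Alg.~\ref{algo:step1}, so the path has no conditioned non-collider and no collider and is active for every such set. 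Faithfulness then turns this $d$-connection into dependence, so the variable survives. Thus the running set always contains $\mathcal{B}$.

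Next I would prove the reverse inclusion through the key $d$-separation statement: for $X^i_{t-\tau}\notin\mathcal{B}$ and \emph{any} $C$ with $\mathcal{B}\subseteq C\subseteq\mathbf{X}^-_t\setminus\{X^i_{t-\tau}\}$ one has $X^i_{t-\tau}\bowtie X^j_t\mid C$. I would prove this by tracing an arbitrary, putatively active path backwards from $X^j_t$, using time-order (no edge points backwards in time) and that $C$ contains only strictly-past nodes. The path must enter $X^j_t$ through an incoming edge (an outgoing one starts an excursion into descendants at times $\geq t$ that cannot return, every collider there being unopened since neither it nor any descendant lies in the strictly-past $C$). Tracing forward from the last point where the path crosses from the past into time $t$, each time-$t$ node is a non-collider entered by an in-edge and hence forced to emit an out-edge, so this final segment is a \emph{directed} contemporaneous path into $X^j_t$; consequently the crossing node is a lagged parent of a contemporaneous ancestor of $X^j_t$, i.e.\ an element of $\mathcal{B}\subseteq C$, and, being reached by an out-edge, it is a non-collider that blocks the path. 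The only escape is that the crossing node be the endpoint $X^i_{t-\tau}$ itself, which forces $X^i_{t-\tau}\in\mathcal{B}$, a contradiction. To close the argument I would combine this with the first part: the running set is monotone decreasing and always contains $\mathcal{B}$, so the procedure halts at some $F\supseteq\mathcal{B}$; were $F\setminus\mathcal{B}$ nonempty, then at $p=|F|-1$ each of its members $X^i_{t-\tau}$ is tested with $\mathcal{S}=F\setminus\{X^i_{t-\tau}\}$, which satisfies $\mathcal{B}\subseteq\mathcal{S}\subseteq\mathbf{X}^-_t\setminus\{X^i_{t-\tau}\}$, so the separation statement removes it and contradicts termination at $F$. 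Hence $F=\mathcal{B}$.

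I expect the main obstacle to be this $d$-separation statement, and in particular its \emph{robustness to enlarging the conditioning set}: I must rule out that conditioning on past nodes beyond $\mathcal{B}$ opens a collider and reconnects $X^i_{t-\tau}$ to $X^j_t$. The backward trace handles this precisely because, whatever colliders are opened elsewhere, the unavoidable past-to-present crossing node always lies in $\mathcal{B}$ and always enters as a non-collider, so it blocks for \emph{every} admissible $C$. This robustness is exactly what reconciles the restricted search of Alg.~\ref{algo:step1} (which, unlike PC, never enumerates all $p$-subsets but ultimately conditions on the \emph{entire} remaining lagged adjacency set) with correctness, and shows the effect-size ordering is immaterial in the oracle case. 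A secondary point to watch is that converting the ancestral $d$-connection into genuine dependence in the first part relies on Faithfulness beyond the adjacent case directly covered by Adjacency Faithfulness.
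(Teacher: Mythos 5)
Your proof is correct, and for one of the two inclusions it takes a genuinely different route from the paper's. Writing $\mathcal{B}=\bigcup_{X^k_t\in\{X^j_t\}\cup\mathcal{C}_t(X^j_t)}\mathcal{P}^-_t(X^k_t)$, your argument for $\mathcal{B}\subseteq\widehat{\mathcal{B}}^-_t(X^j_t)$ is essentially the paper's: parents survive by Adjacency Faithfulness, and lagged parents of contemporaneous ancestors survive because the directed contemporaneous path into $X^j_t$ has no collider and no conditioned node (Alg.~\ref{algo:step1} conditions only on lagged variables), so standard Faithfulness forces dependence; you correctly flag, as the paper does after the lemma, that this direction needs Faithfulness beyond the adjacency case. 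For the converse inclusion the paper argues axiomatically: it asserts $(X^i_{t-\tau},W^-_t)\ci X^j_t\mid\mathcal{B}$ as an application of the Causal Markov Condition and then uses the weak-union rule to contradict the dependence recorded in the last iteration. You instead prove a graphical separation lemma---for any $C$ with $\mathcal{B}\subseteq C\subseteq\mathbf{X}^-_t\setminus\{X^i_{t-\tau}\}$, every path is blocked, since the last past-to-present crossing node is forced by time order (and by the impossibility of opening colliders at times $\geq t$ with a strictly-past conditioning set) to be a lagged parent of a contemporaneous ancestor, hence a conditioned non-collider---and then convert separation to independence by the global Markov property. The comparison favors your route on rigor: the paper's appeal to the Markov Condition is not a literal application of its own Definition~\ref{defs:markov}, which conditions on the \emph{full} parent set $\mathcal{P}(X^j_t)$ including contemporaneous parents, whereas $\mathcal{B}$ omits those and adds lagged parents of contemporaneous ancestors; closing that step requires exactly the path-blocking fact you prove (equivalently, a set-level Markov property for $\{X^j_t\}\cup\mathcal{C}_t(X^j_t)$, all of whose contemporaneous parents lie inside that set). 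Your lemma's robustness to enlarging $C$ also makes transparent why conditioning on the entire surviving lagged set at termination removes every non-member. One caveat you share with the paper: your termination step assumes each survivor is eventually tested with $\mathcal{S}=F\setminus\{X^i_{t-\tau}\}$, which need not literally happen if several variables are deleted in the penultimate iteration (the conditioning sets are the ``first $p$ strongest'' entries of the then-current, larger set); the paper's phrase ``in the last iteration step'' makes the identical idealization, so this is not a gap relative to the paper's own argument.
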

$\widehat{\mathcal{B}}^-_t(X^j_t)$ contains \emph{all} lagged parents of all contemporaneous ancestors if the weaker Adjacency Faithfulness assumption is replaced by standard Faithfulness.

This establishes that the conditions $\widehat{\mathcal{B}}^-_t(X^j_t)$ estimated in the first phase of PCMCI$^+$ will suffice to block all lagged confounding paths that do not go through contemporaneous links.  This enables to prove the soundness of Alg.~\ref{algo:step2}, even though Alg.~\ref{algo:step2} is a variant of the PC algorithm that only iterates through contemporaneous conditioning sets.
\begin{mythm}[Soundness of PCMCI$^+$] \label{thm:soundness}
Algorithm~\ref{algo:step2} returns the correct adjacencies under Assumptions~\ref{assum:asymptotic}, i.e., $\widehat{\mathcal{G}^*}=\mathcal{G}^*$, where the $\mathcal{G}^*$ denotes the skeleton of the time series graph.
\end{mythm}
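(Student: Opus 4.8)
The plan is to reason entirely in the oracle regime of Assumptions~\ref{assum:asymptotic}, where the Causal Markov Condition turns graph separation into conditional independence and Adjacency Faithfulness supplies the converse for adjacent pairs, so that every CI query in Alg.~\ref{algo:step2} returns the graphical truth. Soundness then splits into two claims that I would establish separately: (i) no true adjacency is ever deleted, and (ii) every non-adjacent pair surviving into Alg.~\ref{algo:step2} is eventually separated by some conditioning set that the algorithm actually tests.

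Claim (i) I expect to be immediate. For any tested pair $(X^i_{t-\tau},X^j_t)$ the MCI conditioning set in~\eqref{eq:pcmci} excludes both endpoints: $X^i_{t-\tau}$ is removed from $\widehat{\mathcal{B}}^-_t(X^j_t)$, the subset $\mathcal{S}\subseteq \widehat{\mathcal{A}}_t(X^j_t)\setminus\{X^i_{t-\tau}\}$ excludes it, and $\widehat{\mathcal{B}}^-_{t-\tau}(X^i_{t-\tau})$ lives strictly before time $t-\tau$. Hence if the two are genuinely adjacent, Adjacency Faithfulness forces dependence for every such set and the link is never cut. I will record the corollary that all true contemporaneous parents of any $X^j_t$ stay in $\widehat{\mathcal{A}}_t(X^j_t)$ throughout the run, which I use below.

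For claim (ii) I would first isolate the graphical workhorse: in the acyclic time series DAG, if a set $\mathbf{Z}$ contains all parents of a node $X$, contains no proper descendant of $X$, and excludes $X$, then $X\bowtie Y\,|\,\mathbf{Z}$ for every non-descendant $Y\notin\mathbf{Z}$. The argument is a short separation check, a path leaving $X$ through a parent is blocked at that parent, while a path leaving $X$ forward must reach a first collider before it can touch a non-descendant, and that collider (being a descendant of $X$) stays inactive since $\mathbf{Z}$ holds no descendant of $X$. I would then verify the hypotheses for the MCI set under a judicious choice of $\mathcal{S}$. For a surviving lagged false positive ($\tau>0$), Lemma~\ref{thm:step1b} guarantees $\widehat{\mathcal{B}}^-_t(X^j_t)$ holds all lagged parents of $X^j_t$, while time order makes every element of $\widehat{\mathcal{B}}^-_t(X^j_t)$, $\widehat{\mathcal{B}}^-_{t-\tau}(X^i_{t-\tau})$, and $X^i_{t-\tau}$ itself a non-descendant of $X^j_t$; taking $\mathcal{S}$ equal to the contemporaneous parents of $X^j_t$ (still present by (i)) completes $\mathbf{Z}\supseteq\mathcal{P}(X^j_t)$ using only non-descendants, so the workhorse applies and the edge is removed. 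For a contemporaneous false positive ($\tau=0$), acyclicity lets me orient the pair so that $X^i_t$ is a non-descendant of $X^j_t$; since Alg.~\ref{algo:step2} tests both ordered pairs, the ordering that conditions on $\widehat{\mathcal{A}}_t(X^j_t)$ is examined and the same $\mathcal{S}$ again meets the hypotheses. A routine check that the $p$-iteration actually reaches this $\mathcal{S}$ (the relevant parents remain available, so the required cardinality is attained unless the link is already gone) closes the argument, and combining (i) with (ii) yields $\widehat{\mathcal{G}^*}=\mathcal{G}^*$.

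The main obstacle I anticipate is precisely the step that distinguishes Alg.~\ref{algo:step2} from ordinary PC: because the algorithm never iterates over lagged subsets but instead appends the fixed blocks $\widehat{\mathcal{B}}^-_t(X^j_t)$ and $\widehat{\mathcal{B}}^-_{t-\tau}(X^i_{t-\tau})$, I must ensure these extra, non-minimal conditioning variables do not reopen a collider path between the endpoints. This is exactly where the non-descendant structure furnished by Lemma~\ref{thm:step1b} together with time order is indispensable, it is what forbids any appended variable from activating a collider, and I would take care to confirm that $\mathcal{S}$ can be chosen to introduce no descendant of $X^j_t$ either, so that the whole of $\mathbf{Z}$ remains inside the non-descendant set demanded by the workhorse lemma.
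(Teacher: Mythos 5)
Your proposal is correct and takes essentially the same route as the paper's proof: the same two-part decomposition (Adjacency Faithfulness protects true links since the MCI conditioning sets never contain either endpoint; Lemma~\ref{thm:step1b} plus the choice $\mathcal{S}=\mathcal{P}_t(X^j_t)$, which is reachable because true contemporaneous adjacencies are never deleted, removes false links), together with the same ordered-pair/acyclicity treatment of the $\tau=0$ case. The only difference is technical: you derive the key independence by proving a d-separation lemma (a conditioning set consisting of all parents plus only non-descendants blocks every path to a non-descendant) and then invoking the Markov condition, whereas the paper obtains it directly from the local Causal Markov Condition applied jointly to $X^i_{t-\tau}$ and the extra lagged conditions, followed by the weak-union axiom --- the identical structural fact (every appended condition is a non-descendant of $X^j_t$, so no collider is activated) packaged axiomatically rather than graphically.
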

\noindent

To prove the completeness of PCMCI$^+$, we start with the following observation.
\begin{mylemma} \label{thm:orientation_triples}
Due to time-order and the stationarity assumption, the considered triples in the collider phase (Alg.~\ref{algo:step3_SM}) and rule orientation phase (Alg.~\ref{algo:step4_SM}) can be restricted as follows: In the collider orientation phase only unshielded triples $X^i_{t-\tau}\to X^k_t \oo X^j_t$ (for $\tau>0$) or $X^i_{t}\oo X^k_t \oo X^j_t$ (for $\tau=0$) in $\mathcal{G}$ where $(X^i_{t-\tau}, X^j_t)$ are not adjacent are relevant. For orientation rule R1 triples $X^i_{t-\tau}\to X^k_t \oo X^j_t$  where $(X^i_{t-\tau}, X^j_t)$ are not adjacent, for orientation rule R2 triples $X^i_{t}\to X^k_t \to X^j_t$ with $X^i_{t} \oo X^j_t$, and for orientation rule R3 pairs of  triples $X^i_{t}\oo X^k_t \to X^j_t$ and $X^i_{t}\oo X^l_t \to X^j_t$ where $(X^k_{t}, X^l_t)$ are not adjacent and $X^i_{t} \oo X^j_t$ are relevant. These restrictions imply that only contemporaneous parts of separating sets are relevant for the collider phase.
\end{mylemma}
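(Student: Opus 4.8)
The plan is to exploit two structural facts that are already available. By time-order every lagged link ($\tau>0$) is oriented as $X^i_{t-\tau}\to X^j_t$ (causes precede effects), so the collider and rule phases only ever need to orient \emph{contemporaneous} ($\oo$) links. By stationarity a contemporaneous link $X^i_{t'}\oo X^j_{t'}$ carries the same orientation for every $t'$, so it suffices to reason about links whose endpoints lie in a single reference slice $\mathbf{X}_t$. First I would record these two observations as the backbone of the argument: together they mean that the edge actually being oriented in any relevant triple is contemporaneous and, after a stationarity shift, lies entirely in $\mathbf{X}_t$.

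Next, for the collider phase I would take an arbitrary unshielded triple $A\,\starstar\,B\,\starstar\,C$ with middle node $B$ and show that, unless it has one of the claimed forms, it produces no new orientation. The key step is to locate $B$ in time: a collider $A\to B\leftarrow C$ requires both arrowheads to point into $B$, and since arrows never point backward in time, both $A$ and $C$ lie at times $\le t_B$, so $B$ is the latest node of the triple; fixing $t_B=t$ by stationarity places $B=X^k_t$. I would then argue that at least one incident edge must be an unoriented $\oo$ link (otherwise nothing is left to orient), and a contemporaneous edge at $B$ forces its other endpoint into $\mathbf{X}_t$; taking this to be $C=X^j_t$ leaves the remaining neighbor $A=X^i_{t-\tau}$ with $\tau\ge0$, oriented $X^i_{t-\tau}\to X^k_t$ by time-order when $\tau>0$ and unoriented $X^i_t\oo X^k_t$ when $\tau=0$. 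This yields exactly the two admissible collider shapes, while triples with two lagged edges into $B$ are already oriented by time-order and hence irrelevant.

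The same time-placement argument would then be applied to each orientation rule. For R1 the oriented edge $X^i_{t-\tau}\to X^k_t$ together with the unoriented contemporaneous edge $X^k_t\oo X^j_t$ immediately gives the stated shape. For R2 the edge being oriented is the contemporaneous $X^i_t\oo X^j_t$, so $X^i_t,X^j_t\in\mathbf{X}_t$; the directed chain $X^i_t\to X^k_t\to X^j_t$ then pins the intermediate node to the same slice, because $X^i_t\to X^k_t$ needs $t_{X^k}\ge t$ while $X^k_t\to X^j_t$ needs $t_{X^k}\le t$. For R3 the analogous reasoning on the oriented edge $X^i_t\oo X^j_t$ and on the two directed edges $X^k_t\to X^j_t$, $X^l_t\to X^j_t$ forces $X^k_t,X^l_t\in\mathbf{X}_t$, giving the stated pair-of-triples configuration. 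Finally, since in every relevant collider triple the middle node $X^k_t$ is contemporaneous, the membership test $X^k_t\in\mathrm{sepset}$ can only be affected by contemporaneous members of the separating set; as the sepsets stored in Alg.~\ref{algo:step2} are exactly the contemporaneous subsets $\mathcal{S}\subseteq\widehat{\mathcal{A}}_t(X^j_t)$, only their contemporaneous parts are relevant, which is the concluding claim.

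I expect the main obstacle to be the soundness direction: establishing that the restriction discards \emph{no} triple that could legitimately orient a contemporaneous link. This requires verifying that every intermediate node in the R2 and R3 configurations is genuinely forced into $\mathbf{X}_t$ by the combination of arrow directions and the contemporaneity of the oriented edge, and that colliders with a lagged middle node or a lagged target edge never contribute, so that the case analysis above is exhaustive rather than merely illustrative.
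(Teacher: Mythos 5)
Your proposal follows essentially the same route as the paper's proof: fix the reference time slice by stationarity, let time order orient every lagged edge, and then run a case analysis showing that the middle node and the unoriented edge of each relevant configuration must lie in $\mathbf{X}_t$. Your handling of rules R1, R2, R3 and of the closing claim about contemporaneous separating sets matches the paper's argument almost step for step.

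The one place you genuinely deviate is the exclusion, in the collider phase, of triples whose middle node precedes an endpoint ($t_k < t_j$). You argue these are irrelevant because a collider needs both arrowheads into $X^k_{t_k}$ and arrowheads cannot point backward in time. The paper argues instead that time order orients $X^k_{t_k}\to X^j_t$, so $X^k_{t_k}$ is a parent of $X^j_t$ and a non-collider on the path $X^i_{t_i}\starstar X^k_{t_k}\to X^j_t$, hence it lies in \emph{every} separating set of $(X^i_{t_i},X^j_t)$. This difference is not cosmetic: the collider phase does not test ``can this triple be a collider?''; it tests whether $X^k$ is absent from the stored separating set (or, under the conservative/majority rules, computes the fraction $n_k$ of separating subsets containing $X^k_t$). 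Your argument shows the excluded triples cannot be \emph{correctly} oriented as colliders, but by itself it does not show that the sepset-based test never fires on them --- if it did fire, the algorithm would produce an orientation conflict rather than silently skipping the triple. Closing this requires exactly the paper's observation: in the oracle case the lagged middle node belongs to every separating set (equivalently $n_k=1$), so the test cannot fire and the triple is left untouched. This is a one-line d-separation fact, so your proof is easily repaired, but that line is the substantive content of the collider-phase exclusion and should be made explicit rather than left implicit in the phrase ``never contribute.''
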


\begin{mythm}[PCMCI$^+$ is complete] \label{thm:completeness}
PCMCI$^+$ (Algorithms~\ref{algo:step1},\ref{algo:step2},\ref{algo:step3_SM},\ref{algo:step4_SM}) when used with the conservative rule for orienting colliders in Alg.~\ref{algo:step3_SM} returns the correct CPDAG under Assumptions~\ref{assum:asymptotic}. Under standard Faithfulness also PCMCI$^+$ when used with the majority rule or the standard orientation rule is complete.
\end{mythm}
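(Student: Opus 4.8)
The plan is to establish completeness in the standard two-part form: every orientation the algorithm makes is \emph{sound} (it agrees with the orientation shared by all members of the Markov equivalence class that are consistent with time-order), and the algorithm is \emph{maximally informative}, i.e., every contemporaneous edge whose orientation is invariant across the equivalence class is eventually oriented. Since Theorem~\ref{thm:soundness} already gives the correct skeleton $\widehat{\mathcal{G}^*}=\mathcal{G}^*$, and since by the Verma--Pearl characterization the equivalence class of a DAG is pinned down by its skeleton together with its unshielded colliders, completeness reduces to showing (i) the collider phase orients exactly the correct unshielded colliders and (ii) the rule phase with R1--R3 extends this to the full CPDAG. I would dispose of the lagged edges first: by time-order and stationarity every lagged edge is oriented forward in time, this orientation is present in every graph in the equivalence class, hence it is both sound and already maximal, so no further work is needed for $\tau>0$ edges and only contemporaneous edges can remain as $\oo$.

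Next I would verify collider soundness. The subtlety specific to PCMCI$^+$ is that the separating sets come from the MCI tests~\eqref{eq:pcmci}, whose conditioning includes the lagged sets $\widehat{\mathcal{B}}^-_t$. Using Lemma~\ref{thm:step1b}, these lagged conditions block exactly the lagged confounding paths, so a stored (contemporaneous) set $\mathcal{S}$ certifies a genuine separation $X^i_{t-\tau}\bowtie X^j_t\,|\,\mathcal{S},\widehat{\mathcal{B}}^-_t,\widehat{\mathcal{B}}^-_{t-\tau}$ in $\mathcal{G}$, and by Lemma~\ref{thm:orientation_triples} only the contemporaneous part of such a set is relevant for deciding collider status. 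I would then invoke the argument of \cite{ramsey2006adjacency}: with the \emph{conservative} rule an unshielded triple is oriented as a collider only when the middle node lies in \emph{none} of the sets that render the endpoints independent, which correctly identifies colliders under the weaker Adjacency Faithfulness assumption; under full Faithfulness the single stored sepset (standard rule) or the majority sepset is already reliable, giving the second statement of the theorem.

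For the rule phase I would show soundness and completeness of R1--R3 in the time-series graph. Each rule is sound in the usual sense -- R1 avoids creating an unshielded collider, R2 avoids a directed cycle, R3 avoids a forbidden configuration -- and Lemma~\ref{thm:orientation_triples} guarantees that the contemporaneous restrictions listed there cover every configuration in which a rule can fire, with the already-oriented lagged edges serving as the oriented edge in the R1-type premise $X^i_{t-\tau}\to X^k_t\oo X^j_t$. For completeness I would appeal to the classical Meek completeness theorem that, starting from the correct skeleton and correct unshielded colliders, exhaustive application of R1--R3 yields the CPDAG; the task is to check that the time-series adaptation -- fixed forward lagged edges plus the restricted triple set -- does not forfeit any orientation the unrestricted procedure would make, and that the forward lagged orientations never induce a configuration that would require an additional rule beyond R1--R3. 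Order-independence of the output follows from the PC-stable bookkeeping already built into Algs.~\ref{algo:step2},\ref{algo:step3_SM},\ref{algo:step4_SM}.

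The hard part will be exactly this last verification: proving that restricting colliders and rules to the contemporaneous triples of Lemma~\ref{thm:orientation_triples} loses no orientation relative to running the full orientation procedure on the entire time-series graph. Concretely, I expect the delicate cases to be (a) colliders whose two parents sit at different lags, where one must confirm that the lagged parent's time-order-fixed orientation already encodes the collider so that no separate collider test is required, and (b) chains underlying R2/R3 that could in principle route through lagged nodes, where acyclicity and time-order must be used to rule out any forced orientation beyond those captured contemporaneously. A secondary obstacle is the Adjacency-Faithfulness case: because individual sepsets cannot be trusted there, the completeness argument must be phrased entirely through the conservative rule's quantification over \emph{all} separating sets, and one must check that the handling of ambiguous and conflicting triples never orients an edge whose orientation is not in fact invariant across the equivalence class.
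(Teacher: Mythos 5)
Your proposal takes essentially the same route as the paper's proof: completeness is reduced to (i) the correct skeleton from Theorem~\ref{thm:soundness}, (ii) correctness of the separating-set information produced by the MCI-style tests in the collider phase, with Lemma~\ref{thm:orientation_triples} restricting attention to contemporaneous triples and sepset parts, and (iii) the established correctness of the conservative rule under Adjacency Faithfulness \cite{ramsey2006adjacency} and of the standard/majority rules plus R1--R3 under full Faithfulness. The ``hard part'' you flag---that the contemporaneous restriction of colliders and rules forfeits no orientation---is exactly the content of Lemma~\ref{thm:orientation_triples}, which the paper proves separately and then simply invokes, so your plan closes in the same way the paper's proof does.
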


Also the proof of order-independence follows straightforwardly from the proof in \cite{Colombo2014}. Of course, order independence does not apply to time-order. 
\begin{mythm}[Order independence] \label{thm:order_independence}
Under Assumptions~\ref{assum:asymptotic} PCMCI$^+$ with the conservative or majority rule in Alg.~\ref{algo:step3_SM} is independent of the order of variables $(X^1,\ldots,X^N)$.
\end{mythm}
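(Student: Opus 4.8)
The plan is to mirror the three-stage order-independence argument for PC-stable \cite{Colombo2014}, checking at each stage that the time-series-specific modifications of PCMCI$^+$ — the split into a lagged phase (Alg.~\ref{algo:step1}) and a contemporaneous MCI phase (Alg.~\ref{algo:step2}), the sorting by $I^{\min}$, and the MCI conditioning set in~\eqref{eq:pcmci} — do not reintroduce dependence on the labelling $(X^1,\ldots,X^N)$. Throughout, ``order'' refers only to a permutation of the component labels; time-order is fixed, so any orientation forced by time-order (all lagged links $X^i_{t-\tau}\to X^j_t$ for $\tau>0$) is trivially label-order independent and only the contemporaneous orientations need attention.

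First I would establish order-independence of the skeleton. In Alg.~\ref{algo:step1} each target $X^j_t$ is treated in a separate pass of the outer loop, so the computation of $\widehat{\mathcal{B}}^-_t(X^j_t)$ never references the processing of any other target; moreover the working set $\widehat{\mathcal{B}}^-_t(X^j_t)$ from which the conditioning set $\mathcal{S}$ is drawn is updated only between $p$-iterations (the removal-and-sort step sits after the inner loop over $X^i_{t-\tau}$), which is exactly the ``freeze the adjacencies within each level'' device that makes PC-stable order-independent. Hence within a level no removal can change a later test, and the sort is by the intrinsic quantity $I^{\min}$, which does not depend on component labels. The only residual subtlety is ties in $I^{\min}$, but by Lemma~\ref{thm:step1b} the oracle output is pinned to the graph-theoretic set $\bigcup_{X^i_t\in\{X^j_t\}\cup\mathcal{C}_t(X^j_t)}\mathcal{P}^-_t(X^i_t)$ regardless of how ties are broken, so $\widehat{\mathcal{B}}^-_t(X^j_t)$ is order-independent. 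The same freezing argument applies to Alg.~\ref{algo:step2}, whose adjacency sets $\widehat{\mathcal{A}}(X^j_t)$ are likewise recomputed only between $p$-levels; and since the extra MCI conditions $\widehat{\mathcal{B}}^-_t(X^j_t)$ and $\widehat{\mathcal{B}}^-_{t-\tau}(X^i_{t-\tau})$ in~\eqref{eq:pcmci} are order-independent by the preceding step, the MCI test itself is an order-independent function of the pair and the graph. By Theorem~\ref{thm:soundness} the returned skeleton equals $\mathcal{G}^*$ in the oracle case, which is manifestly label-order independent.

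Next I would handle the orientation phases. Lagged orientations are fixed by time-order and require no argument. For contemporaneous links, Lemma~\ref{thm:orientation_triples} restricts attention to unshielded contemporaneous triples and shows that only the contemporaneous parts of separating sets are relevant. The key point, following \cite{Colombo2014}, is that with the conservative (or majority) rule the collider decision for a triple $X^i_t\oo X^k_t\oo X^j_t$ with $(X^i_t,X^j_t)$ non-adjacent is made by re-examining whether $X^k_t$ lies in all / none / a majority of the separating subsets $\mathcal{S}\subseteq\widehat{\mathcal{A}}$ (evaluated with the order-independent MCI tests), rather than by consulting the single sepset that happened to be stored during the skeleton search. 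This makes every collider decision a function of the order-independent skeleton and MCI tests alone, hence order-independent, and resolves the ambiguous-triple issue. Finally, the Meek rules R1--R3 (Alg.~\ref{algo:step4_SM}) are made order-independent exactly as in PC-stable: one never overwrites an already oriented edge and one marks conflicting orientations explicitly (the $\conflict$ / $\oo$ bookkeeping), so the resulting marked CPDAG does not depend on the order in which triples and rules are processed.

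The main obstacle — and the only place where the argument is more than a transcription of \cite{Colombo2014} — is verifying that the two genuinely new ingredients, the $I^{\min}$ sorting and the MCI conditioning set~\eqref{eq:pcmci} used both in Alg.~\ref{algo:step2} and inside the conservative/majority collider tests, carry no hidden label dependence. Both reduce to the order-independence of $\widehat{\mathcal{B}}^-_t$, which is supplied by Lemma~\ref{thm:step1b}; once this is in place the remaining steps follow the PC-stable proof essentially verbatim, with the understanding that order-independence is claimed up to the explicit marking of conflicting links, as in \cite{Colombo2014}.
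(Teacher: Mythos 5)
Your proposal is correct and takes essentially the same route as the paper's own proof: both reduce the claim to the PC-stable argument of Colombo and Maathuis, observing that Algs.~\ref{algo:step1} and \ref{algo:step2} remove adjacencies only after each loop over conditioning sets of cardinality $p$, and that the collider and rule phases (Algs.~\ref{algo:step3_SM}, \ref{algo:step4_SM}) are made order-independent by marking ambiguous triples and conflicting orientations. Your extra checks — tie-breaking in the $I^{\min}$ sort and the label-independence of the MCI conditioning sets in~\eqref{eq:pcmci} via Lemma~\ref{thm:step1b} — are details the paper leaves implicit, but they do not constitute a different approach.
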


Next, we consider effect size.
The toy example showed that a major problem of PCMCI$^+_0$ (and also PC) is lack of detection power for contemporaneous links. A main factor of statistical detection power is effect size, i.e., the population value of the test statistic considered (e.g., absolute partial correlation). In the following, I will base my argument in an information-theoretic framework and consider the conditional mutual information as a general test statistic, denoted $I$. In Alg.~\ref{algo:step2} PCMCI$^+_0$ will test a contemporaneous dependency $X^i_{t} \oo X^j_t$ first with the test statistic $I(X^i_{t}; X^j_t|\widehat{\mathcal{B}}^-_t(X^j_t))$, and, if that test was positive, secondly with $I(X^i_{t}; X^j_t|\widehat{\mathcal{B}}^-_{t}(X^i_{t})))$. If either of these tests finds (conditional) independence, the adjacency is removed. Therefore, the minimum test statistic value determines the relevant effect size. On the other hand, PCMCI$^+$ treats both cases symmetrically since the test statistic is always $I(X^i_{t}; X^j_t|\widehat{\mathcal{B}}^-_t(X^j_t),\widehat{\mathcal{B}}^-_{t}(X^i_{t}))$. 

\begin{mythm}[Effect size of MCI tests for $p=0$] \label{thm:effect_size}
Under Assumptions~\ref{assum:asymptotic} the PCMCI$^+$ oracle case CI tests in Alg.~\ref{algo:step2} for $p=0$ for contemporaneous true links $X^i_{t}\to X^j_t \in \mathcal{G}$  have an effect size that is always greater than that of the PCMCI$^+_0$ CI tests, i.e., $I(X^i_{t}; X^j_t|\widehat{\mathcal{B}}^-_t(X^j_t),\widehat{\mathcal{B}}^-_{t}(X^i_{t}))> \min(I(X^i_{t}; X^j_t|\widehat{\mathcal{B}}^-_t(X^j_t)),\,I(X^i_{t}; X^j_t|\widehat{\mathcal{B}}^-_{t}(X^i_{t})))$ if both $X^i_{t}$ and $X^j_t$ have parents that are not shared with the other.
\end{mythm}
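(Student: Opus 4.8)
The plan is to reduce the three-way comparison to a single strict inequality between two conditional mutual informations and then settle it with a chain-rule and collider argument. Write $X=X^i_t$, $Y=X^j_t$, and abbreviate the two lagged sets as $\mathcal{B}_X=\widehat{\mathcal{B}}^-_t(X^i_t)$ and $\mathcal{B}_Y=\widehat{\mathcal{B}}^-_t(X^j_t)$. The first observation I would use is that, since $X^i_t\to X^j_t$ is a true contemporaneous link, $X^i_t$ is a contemporaneous ancestor of $X^j_t$, so $\{X^i_t\}\cup\mathcal{C}_t(X^i_t)\subseteq\{X^j_t\}\cup\mathcal{C}_t(X^j_t)$. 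Lemma~\ref{thm:step1b} then gives $\mathcal{B}_X\subseteq\mathcal{B}_Y$. Consequently the PCMCI$^+$ conditioning set collapses, $I(X;Y\,|\,\mathcal{B}_Y,\mathcal{B}_X)=I(X;Y\,|\,\mathcal{B}_Y)$, and since the left-hand side equals one of the two terms inside the minimum, it suffices to prove the strict inequality $I(X;Y\,|\,\mathcal{B}_X)<I(X;Y\,|\,\mathcal{B}_Y)$, which then forces $I(X;Y\,|\,\mathcal{B}_X)$ to be the minimum.

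For this strict inequality I would set $W=\mathcal{B}_Y\setminus\mathcal{B}_X$ and expand the joint information $I(X;(Y,W)\,|\,\mathcal{B}_X)$ in two ways by the chain rule, yielding $I(X;Y\,|\,\mathcal{B}_Y)-I(X;Y\,|\,\mathcal{B}_X)=I(X;W\,|\,\mathcal{B}_X,Y)-I(X;W\,|\,\mathcal{B}_X)$. The key structural claim is that the baseline term vanishes, $I(X;W\,|\,\mathcal{B}_X)=0$: by Lemma~\ref{thm:step1b} the set $\mathcal{B}_X$ contains the lagged parents of $X^i_t$ and of all its contemporaneous ancestors, so conditionally on $\mathcal{B}_X$ the value of $X^i_t$ is a function of $\mathcal{B}_X$ and of the mutually independent time-$t$ noises of $X^i_t$ and its contemporaneous ancestors; every element of $W$ lies strictly in the past and is therefore independent of these noises, giving $X\ci W\,|\,\mathcal{B}_X$ by the Markov condition. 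It then remains to show the collider term is strictly positive, $I(X;W\,|\,\mathcal{B}_X,Y)>0$. Here the hypothesis enters: a parent $C$ of $X^j_t$ not shared with $X^i_t$ lies in $W$, and the path $X^i_t\to X^j_t\leftarrow C$ becomes active once the collider $X^j_t$ enters the conditioning set, while $\mathcal{B}_X$, consisting only of lagged ancestors of $X^i_t$, blocks no such contemporaneous collider. Faithfulness then yields $X\not\ci C\,|\,\mathcal{B}_X,Y$, and since $C\subseteq W$ the chain rule gives $I(X;W\,|\,\mathcal{B}_X,Y)\ge I(X;C\,|\,\mathcal{B}_X,Y)>0$.

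To handle the statement in full symmetry, and to explain why the hypothesis requests a private parent of both variables rather than only of $X^j_t$, I would also record the mirror-image bound $I(X;Y\,|\,\mathcal{B}_X,\mathcal{B}_Y)>I(X;Y\,|\,\mathcal{B}_Y)$ obtained by swapping the roles of the two variables, where the private parent of $X^i_t$ supplies the corresponding collider at $X^i_t$. Under the containment of Lemma~\ref{thm:step1b} one of these two bounds is in fact an equality, so only the private parent of $X^j_t$ is strictly needed; but stating both makes the argument robust to the weaker Adjacency-Faithfulness version of Lemma~\ref{thm:step1b}, where $\mathcal{B}_X$ and $\mathcal{B}_Y$ need not be nested and the minimum may be attained on either side. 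Requiring a private parent of each variable then guarantees a strictly positive collider contribution on whichever side the minimum falls, which is exactly what the symmetric hypothesis delivers.

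I expect the main obstacle to be the two non-routine structural facts rather than the algebra. First, pinning down $I(X;W\,|\,\mathcal{B}_X)=0$ rigorously relies precisely on $\mathcal{B}_X$ containing the lagged parents of \emph{all} contemporaneous ancestors of $X^i_t$ (exactly what Lemma~\ref{thm:step1b} supplies), so that the residual randomness of $X^i_t$ given $\mathcal{B}_X$ is confined to contemporaneous noise and is independent of the strictly past set $W$. Second, upgrading the open-collider path to a genuine strict inequality needs the orientation-faithfulness of the collider-activated dependence and care that no competing path cancels it, which is why I would phrase the positivity step as a consequence of Faithfulness rather than of mere Adjacency Faithfulness. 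A secondary point to verify is that the final step adds the remaining elements of $W$ to the second argument of $I(X;\,\cdot\,|\,\mathcal{B}_X,Y)$, so that the inequality $I(X;W\,|\,\mathcal{B}_X,Y)\ge I(X;C\,|\,\mathcal{B}_X,Y)$ follows from nonnegativity of the conditional information terms and never decreases the quantity below the single-parent contribution.
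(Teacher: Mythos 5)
Your main line is, at its core, the same argument as the paper's: the same chain-rule decomposition, the same Markov-type vanishing term (your $I(X;W\,|\,\mathcal{B}_X)=0$ is exactly the paper's independence statement $\mathcal{B}_j \ci X^i_t \,|\, \mathcal{B}_i,\mathcal{B}_{ij}$, where $\mathcal{B}_j=\widehat{\mathcal{B}}^-_t(X^j_t)\setminus\widehat{\mathcal{B}}^-_t(X^i_t)$, $\mathcal{B}_i=\widehat{\mathcal{B}}^-_t(X^i_t)\setminus\widehat{\mathcal{B}}^-_t(X^j_t)$, $\mathcal{B}_{ij}$ the intersection), and the same collider-opening term $I(X;W\,|\,\mathcal{B}_X,Y)>0$ supplying strictness. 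Your reduction via the nesting $\mathcal{B}_X\subseteq\mathcal{B}_Y$, which collapses the PCMCI$^+$ statistic to $I(X;Y\,|\,\mathcal{B}_Y)$ and leaves a single inequality to prove, is clean and valid, but it requires the standard-Faithfulness (equality) form of Lemma~\ref{thm:step1b}; under the theorem's stated Assumptions~\ref{assum:asymptotic} (Adjacency Faithfulness only) the estimated sets need not be nested, and the paper's proof deliberately works in the regime where both $\mathcal{B}_i$ and $\mathcal{B}_j$ are non-empty.

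The genuine gap is your patch for that non-nested regime. The ``mirror-image bound'' $I(X;Y\,|\,\mathcal{B}_X,\mathcal{B}_Y)>I(X;Y\,|\,\mathcal{B}_Y)$ does \emph{not} follow ``by swapping the roles of the two variables'': the configuration is not symmetric, because the contemporaneous link is directed $X\to Y$. The strict term in your decomposition lives at the collider on the \emph{effect} side ($X\to Y\leftarrow C$ with $C$ a private parent of $X^j_t$); there is no corresponding collider at the cause. In the mirrored decomposition with $W'=\mathcal{B}_X\setminus\mathcal{B}_Y$, the baseline term $I(Y;W'\,|\,\mathcal{B}_Y)$ is generically \emph{positive} (open path $W'\to X\to Y$) rather than zero, while conditioning on $X$ blocks that path in the other term. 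Indeed, combining the paper's two expansions with its final step yields the identity $I(X^i_t;X^j_t\,|\,\mathcal{B}_{ij},\mathcal{B}_j)=I(X^i_t;X^j_t\,|\,\mathcal{B}_{ij},\mathcal{B}_i,\mathcal{B}_j)+I(\mathcal{B}_i;X^j_t\,|\,\mathcal{B}_{ij},\mathcal{B}_j)$, i.e.\ the fully conditioned statistic is $\leq$ the effect-side statistic --- the \emph{reverse} of your mirror bound (with equality exactly in your nested case). The paper's resolution is different: its first double-expansion shows the effect-side statistic always dominates the cause-side one, $I(X^i_t;X^j_t\,|\,\mathcal{B}_{ij},\mathcal{B}_j)\geq I(X^i_t;X^j_t\,|\,\mathcal{B}_{ij},\mathcal{B}_i)$, so the minimum is always attained on the cause side, and then it proves ``full $>$ cause side'' via the collider term --- the step your main line already contains. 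A secondary shared looseness (yours and the paper's): the private parent of $X^j_t$ guaranteed by the hypothesis need not lie in $W$ (it could be contemporaneous, or could also feed a contemporaneous ancestor of $X^i_t$ and hence lie in $\mathcal{B}_X$), so the positivity of the collider term requires a slightly more careful path argument than either write-up gives.
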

I conjecture that this result holds similarly for $p>0$ and also that PCMCI$^+$ has greater effect sizes than the PC algorithm since the latter iterates over \emph{all} subsets of adjacencies and, hence, the minimum is taken generally over an even larger set leading to even smaller effect sizes.
For lagged links the effect size of the PCMCI$^+$ tests is always smaller (or equal) than that of the PCMCI$^+_0$  tests (see \cite{Runge2012b}).

Last, we discuss false positive control.
While the effect size result regards detection power, in the following I give a mathematical intuition why the MCI tests are better calibrated than the PC algorithm CI tests and control false positives below the expected significance level. 
Lemma~\ref{thm:step1b} implies that even though Alg.~\ref{algo:step1} does not aim to estimate the contemporaneous parents, it still yields a set of conditions that shields $X^j_t$ from the `infinite' past $\mathbf{X}^-_t$, either by blocking the parents of $X^j_t$ or by blocking indirect contemporaneous paths through contemporaneous ancestors of $X^j_t$.
Blocking paths from the infinite past, I conjecture, is key to achieve well-calibrated CI tests in Alg.~\ref{algo:step2}. 
The authors in \cite{Runge2018d} showed that under certain model assumptions the MCI tests reduce to CI tests among the noise terms $\eta$ from model~\eqref{eq:causal_model} which are assumed to be i.i.d. and help to achieve well-calibrated CI tests. In the numerical experiments below we can see that the PC algorithm has inflated false positive for high autocorrelation, while PCMCI$^+$ well controls false positives, but a formal proof of correct false positive control for this challenging nonlinear, high-dimensional setting is beyond the scope of this paper.





\section{NUMERICAL EXPERIMENTS} \label{sec:numerics}
We consider a number of typical challenges \cite{Runge2019a}, contemporaneous and time lagged causal dependencies, strong autocorrelation, large number of variables and considered time lags, different noise distributions and nonlinearity, in the following additive variant of model~\eqref{eq:causal_model}:
\begin{align} \label{eq:numericalmodel}
X_t^j &= a_j X^j_{t-1} + \textstyle{\sum_i} c_i f_{i}(X^i_{t-\tau_i}) + \eta^j_t
\end{align}
for $j\in\{1,\ldots,N\}$. Autocorrelations $a_j$ are uniformly drawn from $[\max(0, a-0.3),\, a]$ for  $a$ as indicated in Fig.~\ref{fig:experiments} and $\eta^j$ is \emph{i.i.d.} and follows a zero-mean Gaussian $\mathcal{N}$ or Weibull $\mathcal{W}$ (scale parameter $2$) distribution (depending on setup) with standard deviation drawn from $[0.5,\, 2]$. In addition to autodependency links, for each model $L=\lfloor1.5\cdot N\rfloor$ (except for $N=2$ with $L=1$) cross-links are chosen whose functional dependencies are linear or  $f_i(x)=f^{(2)}(x)=(1+5 x e^{-x^2/20})x$ (depending on setup), with $f^{(2)}$ designed to yield more stationary dynamics. Coefficients $c_i$ are drawn uniformly from $\pm[0.1, 0.5]$. 30\% of the links are contemporaneous ($\tau_i=0$) and the remaining $\tau_i$ are drawn from $[1,\,5]$. Only stationary models are considered. We have an average cross-in-degree of $d=1.5$ for all network sizes (plus an auto-dependency) implying that models become sparser for larger $N$. We consider several model setups: linear Gaussian, linear mixed noise (among the $N$ variables: 50\% Gaussian, 50\% Weibull),  and nonlinear mixed noise (50\% linear, 50\% $f^{(2)}(x)$; 66\% Gaussian, 34\% Weibull).

For the linear model setups we consider the PC algorithm and PCMCI$^+$ in the majority-rule variant with ParCorr and compare these with GCresPC \cite{moneta2011causal}, a combination of GC with PC applied to residuals, and a autoregressive model version of LiNGAM \cite{hyvarinen2010estimation}, a representative of the SCM framework (implementation details in Sect.~\ref{sec:implementation_SM}). For the LiNGAM implementation I could not find a way to set a significance level and used the LASSO option which prunes `non-active' links to zero. Both GCresPC and LiNGAM assume linear dependencies and LiNGAM also non-Gaussianity. For the nonlinear setup the PC algorithm and PCMCI$^+$ are implemented with the GPDC test \cite{Runge2018d} that is based on Gaussian process regression and a distance correlation test on the residuals, which is suitable for a large class of nonlinear dependencies with additive noise.

Performance is evaluated as follows: True (TPR) and false positive rates (FPR, shown to evaluate false positive control, not applicable to LiNGAM) for adjacencies are distinguished between lagged cross-links ($i\neq j$), contemporaneous, and autodependency links. Due to time order, lagged links (and autodependencies) are automatically oriented. Contemporaneous orientation precision is measured as the fraction of correctly oriented links ($\oo$ or $\to$) among all estimated adjacencies, and recall as the fraction of correct orientations among all true contemporaneous links. Further shown is the fraction of conflicting links among all detected contemporaneous adjacencies (not applicable to LiNGAM). All metrics (and their std. errors) are computed across all estimated graphs from $500$ realizations of model~\eqref{eq:numericalmodel} at time series length $T$. The average runtimes were evaluated on Intel Xeon Platinum 8260.
In Fig.~\ref{fig:experiments} results for the linear Gaussian setup with default model parameters $N=5,\,T=500,\,a=0.95$ and method parameters $\tau_{\max}=5$ and $\alpha=0.01$ (not applicable to LiNGAM) are shown. Each of the four panels shows results for varying one of $a,\,N,\,T,\,\tau_{\max}$.
The insets show ANOVA statistics $r\pm \bar{\Delta} r$ [per unit], where $r$ is the performance metric at the leftmost parameter on the $x$-axis ($a,\,N,\,T,\,\tau_{\max}$, respectively) and $\bar{\Delta} r$ denotes the average change per parameter unit. In the adjacency subplots the statistics refer to lagged links.

\begin{figure*}[t]  
\centering
\includegraphics[width=.5\linewidth]{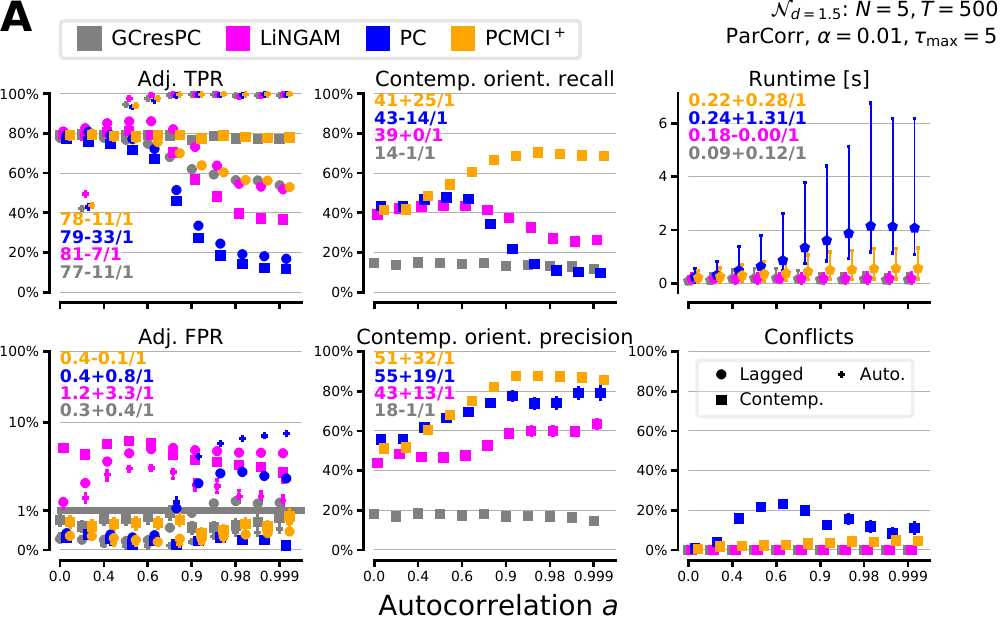}\hspace*{3pt}
\includegraphics[width=.5\linewidth]{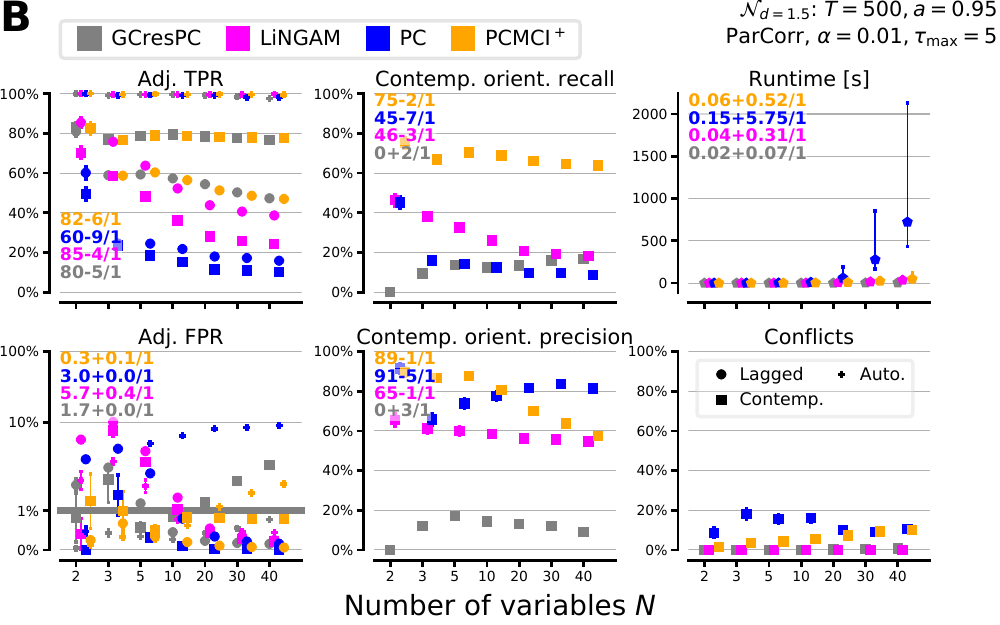}
\includegraphics[width=.5\linewidth]{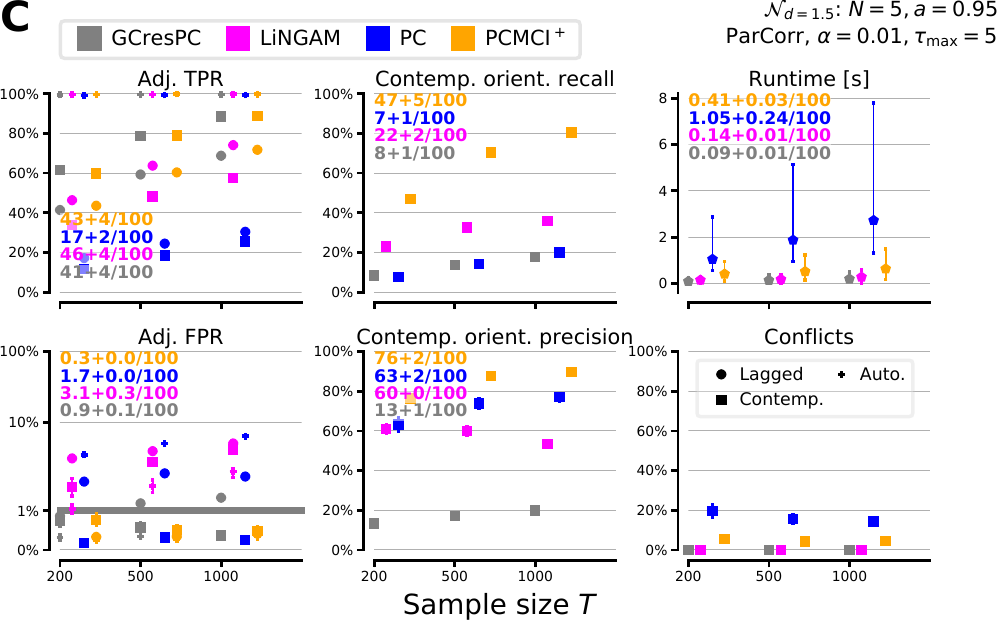}\hspace*{3pt}
\includegraphics[width=.5\linewidth]{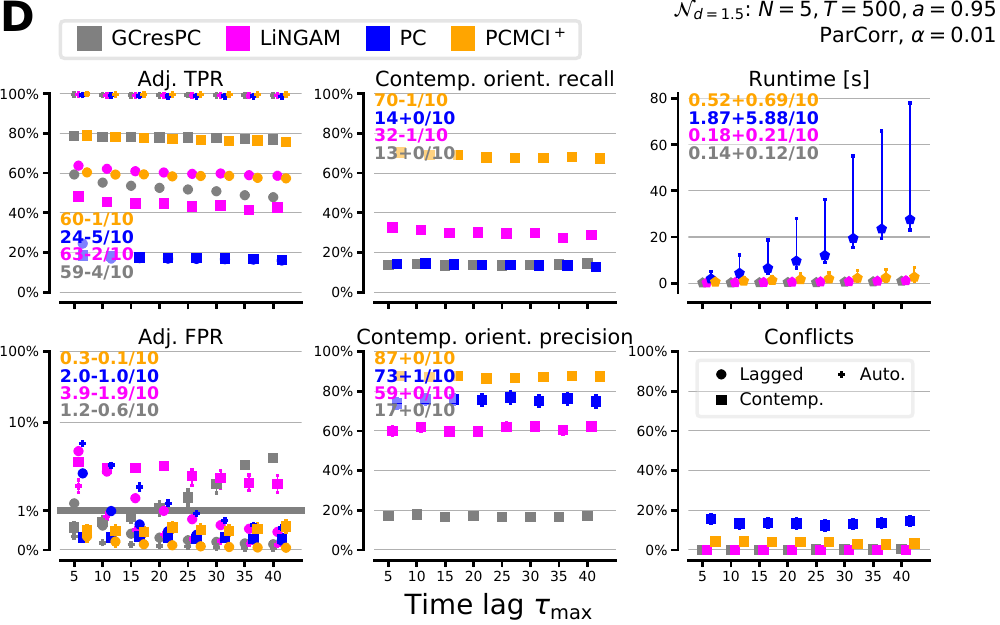}%
\caption{
Numerical experiments with linear Gaussian setup for varying (\textbf{A}) autocorrelation strength $a$ (\textbf{B}) number of variables $N$ (\textbf{C}) sample size $T$ and (\textbf{D}) maximum time lag $\tau_{\max}$. All remaining setup parameters indicated in the top right. Errorbars show std. errors or the 90\% range (for runtime). The insets show ANOVA statistics.}
\label{fig:experiments}
\end{figure*}

Figure~\ref{fig:experiments}A demonstrates that the TPR of PCMCI$^+$ and GCresPC for contemporaneous links is stable even under high autocorrelation while PC and LiNGAM show strong declines. Since LiNGAM has no $\alpha_{\rm PC}$ for FPR-control we focus on its relative changes rather than absolute performance. Lagged TPR decreases strongly for PC while the other methods are more robust.
FPR is well-controlled for PCMCI$^+$ while PC and slightly also GCresPC show inflated lagged FPR for high autocorrelation. LiNGAM features a strong increase of lagged FPR.
These adjacency results translate into higher contemporaneous orientation recall for PCMCI$^+$ which increases with autocorrelation, while it decreases for all other methods. GCresPC has steady low recall since it does not use lagged links in the orientation phase.
Except for GCresPC, all methods have increasing precision with PCMCI$^+$ and PC outperforming LiNGAM.
PCMCI$^+$ shows almost no conflicts while PC's conflicts increase with autocorrelation until low power reduces them again. Finally, runtimes are almost constant for GCresPC and LiNGAM, while they increase for PCMCI$^+$ and much stronger for PC.

Figure~\ref{fig:experiments}B shows that PCMCI$^+$ and GCresPC have the highest TPR for increasing number of variables $N$, especially for contemporaneous links.
FPR is well controlled only for PCMCI$^+$ while PC has false positives for small $N$ where model connectivity is denser and false negatives are more likely leading to false positives. For high $N$ PC has false positives only regarding autodependencies while inflated FPR appears for GCresPC. 
PCMCI$^+$ has more than twice as much contemporaneous recall compared to the other methods and is almost not affected by higher $N$.
Orientation precision is decreasing for all methods (except PC) with a higher decrease for PCMCI$^+$.
Runtime is increasing at a much smaller rate for PCMCI$^+$ compared to PC, which also has a very high runtime variability across the different model realizations. LiNGAM and especially GCresPC are fastest.

PCMCI$^+$, GCresPC, and LiNGAM benefit similarly and PC less so for increasing sample size regarding TPR (Fig.~\ref{fig:experiments}C). FPR is still not controlled for PC for large sample sizes, lagged FPR increases for GCresPC. PCMCI$^+$ shows the highest increases in contemporaneous recall and precision. Runtime increases are moderate compared to PC, conflicts decrease.

Last, Fig.~\ref{fig:experiments}D shows that all methods are relatively robust to large maximum time lags $\tau_{\max}$ (beyond the true max. time lag $5$) for the considered sample size $T=500$. Contemporaneous FPR and runtime increase for PC. 

In the SM further results are shown. For too large $N\tau_{\max}$ (relative to $T$) GCresPC and LiNGAM (despite LASSO-regularization) sharply drop in performance.
For the linear mixed noise setup (Fig.~\ref{fig:linearmixed_SM_overview}) results are almost unchanged for all methods except for LiNGAM for which recall and precision rise, as expected. Recall is then higher than PCMCI$^+$ for low autocorrelation, but still much lower for high autocorrelation and large $N$ or $\tau_{\max}$, at similar precision.

In the nonlinear mixed noise setup (Fig.~\ref{fig:nonlinearmixed_SM_overview}), the difference between PC and PCMCI$^+$ is similar. We observe slight FPR inflation for high autocorrelation. GPDC seems to not work well in high-dimensional, highly autocorrelated settings. Runtime for GPDC compared to ParCorr is orders of magnitude longer, especially for PC.
Further figures in the SM show many combinations of $a,\,N,\,T,\,\tau_{\max}$ and $\alpha_{\rm PC}$ for the model setups and demonstrate that the above findings are robust.

\section{CONCLUSIONS}
PCMCI$^+$ improves the reliability of CI tests by optimizing the choice of conditioning sets and yields much higher recall, well-controlled false positives, and faster runtime than the original PC algorithm for highly autocorrelated time series, while maintaining similar performance for low autocorrelation. 
The algorithm well exploits sparsity in high-dimensional settings and can flexibly be combined with different CI tests for nonlinear causal discovery, and for different variable types (discrete or continuous, univariate or multivariate).
Autocorrelation is actually key to increase contemporaneous orientation recall since it creates triples $X^i_{t-1}\to X^i_t \oo X^j_t$ that can often be oriented while an isolated link $X^i_t \oo X^j_t$  stays undirected in the Markov equivalence class, a drawback of CI-based methods. If the data is at least non-Gaussian, a SCM method like LiNGAM can exploit this property and recover directionality in such cases. Still, we saw that LiNGAM suffers from large autocorrelation.
PCMCI$^+$ is available as part of the \emph{tigramite} Python package at \url{https://github.com/jakobrunge/tigramite}. A next step will be to extend the present ideas to an algorithm accounting for latent confounders and to explore combinations between SCM-based methods and PCMCI$^+$. The numerical results will be contributed to the causality benchmark platform \verb|www.causeme.net| \cite{Runge2019a} to facilitate a further expanded method evaluation.


\subsubsection*{Acknowledgments}
DKRZ provided computational resources (grant no. 1083). I thank Andreas Gerhardus for helpful comments.

\clearpage
\renewcommand{\refname}{\selectfont\normalsize References} 





\clearpage
\appendix

\newtheorem{mythmm}{Theorem}

\renewcommand\theequation{S\arabic{equation}}
\setcounter{equation}{0}
\renewcommand\thefigure{S\arabic{figure}}    
\setcounter{figure}{0}   
\renewcommand\thesection{S\arabic{section}}    
\setcounter{section}{0} 
\renewcommand\thealgorithm{S\arabic{algorithm}}   
\setcounter{table}{0} 
\renewcommand\thetable{S\arabic{table}}    
\setcounter{algorithm}{1} 

\renewcommand\themyprop{S\arabic{myprop}}    
\renewcommand\themydef{S\arabic{mydef}}    
\setcounter{mydef}{0}
\renewcommand\themylemma{S\arabic{mylemma}}    
\setcounter{mylemma}{0}
\renewcommand\themycor{S\arabic{mycor}}    
\setcounter{mycor}{0}
\renewcommand\themythmm{S\arabic{mythmm}}    
\setcounter{mythmm}{0}

\newtheorem{innercustomgeneric}{\customgenericname}
\providecommand{\customgenericname}{}
\newcommand{\newcustomtheorem}[2]{%
  \newenvironment{#1}[1]
  {%
   \renewcommand\customgenericname{#2}%
   \renewcommand\theinnercustomgeneric{##1}%
   \innercustomgeneric
  }
  {\endinnercustomgeneric}
}

\newcustomtheorem{customlemma}{Lemma}
\newcustomtheorem{customtheorem}{Theorem}

\theoremstyle{definition}

\section{Definitions} \label{sec:definitions}
The following definitions are adaptations of the standard assumptions of causal discovery to the time series case. Here we consider the causally sufficient case and assume that all variables $\mathbf{X}=(X^1,\ldots,X^N)$ of the underlying SCM~\eqref{eq:causal_model} are observed. Additionally, we assume that the maximum PCMCI$^+$ time lag $\tau_{\max}\geq \tau^{\mathcal{P}}_{\max}$, where $ \tau^{\mathcal{P}}_{\max}$ is the maximum time lag of any parent in the SCM~\eqref{eq:causal_model}.


\begin{mydef}[Causal Markov Condition] \label{defs:markov}
The joint distribution of a process $\mathbf{X}$ whose causal structure can be represented in a time series graph $\mathcal{G}$ fulfills the Causal Markov Condition iff for all $X^j_t\in \mathbf{X}_t$ every non-descendent of $X^j_t$ in $\mathcal{G}$ is independent of $X^j_t$ given the parents $\mathcal{P}(X^j_t)$. In particular, $\mathbf{X}_{t}^- {\setminus} \mathcal{P}(X^j_t) \ci X^j_t ~|~ \mathcal{P}(X^j_t)$ since all variables in $\mathbf{X}_{t}^-$ are non-descendants of $X^j_t$ by time-order.
\end{mydef}
Note that for the SCM~\eqref{eq:causal_model} with independent noise terms the Causal Markov Condition is automatically fulfilled.

\begin{mydef}[Adjacency and standard faithfulness Condition] \label{defs:faithfulness}
The joint distribution of a process $\mathbf{X}$ whose causal structure can be represented in a time series graph $\mathcal{G}$ fulfills the Adjacency Faithfulness Condition iff for all disjoint $X^i_{t-\tau}, X^j_t, \mathcal{S} \in \mathbf{X}^-_{t+1}$ with $\tau> 0$ 
\begin{align*}
    X^i_{t-\tau} &\ci X^j_t~|~\mathcal{S} ~\Rightarrow~ X^i_{t-\tau} \to X^j_t \notin \mathcal{G} \\
    X^i_{t-\tau} &\to X^j_t \in \mathcal{G}  ~\Rightarrow~  X^i_{t-\tau} \cancel{\ci} X^j_t~|~\mathcal{S}~~\text{(contrapositive)}
\end{align*} 
and with $\tau=0$ 
\begin{align*}
    X^i_{t} &\ci X^j_t~|~\mathcal{S} ~\Rightarrow~ X^i_{t} \oo X^j_t \notin \mathcal{G} \\
    X^i_{t} &\oo X^j_t \in \mathcal{G}  ~\Rightarrow~  X^i_{t} \cancel{\ci} X^j_t~|~\mathcal{S}~~\text{(contrapositive)}\,.
\end{align*} 
Furthermore, the variables fulfill the (standard) Faithfulness Condition iff for $\tau \geq 0$
\begin{align*}
    X^i_{t-\tau} &\ci X^j_t~|~\mathcal{S} ~\Rightarrow~ X^i_{t-\tau} \bowtie X^j_t ~|~\mathcal{S} \\
    X^i_{t-\tau} &\cancel{\bowtie} X^j_t ~|~\mathcal{S}  ~\Rightarrow~  X^i_{t-\tau} \cancel{\ci} X^j_t~|~\mathcal{S}~~\text{(contrapositive)}\,.
\end{align*} 
\end{mydef}

\section{Proofs} \label{sec:proofs}

\subsection{Proof of Lemma~\ref{thm:step1b}}
We first consider the following Lemma:
\begin{mylemma} \label{thm:step1}
Algorithm~\ref{algo:step1} returns a superset of lagged parents under Assumptions~\ref{assum:asymptotic}, i.e., $\mathcal{P}^-_t(X^j_t)  \subseteq \widehat{\mathcal{B}}^-_t(X^j_t)$ for all $X^j_{t}$ in $\mathbf{X}_t$.
\end{mylemma}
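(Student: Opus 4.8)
The plan is to show that no true lagged parent is ever marked for removal in Alg.~\ref{algo:step1}, so that all of them survive in the returned set. Two ingredients make this essentially immediate: the initialization places every lagged parent in the candidate set, and Adjacency Faithfulness guarantees that a parent can never test (conditionally) independent of its child.

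First I would check the initialization. By Assumptions~\ref{assum:asymptotic} we have $\tau_{\max}\geq \tau^{\mathcal{P}}_{\max}$, so every lagged parent $X^i_{t-\tau}\in\mathcal{P}^-_t(X^j_t)$ (with $0<\tau\leq\tau^{\mathcal{P}}_{\max}\leq\tau_{\max}$) lies in $\mathbf{X}^-_t=(\mathbf{X}_{t-1},\dots,\mathbf{X}_{t-\tau_{\max}})$, which is precisely the set that $\widehat{\mathcal{B}}^-_t(X^j_t)$ is initialized to. Hence at the start of the loop all lagged parents are present.

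Next comes the core observation: a variable is only ever removed from $\widehat{\mathcal{B}}^-_t(X^j_t)$ when some CI test $X^i_{t-\tau}\ci X^j_t\,|\,\mathcal{S}$ returns a $p$-value above $\alpha_{\rm PC}$, i.e., when the oracle reports (conditional) independence. Every conditioning set used in Alg.~\ref{algo:step1} is of the form $\mathcal{S}\subseteq\widehat{\mathcal{B}}^-_t(X^j_t)\setminus\{X^i_{t-\tau}\}\subseteq\mathbf{X}^-_t\setminus\{X^i_{t-\tau}\}$; since $X^j_t\notin\mathbf{X}^-_t$, the triple $(X^i_{t-\tau},X^j_t,\mathcal{S})$ is disjoint, so the Adjacency Faithfulness Condition (Def.~\ref{defs:faithfulness}, case $\tau>0$) applies. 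Its contrapositive states that $X^i_{t-\tau}\to X^j_t\in\mathcal{G}$ implies $X^i_{t-\tau}\cancel{\ci}X^j_t\,|\,\mathcal{S}$ for every such $\mathcal{S}$. With consistent (oracle) CI tests this dependence is always detected, so the test never yields a $p$-value above $\alpha_{\rm PC}$, and a true lagged parent is never marked for removal.

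Combining the two points finishes the argument: each lagged parent starts in $\widehat{\mathcal{B}}^-_t(X^j_t)$ and, regardless of which $\mathcal{S}$ is chosen as the strongest-$p$ conditioning set in any $p$-iteration (and regardless of whether the parent is even selected for testing, since the set only shrinks by removal), is never removed, so it remains in the set at termination. Thus $\mathcal{P}^-_t(X^j_t)\subseteq\widehat{\mathcal{B}}^-_t(X^j_t)$. There is essentially no hard step here; the only things to verify carefully are the bookkeeping facts that the conditioning sets are always disjoint from $\{X^i_{t-\tau},X^j_t\}$ (so that Adjacency Faithfulness is applicable) and that the $\tau_{\max}$ assumption guarantees all parents are initially present. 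The genuine difficulty lies not in this superset claim but in the sharper characterization of \emph{which} non-parents survive, namely the stronger statement of Lemma~\ref{thm:step1b} that builds on this result.
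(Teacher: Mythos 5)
Your proof is correct and follows essentially the same route as the paper: the paper's proof is just the contrapositive phrasing (a variable removed from $\widehat{\mathcal{B}}^-_t(X^j_t)$ was found conditionally independent, hence by Adjacency Faithfulness it is not adjacent and so not a parent), while you argue directly that a true parent can never be removed; the added bookkeeping about initialization and disjointness is sound and merely makes explicit what the paper leaves implicit.
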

\begin{proof}  
We need to show that for arbitrary $(X^i_{t-\tau}, X^j_t)$ with $\tau>0$ we have $X^i_{t-\tau} \notin \widehat{\mathcal{B}}^-_t(X^j_t)~\Rightarrow~X^i_{t-\tau} \notin \mathcal{P}^-_t(X^j_t)$. 
Algorithm~\ref{algo:step1} removes $X^i_{t-\tau}$ from $\widehat{\mathcal{B}}^-_t(X^j_t)$ iff $X^i_{t-\tau} \ci X^j_t~|~\mathcal{S}$ for some $\mathcal{S} \subseteq \widehat{\mathcal{B}}^-_t(X^j_t){\setminus}\{X^i_{t{-}\tau}\}$ in the iterative CI tests. Then Adjacency Faithfulness directly implies that $X^i_{t-\tau}$ is not adjacent to $X^j_t$ and in particular $X^i_{t-\tau} \notin \mathcal{P}^-_t(X^j_t)$.
\end{proof}
With this step we can prove Lemma~\ref{thm:step1b}.

\begin{proof} 
The lemma states that under Assumptions~\ref{assum:asymptotic} with Adjacency Faithfulness replaced by standard Faithfulness Alg.~\ref{algo:step1} for all $X^j_t\in \mathbf{X}_t$ returns $\widehat{\mathcal{B}}^-_t(X^j_t)=\bigcup_{X^i_t\in \{X^j_t\} \cup \mathcal{C}_t(X^j_t)}  \mathcal{P}^-_t(X^i_t)$ where $\mathcal{C}_t(X^j_t)$ denotes the contemporaneous ancestors of $X^j_t$.
We need to show that for arbitrary $X^i_{t-\tau}, X^j_t \in \mathbf{X}^-_{t+1}$ with $\tau>0$:  (1)~$X^i_{t-\tau} \notin \widehat{\mathcal{B}}^-_t(X^j_t)~\Rightarrow~X^i_{t-\tau} \notin \bigcup_{X^i_t\in \{X^j_t\} \cup \mathcal{C}_t(X^j_t)}  \mathcal{P}^-_t(X^i_t)$ and (2)~$X^i_{t-\tau} \in \widehat{\mathcal{B}}^-_t(X^j_t)~\Rightarrow~X^i_{t-\tau} \in \bigcup_{X^i_t\in \{X^j_t\} \cup \mathcal{C}_t(X^j_t)}  \mathcal{P}^-_t(X^i_t)$.

Ad 1) Algorithm~\ref{algo:step1} removes $X^i_{t-\tau}$ from $\widehat{\mathcal{B}}^-_t(X^j_t)$ iff $X^i_{t-\tau} \ci X^j_t~|~\mathcal{S}$ for some $\mathcal{S} \subseteq \widehat{\mathcal{B}}^-_t(X^j_t){\setminus}\{X^i_{t-\tau}\}$ in the iterative CI tests. Then standard Faithfulness  implies that $X^i_{t-\tau} \bowtie X^j_t~|~\mathcal{S}$ and in particular $X^i_{t-\tau} \notin \mathcal{P}^-_t(X^j_t)$, as proven already in Lemma~\ref{thm:step1} under the weaker Adjacency Faithfulness Condition. To show that $X^i_{t-\tau} \notin \bigcup_{X^i_t\in \{X^j_t\} \cup \mathcal{C}_t(X^j_t)}  \mathcal{P}^-_t(X^i_t)$ we note that  $\mathcal{S} \subseteq \widehat{\mathcal{B}}^-_t(X^j_t){\setminus}\{X^i_{t-\tau}\}$ does not include any contemporaneous conditions and, hence, all contemporaneous directed paths from contemporaneous ancestors of $X^j_t$ are open and also paths from parents of those ancestors are open. If $X^i_{t-\tau} \in \bigcup_{X^i_t\in \mathcal{C}_t(X^j_t)}  \mathcal{P}^-_t(X^i_t)$, by the contraposition of standard Faithfulness we should observe $X^i_{t-\tau} \cancel{\ci} X^j_t~|~\mathcal{S}$. Then the fact that on the contrary we observe $X^i_{t-\tau} \ci X^j_t~|~\mathcal{S}$ implies that $X^i_{t-\tau} \notin \bigcup_{X^i_t\in  \mathcal{C}_t(X^j_t)}  \mathcal{P}^-_t(X^i_t)$.

Ad 2) Now we have $X^i_{t-\tau} \in \widehat{\mathcal{B}}^-_t(X^j_t)$ which implies that $X^i_{t-\tau} \cancel{\ci} X^j_t~|~\widehat{\mathcal{B}}^-_t(X^j_t){\setminus}\{X^i_{t-\tau}\}$ in the last iteration step of Alg.~\ref{algo:step1}. 
By (1), $\widehat{\mathcal{B}}^-_t(X^j_t)$ is a superset of $\bigcup_{X^i_t\in \{X^j_t\} \cup \mathcal{C}_t(X^j_t)}  \mathcal{P}^-_t(X^i_t)$. Define the lagged extra conditions as $W^-_t=\widehat{\mathcal{B}}^-_t(X^j_t)\setminus \{\bigcup_{X^i_t\in \{X^j_t\} \cup \mathcal{C}_t(X^j_t)}  \mathcal{P}^-_t(X^i_t), X^i_{t-\tau}\}$. Since $W^-_t$ is lagged, it is a non-descendant of $X^j_t$ or any $X^k_t \in \mathcal{C}_t(X^j_t)$. We now proceed by a proof by contradiction.
Suppose to the contrary that $X^i_{t-\tau} \notin \bigcup_{X^i_t\in \{X^j_t\} \cup \mathcal{C}_t(X^j_t)}  \mathcal{P}^-_t(X^i_t)$. The Causal Markov Condition applies to both $X^i_{t-\tau}$ and $W^-_t$ and implies that $(X^i_{t-\tau}, W^-_t) \ci X^j_t~|~\bigcup_{X^i_t\in \{X^j_t\} \cup \mathcal{C}_t(X^j_t)}  \mathcal{P}^-_t(X^i_t)$. From the weak union property of conditional independence we get $X^i_{t-\tau} \ci X^j_t~|~\bigcup_{X^i_t\in \{X^j_t\} \cup \mathcal{C}_t(X^j_t)}  \mathcal{P}^-_t(X^i_t),W^-_t$ which is equivalent to $X^i_{t-\tau} \ci X^j_t~|~\widehat{\mathcal{B}}^-_t(X^j_t)\setminus \{X^i_{t-\tau}\}$, contrary to the assumption, hence $X^i_{t-\tau} \in \bigcup_{X^i_t\in \{X^j_t\} \cup \mathcal{C}_t(X^j_t)}  \mathcal{P}^-_t(X^i_t)$.
\end{proof}

\subsection{Proof of Theorem~\ref{thm:soundness}}
\begin{proof}  
The theorem states that under Assumptions~\ref{assum:asymptotic} $\widehat{\mathcal{G}^*}=\mathcal{G}^*$, where the $\mathcal{G}^*$ denotes the skeleton of the time series graph. We denote the two types of skeleton links $\to$ and $\oo$ here generically as $\starstar$ and can assume $\tau_{\max}\geq \tau \geq 0$. We need to show that for arbitrary $X^i_{t-\tau}, X^j_t \in \mathbf{X}^-_{t+1}$: (1)~$X^i_{t-\tau}\starstar X^j_t \notin \widehat{\mathcal{G}^*} ~\Rightarrow~ X^i_{t-\tau}\starstar X^j_t \notin \mathcal{G}^*$ and (2)~$X^i_{t-\tau}\starstar X^j_t \notin \mathcal{G}^* ~\Rightarrow~ X^i_{t-\tau}\starstar X^j_t \notin \widehat{\mathcal{G}^*}$. 

Ad (1): Algorithm~\ref{algo:step2} deletes a link $X^i_{t-\tau}\starstar X^j_t$ from $\widehat{\mathcal{G}^*}$ iff $X^i_{t-\tau} \ci X^j_t~|~\mathcal{S},\widehat{\mathcal{B}}^-_t(X^j_t){\setminus}\{X^i_{t{-}\tau}\},\widehat{\mathcal{B}}^-_{t{-}\tau}(X^i_{t{-}\tau})$ for some $\mathcal{S} \subseteq \widehat{\mathcal{A}}_t(X^j_t)$ in the iterative CI tests with $\widehat{\mathcal{B}}^-_t(X^j_t)$ estimated in Alg.~\ref{algo:step1}. $\widehat{\mathcal{A}}_t(X^j_t)$ denotes the contemporaneous adjacencies. Then Adjacency Faithfulness directly implies that $X^i_{t-\tau}$ is not adjacent to $X^j_t$: $X^i_{t-\tau} \starstar X^j_t \notin \mathcal{G}^*$.

Ad (2): By Lemma~\ref{thm:step1b} we know that $\widehat{\mathcal{B}}^-_t(X^j_t)$ is a superset of the lagged parents of $X^j_t$. Denote the lagged, extra conditions occurring in the CI tests of Alg.~\ref{algo:step2} as $W^-_t=(\widehat{\mathcal{B}}^-_t(X^j_t)\setminus \{X^i_{t-\tau}\},\widehat{\mathcal{B}}^-_{t-\tau}(X^i_{t-\tau}))\setminus \mathcal{P}(X^j_t)$. $W^-_t$ does not contain parents of $X^j_t$ and by the assumption also $X^i_{t-\tau}$ is not a parent of $X^j_t$. We further assume that for $\tau=0$ $X^i_{t}$ is also not a descendant of $X^j_t$ since that case is covered if we exchange $X^i_t$ and $X^j_t$. Then the Causal Markov Condition implies $(X^i_{t-\tau},W^-_t) \ci X^j_t~|~\mathcal{P}(X^j_t)$. By the weak union property of conditional independence this leads to $X^i_{t-\tau} \ci X^j_t~|~\mathcal{P}(X^j_t), W^-_t$ which is equivalent to $X^i_{t-\tau} \ci X^j_t~|~\mathcal{P}(X^j_t), \widehat{\mathcal{B}}^-_t(X^j_t)\setminus \{X^i_{t-\tau}\},\widehat{\mathcal{B}}^-_{t-\tau}(X^i_{t-\tau})$. Now Alg.~\ref{algo:step2} iteratively tests $X^i_{t-\tau} \ci X^j_t~|~\mathcal{S}, \widehat{\mathcal{B}}^-_t(X^j_t)\setminus \{X^i_{t-\tau}\},\widehat{\mathcal{B}}^-_{t-\tau}(X^i_{t-\tau})$ for all $\mathcal{S}\subseteq \widehat{\mathcal{A}_t}(X^j_t)$. By the first part of this proof, the estimated contemporaneous adjacencies are always a superset of the true contemporaneous adjacencies, i.e.,  $\mathcal{A}_t(X^j_t)\subseteq \widehat{\mathcal{A}_t}(X^j_t)$, and by Lemma~\ref{thm:step1b} $\widehat{\mathcal{B}}^-_t(X^j_t)$ is a superset of the lagged parents. Hence, at some iteration step $\mathcal{S}=\mathcal{P}_t(X^j_t)$ and Alg.~\ref{algo:step2} will find $X^i_{t-\tau} \ci X^j_t~|~\mathcal{P}(X^j_t), \widehat{\mathcal{B}}^-_t(X^j_t)\setminus \{X^i_{t-\tau}\},\widehat{\mathcal{B}}^-_{t-\tau}(X^i_{t-\tau})$ and remove $X^i_{t-\tau}\starstar X^j_t$ from $\widehat{\mathcal{G}^*}$.
\end{proof}
For empty conditioning sets $\mathcal{S}$ ($p=0$), Alg.~\ref{algo:step2} is equivalent to the MCI algorithm \cite{Runge2018d} with the slight change that the latter is initialized with a fully connected (lagged) graph, which has no effect asymptotically. In \cite{Runge2018d} the authors prove the consistency of PCMCI assuming no contemporaneous causal links under the standard Faithfulness Condition. The proof above implies that PCMCI is already consistent under the weaker Adjacency Faithfulness Condition.

\subsection{Proof of Lemma~\ref{thm:orientation_triples}}
\begin{proof}  
Time order and stationarity can be used to constrain the four cases as follows. Let us first consider a generic triple $X^i_{t_i} \starstar X^k_{t_k} \starstar X^j_{t_j}$. By stationarity we can fix $t=t_j$. We only need to consider cases with $t_i,t_k \leq t$. If $t_k > t_j$, the triple is oriented already by time order and the case  $t_i > t_j$ is symmetric.

The possible triples in the collider phase of the original PC algorithm are $X^i_{t_i}\starstar X^k_{t_k} \starstar X^j_t$ where $(X^i_{t_i}, X^j_t)$ are not adjacent.  
For $t_k<t$ the time-order constraint automatically orients $X^k_{t_k} \to X^j_{t}$ and hence $X^k_{t_k}$ is a parent of $X^j_t$ and must always be in the separating set that makes $X^i_{t_i}$ and $X^j_t$ independent. Hence we only need to consider $t_k=t$ and can set $\tau=t-t_i$ ($\tau_{\max}\geq \tau \geq 0$), leaving the two cases of unshielded triples $X^i_{t-\tau}\to X^k_t \oo X^j_t$ (for $\tau>0$) or $X^i_{t}\oo X^k_t \oo X^j_t$ (for $\tau=0$) in $\mathcal{G}$ where $(X^i_{t-\tau}, X^j_t)$ are not adjacent. Since $X^k_t$ is contemporaneous to $X^j_t$, this restriction implies that only contemporaneous parts of separating sets are relevant for the collider orientation phase.

For rule R1 in the orientation phase the original PC algorithm considers the remaining triples with $X^i_{t-\tau}\to X^k_t$ that were not oriented by the collider phase (or by time order). This leaves $X^i_{t-\tau}\to X^k_t \oo X^j_t$ where $\tau_{\max}\geq \tau \geq 0$.

For rule R2 the original PC algorithm considers $X^i_{t_i}\to X^k_{t_k} \to X^j_t$ with $X^i_{t_i} \oo X^j_t$. The latter type of link leads to $t_i=t$ and time order restricts the triples to $X^i_{t}\to X^k_t \to X^j_t$ with $X^i_{t} \oo X^j_t$.

For rule R3 the original PC algorithm considers $X^i_{t_i}\oo X^k_{t_k} \to X^j_t$ and $X^i_{t_i}\oo X^l_{t_l} \to X^j_t$ where $(X^k_{t_k}, X^l_{t_l})$ are not adjacent and $X^i_{t_i} \oo X^j_t$. The latter constraint leads to $t_i=t$ and $X^i_{t_i}\oo X^k_{t_k}$ and $X^i_{t_i}\oo X^k_{t_l}$ imply $t_k=t_l=t$. Hence we only need to check triples $X^i_{t}\oo X^k_t \to X^j_t$ and $X^i_{t}\oo X^l_t \to X^j_t$ where $(X^k_{t}, X^l_t)$ are not adjacent and $X^i_{t} \oo X^j_t$.
\end{proof}

\subsection{Proof of Theorem~\ref{thm:completeness}}
\begin{proof}  
We first consider the case under Assumptions~\ref{assum:asymptotic} with Adjacency Faithfulness and PCMCI$^+$ in conjunction with the conservative collider orientation rule in Alg.~\ref{algo:step3_SM}. We need to show that all separating sets estimated in Alg.~\ref{algo:step3_SM} during the conservative orientation rule are correct. From the soundness (Theorem~\ref{thm:soundness}) and correctness of the separating sets follows the correctness of the collider orientation phase and the rule orientation phase which implies the completeness.

By Lemma~\ref{thm:orientation_triples} we only need to prove that in Alg.~\ref{algo:step3_SM} for unshielded triples $X^i_{t-\tau}\to X^k_t \oo X^j_t$ (for $\tau>0$) or $X^i_{t}\oo X^k_t \oo X^j_t$ (for $\tau=0$) the separating sets among subsets of contemporaneous neighbors of $X^j_t$ and, if $\tau=0$, of $X^i_t$, are correct. Algorithm~\ref{algo:step3_SM} tests $X^i_{t-\tau}\ci X^j_t~|~\mathcal{S},\widehat{\mathcal{B}}^-_t(X^j_t){\setminus}\{X^i_{t{-}\tau}\},\widehat{\mathcal{B}}^-_{t{-}\tau}(X^i_{t{-}\tau})$ for all $\mathcal{S}{\subseteq}\widehat{\mathcal{A}}_t(X^j_t){\setminus}\{X^i_{t-\tau}\}$ and for all $\mathcal{S}{\subseteq}\widehat{\mathcal{A}}_t(X^i_t){\setminus}\{X^j_{t}\}$ (if $\tau=0$). Since PCMCI$^+$ is sound, all adjacency information is correct and since all CI tests are assumed correct, all information on  separating sets is correct. Furthermore, with the conservative rule those triples where only Adjacency Faithfulness, but not standard Faithfulness, holds will be correctly marked as ambiguous triples. 

Under standard Faithfulness the completeness requires to prove that PCMCI$^+$ without the conservative orientation rule yields correct separating set information. By Lemma~\ref{thm:orientation_triples} also here we need to consider only separating sets among subsets of contemporaneous neighbors of $X^j_t$. Algorithm~\ref{algo:step2} tests $X^i_{t-\tau}\ci X^j_t~|~\mathcal{S},\widehat{\mathcal{B}}^-_t(X^j_t){\setminus}\{X^i_{t{-}\tau}\},\widehat{\mathcal{B}}^-_{t{-}\tau}(X^i_{t{-}\tau})$ for all $\mathcal{S}{\subseteq}\widehat{\mathcal{A}}_t(X^j_t){\setminus}\{X^i_{t-\tau}\}$. And again, since PCMCI$^+$ is sound, all adjacency information is correct and since all CI tests are assumed correct, all information on separating sets is correct, from which the completeness for this case follows.
\end{proof}

\subsection{Proof of Theorem~\ref{thm:order_independence}}
\begin{proof} 
Order-independence follows straightforwardly from sticking to the PC algorithm version in \cite{Colombo2014}. In particular, Alg.~\ref{algo:step1} and Alg.~\ref{algo:step2} are order-independent since they are based on PC stable where adjacencies are removed only after each loop over conditions of cardinality $p$. Furthermore, the collider phase (Alg.~\ref{algo:step3_SM}) and rule orientation phase (Alg.~\ref{algo:step4_SM}) are order-independent by marking triples with inconsistent separating sets as ambiguous and consistently marking conflicting link orientations by  $\conflict$.
\end{proof}

\subsection{Proof of Theorem~\ref{thm:effect_size}}
\begin{proof} 
The theorem states that under Assumptions~\ref{assum:asymptotic} the effect size for the PCMCI$^+$ oracle case CI tests in Alg.~\ref{algo:step2} for $p=0$ for contemporaneous true links $X^i_{t}\to X^j_t \in \mathcal{G}$  is greater than that of PCMCI$^+_0$: $I(X^i_{t}; X^j_t|\widehat{\mathcal{B}}^-_t(X^j_t),\widehat{\mathcal{B}}^-_{t}(X^i_{t}))> \min(I(X^i_{t}; X^j_t|\widehat{\mathcal{B}}^-_t(X^j_t)),\,I(X^i_{t}; X^j_t|\widehat{\mathcal{B}}^-_{t}(X^i_{t})))$ if both $X^i_{t}$ and $X^j_t$ have parents that are not shared with the other. We will use an information-theoretic framework here and consider the conditional mutual information.

To prove this statement, we denote by $\mathcal{B}_i=\widehat{\mathcal{B}}^-_t(X^i_t)\setminus \widehat{\mathcal{B}}^-_t(X^j_t)$ the lagged conditions of $X^i_t$ that are not already contained in those of $X^j_t$ and, correspondingly, $\mathcal{B}_j=\widehat{\mathcal{B}}^-_t(X^j_t)\setminus \widehat{\mathcal{B}}^-_t(X^i_t)$. Since both $X^i_{t}$ and $X^j_t$ have parents that are not shared with the other and we assume the oracle case, both these sets are non-empty. Further, we denote the common lagged conditions as $\mathcal{B}_{ij}=\widehat{\mathcal{B}}^-_t(X^j_t)\cap \widehat{\mathcal{B}}^-_t(X^i_t)$ and make use of the following conditional independencies, which hold by the Markov assumption:
(1)~$\mathcal{B}_i \ci X^j_t | \mathcal{B}_j, \mathcal{B}_{ij}, X^i_t$ and (2)~$\mathcal{B}_j \ci X^i_t | \mathcal{B}_i, \mathcal{B}_{ij}$. We first prove that, given a contemporaneous true link $X^i_{t}\to X^j_t \in \mathcal{G}$, $I(X^i_{t}; X^j_t|\mathcal{B}_{ij},\mathcal{B}_{j}) > I(X^i_{t}; X^j_t|\mathcal{B}_{ij},\mathcal{B}_{i})$ by using the following two ways to apply the chain rule of conditional mutual information:
\begin{align}
& I(X^i_{t}, \mathcal{B}_{i}; X^j_t, \mathcal{B}_{j}|\mathcal{B}_{ij})= \nonumber\\
&=I(X^i_{t}, \mathcal{B}_{i};\mathcal{B}_{j}|\mathcal{B}_{ij}) + I(X^i_{t}, \mathcal{B}_{i}; X^j_t|\mathcal{B}_{ij} \mathcal{B}_{j}) \nonumber\\
&=I(\mathcal{B}_{i};\mathcal{B}_{j}|\mathcal{B}_{ij}) + \underbrace{I(X^i_{t};\mathcal{B}_{j}|\mathcal{B}_{ij},\mathcal{B}_{i})}_{=0~~\text{(Markov)}} \nonumber\\ 
&\phantom{=}+I(X^i_{t};X^j_{t}|\mathcal{B}_{ij},\mathcal{B}_{j}) + \underbrace{I(\mathcal{B}_{i};X^j_{t}|\mathcal{B}_{ij},\mathcal{B}_{j},X^i_t)}_{=0~~\text{(Markov)}} \label{one}
\end{align}
and
\begin{align}
& I(X^i_{t}, \mathcal{B}_{i}; X^j_t, \mathcal{B}_{j}|\mathcal{B}_{ij})= \nonumber\\
&=I(\mathcal{B}_{i};X^j_t,\mathcal{B}_{j}|\mathcal{B}_{ij}) + I(X^i_{t}; X^j_t,\mathcal{B}_{j}|\mathcal{B}_{ij} \mathcal{B}_{i}) \nonumber\\
&=I(\mathcal{B}_{i};\mathcal{B}_{j}|\mathcal{B}_{ij}) + \underbrace{I(\mathcal{B}_{i};X^j_t|\mathcal{B}_{ij},\mathcal{B}_{j})}_{> 0~~\text{since $X^i_t\to X^j_t$}} \nonumber\\ 
&\phantom{=}+I(X^i_{t};X^j_{t}|\mathcal{B}_{ij},\mathcal{B}_{i}) + \underbrace{I(X^i_{t};\mathcal{B}_{j}|\mathcal{B}_{ij},\mathcal{B}_{i},X^j_t)}_{> 0~~\text{since $X^i_t\to X^j_t$}} \label{two}
\end{align}
where \eqref{one} and \eqref{two} denote two different applications of the chain rule. From this is follows that $I(X^i_{t}; X^j_t|\mathcal{B}_{ij},\mathcal{B}_{j}) > I(X^i_{t}; X^j_t|\mathcal{B}_{ij},\mathcal{B}_{i})$.

Hence, it remains to prove that $I(X^i_{t}; X^j_t|\mathcal{B}_{ij},\mathcal{B}_{j},\mathcal{B}_{i}) > I(X^i_{t}; X^j_t|\mathcal{B}_{ij},\mathcal{B}_{i})$, which we also do by the chain rule: 
\begin{align}
& I(X^i_{t}; X^j_t, \mathcal{B}_{j}|\mathcal{B}_{ij},\mathcal{B}_{i})= \nonumber\\
&=I(X^i_{t};X^j_t|\mathcal{B}_{ij},\mathcal{B}_{i}) + \underbrace{I(X^i_{t};\mathcal{B}_{j}|\mathcal{B}_{ij},\mathcal{B}_{i},X^j_t)}_{> 0~~\text{since $X^i_t\to X^j_t$}} \\ 
&=\underbrace{I(X^i_{t};\mathcal{B}_{j}|\mathcal{B}_{ij},\mathcal{B}_{i})}_{=0~~\text{(Markov)}} + I(X^i_{t};X^j_{t}|\mathcal{B}_{ij},\mathcal{B}_{i},\mathcal{B}_{j})
\end{align}
\end{proof}


\section{Further pseudo code}
Algorithms~\ref{algo:step3_SM} and \ref{algo:step4_SM} detail the pseudo-code for the PCMCI$^+$ / PCMCI$^+_0$ / PC collider phase with different collider rules and the orientation phase.

\begin{algorithm*}[h!]
\caption{(Detailed PCMCI$^+$ / PCMCI$^+_0$ / PC collider phase with different collider rules)}
\begin{algorithmic}[1]
\Require $\mathcal{G}$ and sepset from Alg.~\ref{algo:step2}, rule $=\{$'none', 'conservative', 'majority'$\}$, time series dataset $\mathbf{X}=(X^1,\,\ldots,X^N)$, significance threshold $\alpha_{\rm PC}$, ${\rm CI}(X,\,Y,\,\mathbf{Z})$, PCMCI$^+$ / PCMCI$^+_0$: $\widehat{\mathcal{B}}^-_t(X^j_t)$ for all $X^j_{t}$ in $\mathbf{X}_t$
\ForAll{unshielded triples $X^i_{t-\tau}\to X^k_t \oo X^j_t$ ($\tau>0$) or $X^i_{t}\oo X^k_t \oo X^j_t$ ($\tau=0$) in $\mathcal{G}$ where $(X^i_{t-\tau}, X^j_t)$ are not adjacent} 
    \If{rule $=$ 'none'}
        \If{$X^k_t$ is not in sepset$(X^i_{t-\tau}, X^j_t)$}
            \State Orient $X^i_{t-\tau}\to X^k_t \oo X^j_t$ ($\tau>0$) or  $X^i_{t-\tau}\oo X^k_t \oo X^j_t$ ($\tau=0$) as $X^i_{t-\tau} \to X^k_t \leftarrow X^j_t$ 
        \EndIf
    \Else
    \State PCMCI$^+$ / PCMCI$^+_0$: Define contemporaneous adjacencies $\widehat{\mathcal{A}}(X^j_t)=\widehat{\mathcal{A}}_t(X^j_t)=\{X^i_{t}{\neq} X^j_t \in \mathbf{X}_t: X^i_t \oo X^j_t ~\text{in $\mathcal{G}$}\}$
    \item[] \hspace*{2.7em} PC: Define full adjacencies $\widehat{\mathcal{A}}(X^j_t)$ for all (lagged and contemporaneous) links in $\mathcal{G}$
        \ForAll{for all $\mathcal{S}{\subseteq}\widehat{\mathcal{A}}(X^j_t){\setminus}\{X^i_{t-\tau}\}$ and for all $\mathcal{S}{\subseteq}\widehat{\mathcal{A}}(X^i_t){\setminus}\{X^j_{t}\}$ (if $\tau=0$) }
          \State Evaluate \Call{CI}{$X^i_{t{-}\tau},X^j_{t},\mathbf{Z}}$ with
          \State PCMCI$^+$: $\mathbf{Z}{=}(\mathcal{S},\widehat{\mathcal{B}}^-_t(X^j_t){\setminus}\{X^i_{t{-}\tau}\},\widehat{\mathcal{B}}^-_{t{-}\tau}(X^i_{t{-}\tau}))$
          \item[] \hspace*{4.2em} PCMCI$^+_0$: $\mathbf{Z}{=}(\mathcal{S},\widehat{\mathcal{B}}^-_t(X^j_t){\setminus}\{X^i_{t{-}\tau}\})$
          \item[] \hspace*{4.2em} PC: $\mathbf{Z}{=}\mathcal{S}$
          \State Store all subsets $\mathcal{S}$ with $p$-value $> \alpha_{\rm PC}$ as separating subsets          
        \EndFor
        \If{no separating subsets are found} \State Mark triple as ambiguous
        \Else
            \State Compute fraction $n_k$ of separating subsets that contain $X^k_t$
            \If{rule $=$ 'conservative'}
                \State Orient triple as collider if $n_k{=}0$, leave unoriented if $n_k{=}1$, and mark as ambiguous if $0{<}n_k{<}1$
            \ElsIf{rule $=$ 'majority'}
                \State Orient triple as collider if $n_k{<}0.5$, leave unoriented if $n_k{>}0.5$, and mark as ambiguous if $n_k{=}0.5$
            \EndIf
            \State Mark links in $\mathcal{G}$ with conflicting orientations as $\conflict$
        \EndIf
    \EndIf
\EndFor
\State \Return $\mathcal{G}$, sepset, ambiguous triples, conflicting links
\end{algorithmic}
 \label{algo:step3_SM}
\end{algorithm*}

\begin{algorithm*}[h]
\caption{(Detailed PCMCI$^+$ / PCMCI$^+_0$ / PC rule orientation phase)}
\begin{algorithmic}[1]
\Require $\mathcal{G}$, ambiguous triples, conflicting links
\While{any unambiguous triples suitable for rules R1-R3 are remaining}
    \State Apply rule R1 (orient unshielded triples that are not colliders):
    \ForAll{unambiguous triples $X^i_{t-\tau}\to X^k_t \oo X^j_t$  where $(X^i_{t-\tau}, X^j_t)$ are not adjacent} 
        \State Orient as $X^i_{t-\tau}\to X^k_t \to X^j_t$
        \State Mark links  with conflicting orientations as  $\conflict$
    \EndFor
    \State Apply rule R2 (avoid cycles):
    \ForAll{unambiguous triples $X^i_{t}\to X^k_t \to X^j_t$ with $X^i_{t} \oo X^j_t$} 
        \State Orient as $X^i_{t} \to X^j_t$
        \State Mark links with conflicting orientations as  $\conflict$
    \EndFor
    \State Apply rule R3 (orient unshielded triples that are not colliders and avoid cycles):
    \ForAll{pairs of unambiguous triples $X^i_{t}\oo X^k_t \to X^j_t$ and $X^i_{t}\oo X^l_t \to X^j_t$ where $(X^k_{t}, X^l_t)$ are not adjacent and $X^i_{t} \oo X^j_t$} 
        \State Orient as $X^i_{t} \to X^j_t$
        \State Mark links with conflicting orientations as  $\conflict$
    \EndFor
\EndWhile
\State \Return $\mathcal{G}$, conflicting links
\end{algorithmic}
 \label{algo:step4_SM}
\end{algorithm*}

\clearpage

\section{Implementation details} \label{sec:implementation_SM}

In the linear and nonlinear numerical experiments PCMCI$^+$ is compared with the PC algorithm, both implemented with the appropriate CI test (ParCorr for the linear case, GPDC for the nonlinear case). For the linear numerical experiments we additionally consider representatives from two further frameworks: GCresPC, a combination of GC with PC applied to residuals, and an autoregressive model version of LiNGAM \cite{hyvarinen2010estimation}, a representative of the SCM framework. Their implementations are as follows.

\subsection{LiNGAM}
For LiNGAM the code was taken from \texttt{https://github.com/cdt15/lingam} which provides a class \texttt{VARLiNGAM}. The method was called follows:
\begin{verbatim}
Input: data, tau_max

model = lingam.VARLiNGAM(lags=tau_max, criterion=None, prune=True)
model.fit(data)
val_matrix = model.adjacency_matrices_.transpose(2,1,0)
graph = (val_matrix != 0.).astype('int') 

Output: graph
\end{verbatim}
The causal graph \verb|graph| encodes the causal relations in an array of shape \verb|(N, N, tau_max + 1)|.
The option \verb|criterion=None| just ignores the optional automatic selection of \verb|lags|, which is here set to the same \verb|tau_max| for all methods. 
I could not find a way to obtain p-values in the \texttt{VARLiNGAM} implementation, but with the parameter setting \verb|prune=True| the resulting adjacency matrices are regularized with an adaptive LASSO approach using the BIC criterion to find the optimal regularization hyper-parameter (\verb|sklearn.LassoLarsIC(criterion='bic')|). Non-zero adjacencies were then evaluated as causal links. Note that all other methods can be intercompared at different $\alpha_{\rm PC}$ levels while for comparison against LiNGAM we focus on its relative changes rather than absolute performance.

\subsection{GCresPC}
There was no code available for the method proposed in \cite{moneta2011causal}. The present implementation first fits a VAR model up to $\tau_{\max}$ and applies the PC algorithm on the residuals. To remove spurious lagged links (due to contemporaneous paths), the PC algorithm was additionally run on significant lagged and contemporaneous links, but the orientation phase was restricted to contemporaneous links, as proposed in \cite{moneta2011causal}.
The following Python pseudo-code utilizes functionality from the \texttt{tigramite} package, \texttt{numpy}, and \texttt{statsmodels}:
\begin{verbatim}
Input: data, tau_max, alpha
import functions/classes ParCorr, PCMCI, DataFrame from tigramite

graph = np.zeros((N, N, tau_max + 1))

# 1. Estimate lagged adjacencies (to be updated in step 3.)    
tsamodel = tsa.var.var_model.VAR(data)
results = tsamodel.fit(maxlags=tau_max, trend='nc')
pvalues = results.pvalues
values = results.coefs
residuals = results.resid

lagged_parents = significant lagged links at alpha

# 2. Run PC algorithm on residuals (with tau_max=0)
pcmci = PCMCI(dataframe=DataFrame(residuals), cond_ind_test=ParCorr())
pcmcires = pcmci.run_pcalg(pc_alpha=alpha,
                           tau_min=0,
                           tau_max=0)

# Update contemporaneous graph
graph[:,:,0] = pcmcires['graph'][:,:,0]

# 3. Run PC algorithm on significant lagged and contemporaneous adjacencies
# to remove spurious lagged links due to contemporaneous parents 

selected_links = lagged_parents + significant contemporaneous adjacencies
pcmci = PCMCI(dataframe=DataFrame(data), cond_ind_test=ParCorr())
pcmcires = pcmci.run_pcalg(selected_links=selected_links, 
                           pc_alpha=alpha,
                           tau_min=0,
                           tau_max=tau_max)

# Update lagged part of graph
graph[:,:,1:] = pcmcires['graph'][:,:,1:]

Output: graph
\end{verbatim}
Note that the contemporaneous graph structure in \verb|graph| comes only from applying the PC algorithm to the residuals and, hence, does not utilize triples containing lagged adjacencies. Step 3 is necessary to remove spurious lagged links due to contemporaneous parents.
The output of GCresPC depends on $\alpha_{\rm PC}$ as for PCMCI$^+$ and the PC algorithm.

\section{Further numerical experiments} \label{sec:numerics_SM}

Next to repeating the overview figure for the linear Gaussian model setup from the main text in  Fig.~\ref{fig:lineargaussian_SM_overview}, in Fig.~\ref{fig:linearmixed_SM_overview} we show the linear mixed noise setup, and in Fig.~\ref{fig:nonlinearmixed_SM_overview} the nonlinear mixed noise setup.
The remaining pages contain results of further numerical experiments that evaluate different $a,\,N,\,T,\,\tau_{\max}$ and $\alpha_{\rm PC}$ for the linear model setups.
All results and more will be contributed to the causality benchmark platform \verb|www.causeme.net| \cite{Runge2019a} to facilitate a further expanded method evaluation.




\clearpage
\begin{figure*}[t]
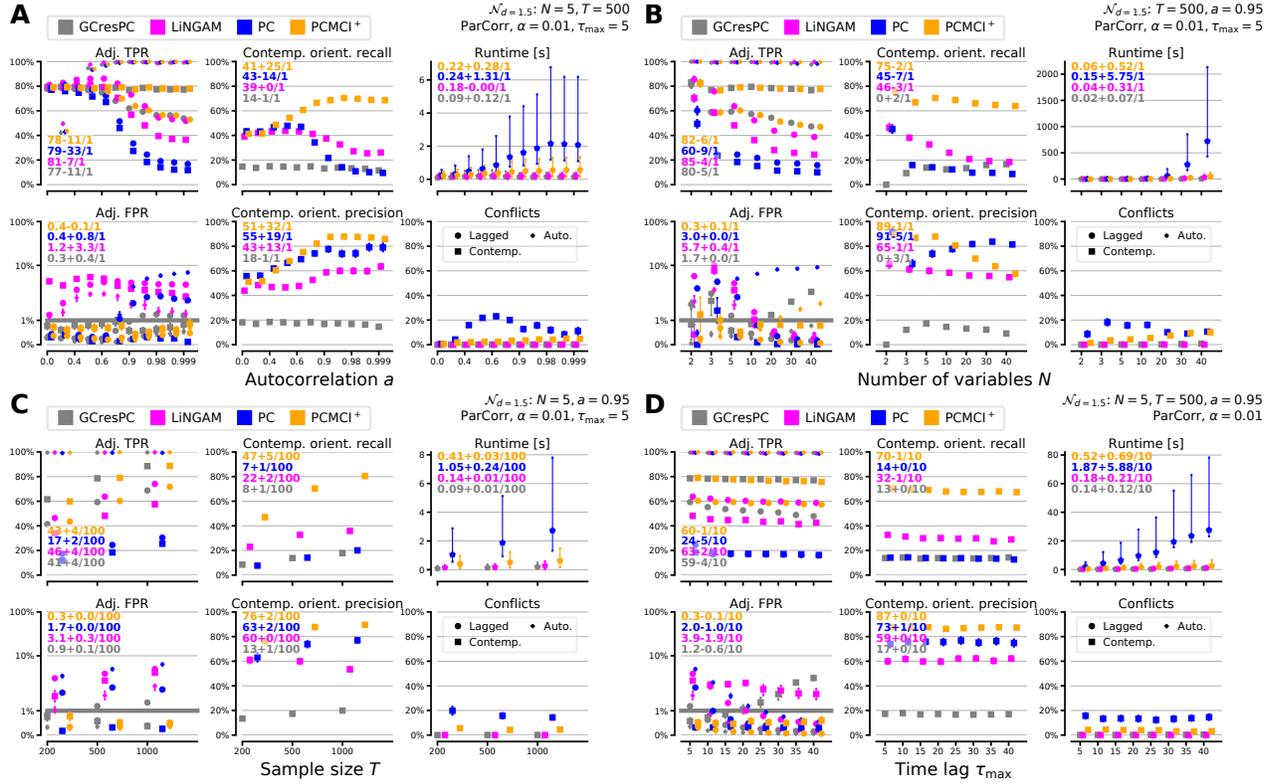
  
\centering
\includegraphics[width=.5\linewidth]{figures/autocorrhighdegreepaper-5-7-0.1-0.5-0.3-0.0-5-500-par_corr-0.01-5.pdf}\hspace*{3pt}
\includegraphics[width=.5\linewidth]{figures/highdimhighdegreepaper-0.1-0.5-0.95-0.3-0.0-5-500-par_corr-0.01-5.pdf}
\includegraphics[width=.5\linewidth]{figures/sample_sizehighdegreepaper-5-7-0.1-0.5-0.3-0.0-5-0.95-par_corr-0.01-5.pdf}\hspace*{3pt}
\includegraphics[width=.5\linewidth]{figures/tau_maxhighdegreepaper-5-7-0.1-0.5-0.3-0.0-5-0.95-500-par_corr-0.01.pdf}%
\caption{
Numerical experiments with linear Gaussian setup for varying (\textbf{A}) autocorrelation strength $a$ (\textbf{B}) number of variables $N$ (\textbf{C}) sample size $T$ and (\textbf{D}) maximum time lag $\tau_{\max}$. All remaining setup parameters indicated in the top right. Errorbars show std. errors or the 90\% range (for runtime). The insets show ANOVA statistics.}
\label{fig:lineargaussian_SM_overview}
\end{figure*}

\clearpage
\begin{figure*}[t]  
\centering
\includegraphics[width=.5\linewidth]{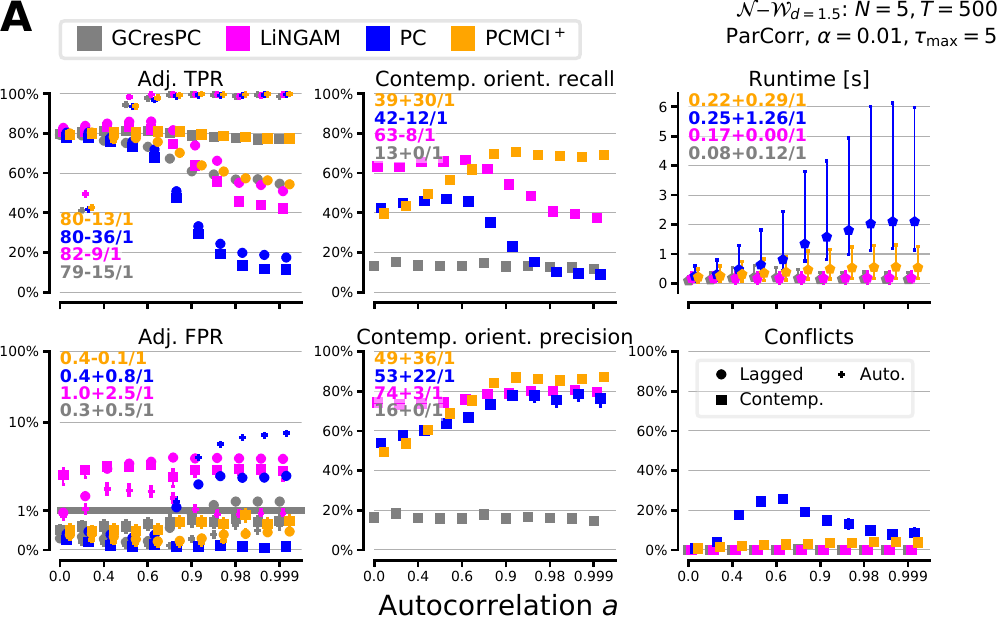}\hspace*{3pt}
\includegraphics[width=.5\linewidth]{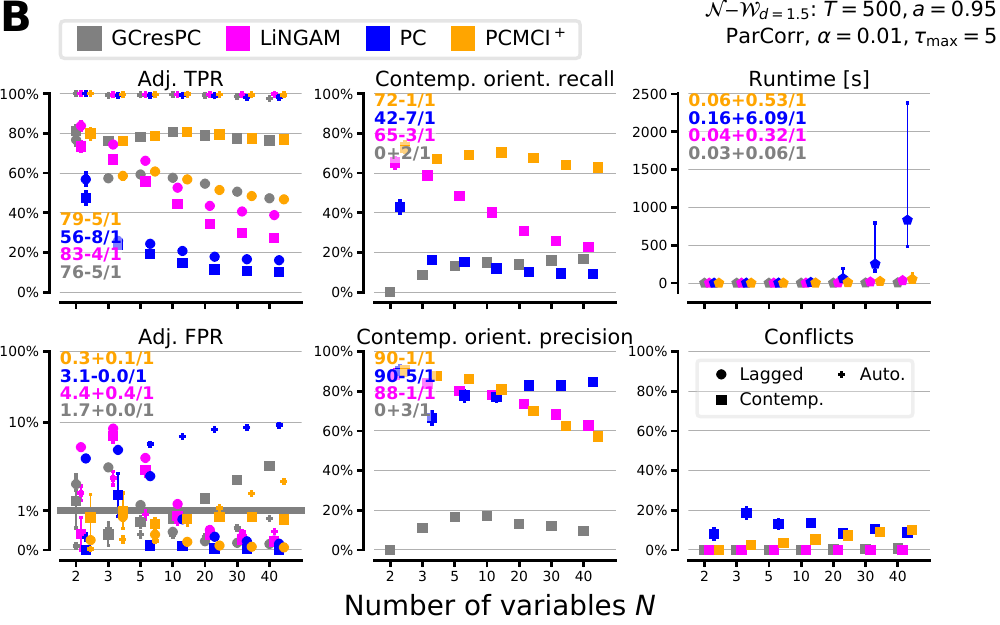}
\includegraphics[width=.5\linewidth]{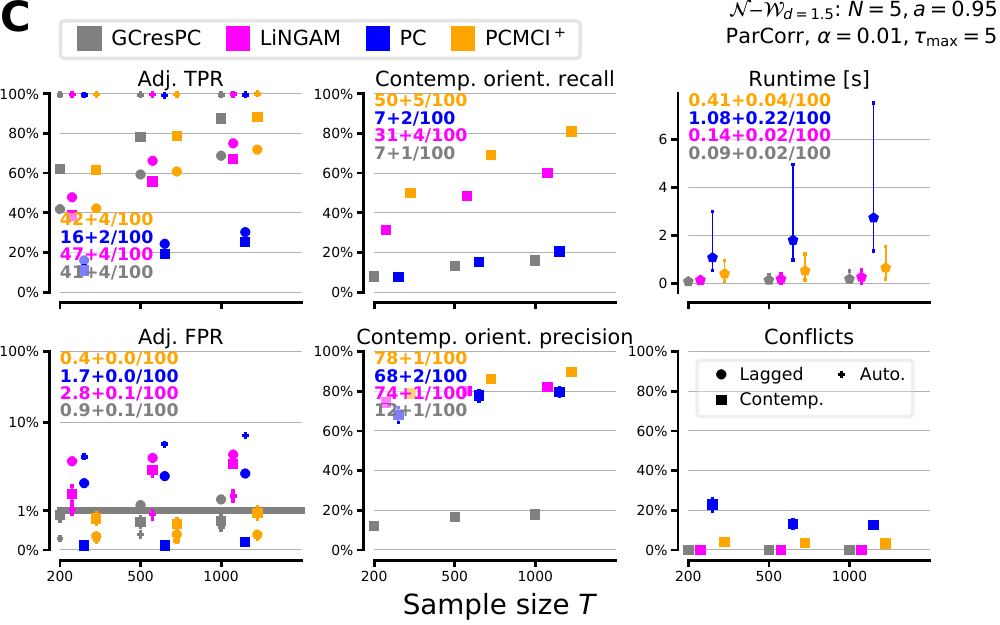}\hspace*{3pt}
\includegraphics[width=.5\linewidth]{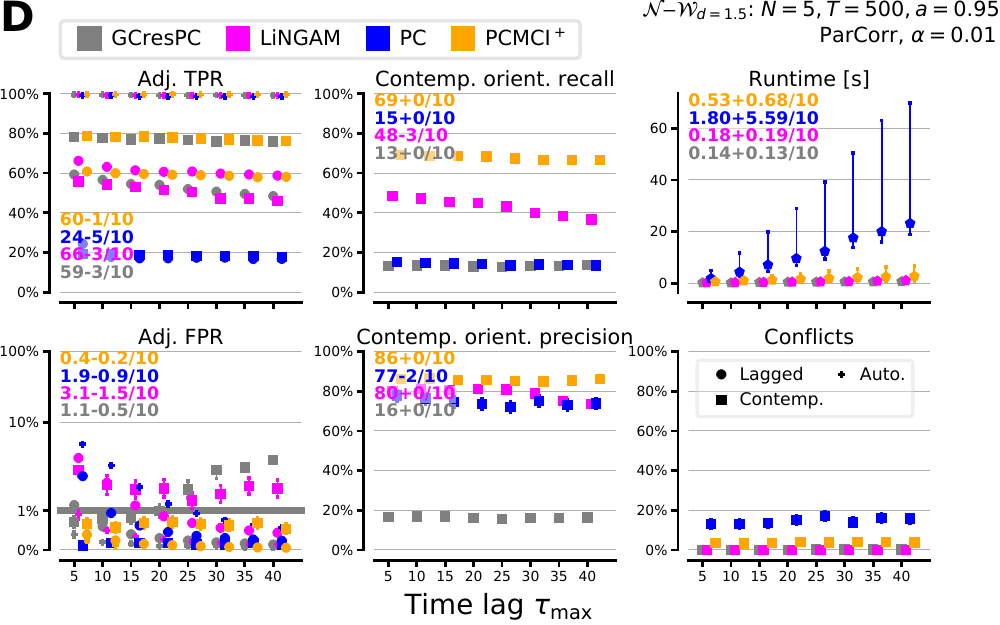}%
\caption{
Numerical experiments with linear mixed noise setup for varying (\textbf{A}) autocorrelation strength $a$ (\textbf{B}) number of variables $N$ (\textbf{C}) sample size $T$ and (\textbf{D}) maximum time lag $\tau_{\max}$. All remaining setup parameters indicated in the top right. Errorbars show std. errors or the 90\% range (for runtime). The insets show ANOVA statistics.}
\label{fig:linearmixed_SM_overview}
\end{figure*}

\clearpage
\begin{figure*}[t]  
\centering
\includegraphics[width=.5\linewidth]{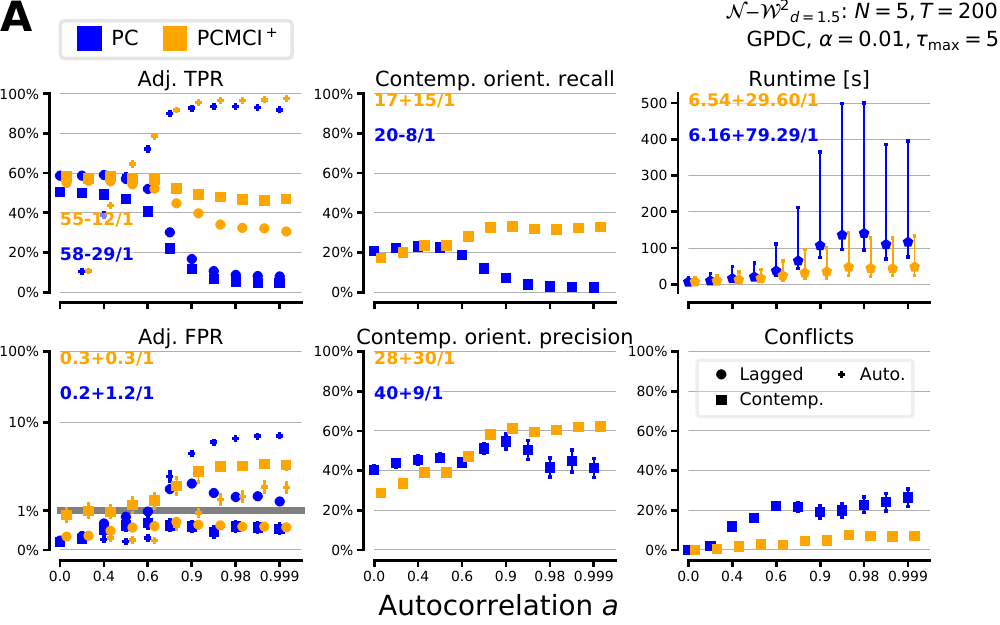}\hspace*{3pt}
\includegraphics[width=.5\linewidth]{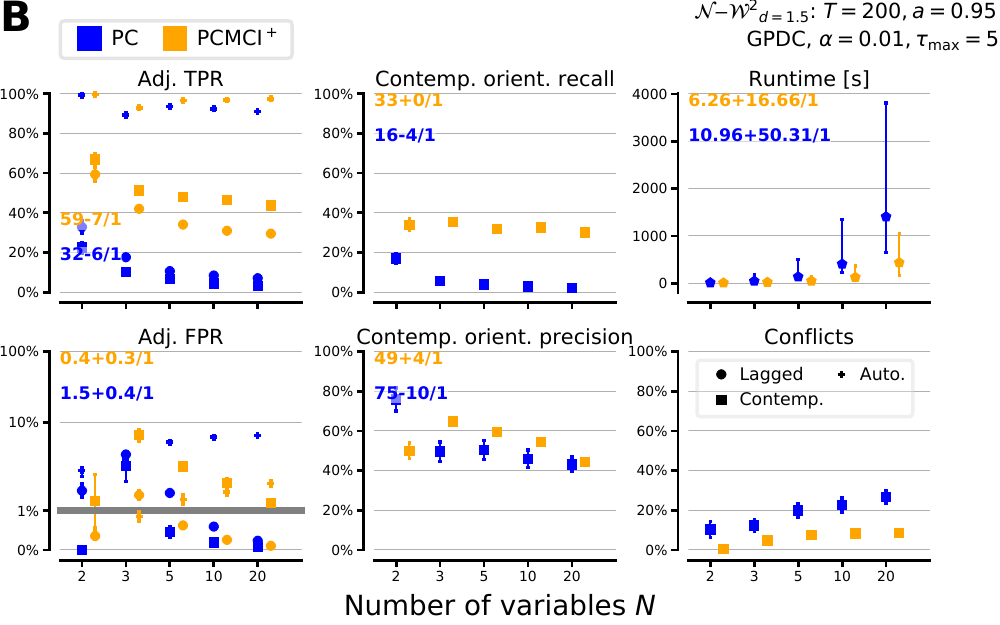}
\includegraphics[width=.5\linewidth]{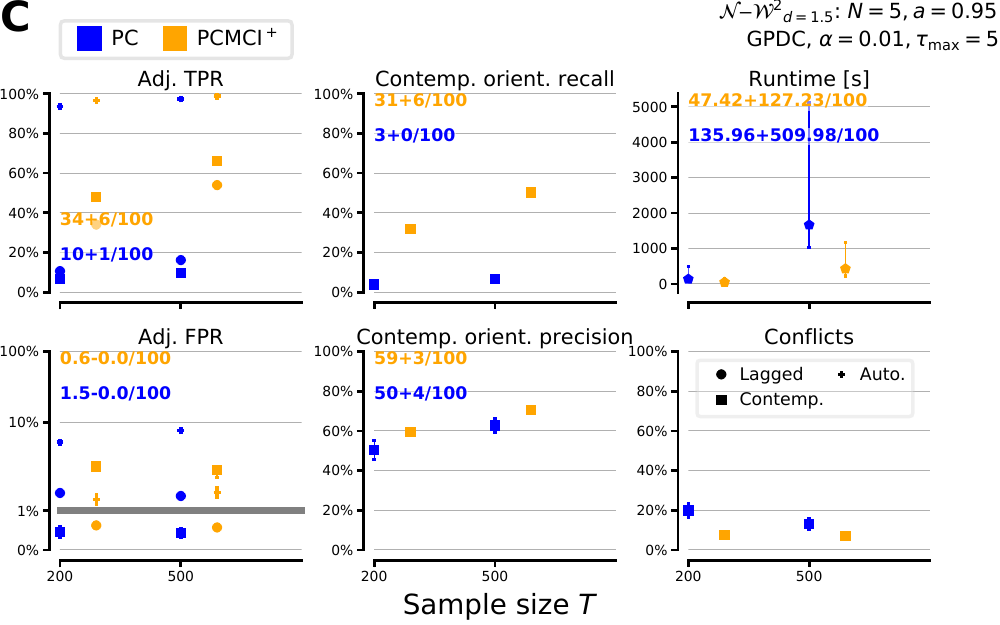}\hspace*{3pt}
\includegraphics[width=.5\linewidth]{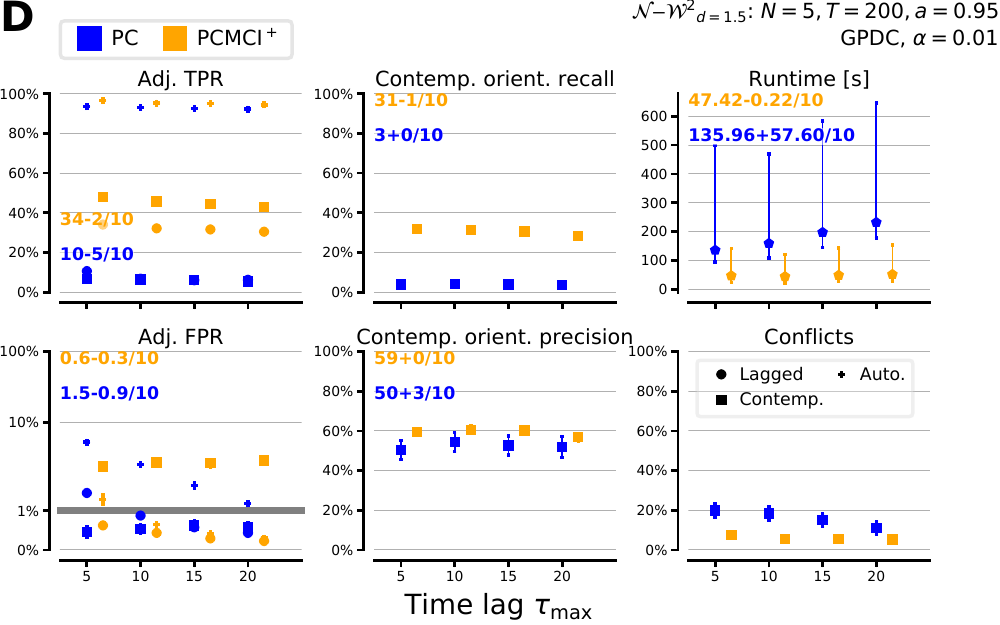}%
\caption{
Numerical experiments with nonlinear mixed noise setup for varying (\textbf{A}) autocorrelation strength $a$ (\textbf{B}) number of variables $N$ (\textbf{C}) sample size $T$ and (\textbf{D}) maximum time lag $\tau_{\max}$. All remaining setup parameters indicated in the top right. Errorbars show std. errors or the 90\% range (for runtime). The insets show ANOVA statistics.}
\label{fig:nonlinearmixed_SM_overview}
\end{figure*}


\clearpage
\begin{figure*}[t]  
\centering
\includegraphics[width=1\linewidth,page=1]{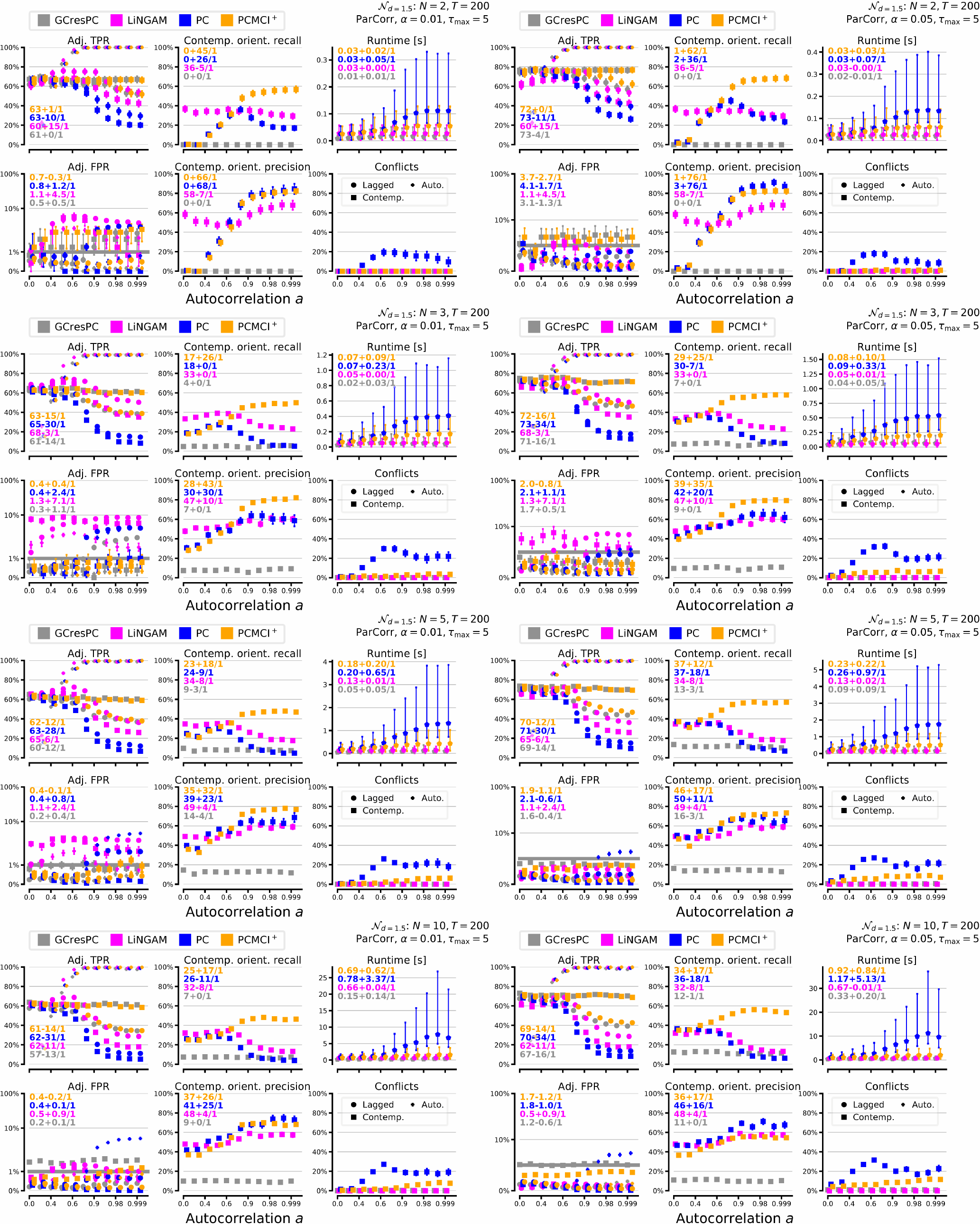}
\caption{
Numerical experiments with 
linear Gaussian setup
for varying 
autocorrelation $a$
 and $T=200$
. The left (right) column shows results for significance level $\alpha=0.01$ ($\alpha=0.05$). 
The rows depict results for $N=2,\,3,\,5,\, 10$ (top to bottom).
All model and method parameters are indicated in the upper right of each panel.
}
\label{fig:lineargaussian_SM_autocorr1}
\end{figure*}

\clearpage
\begin{figure*}[t]  
\centering
\includegraphics[width=1\linewidth,page=2]{figures/autocorrhighdegreepaper_par_corr_all.pdf}
\caption{
Numerical experiments with 
linear Gaussian  setup
for varying 
autocorrelation $a$
  and $T=500$
. The left (right) column shows results for significance level $\alpha=0.01$ ($\alpha=0.05$). 
The rows depict results for $N=2,\,3,\,5,\, 10$ (top to bottom).
All model and method parameters are indicated in the upper right of each panel.
}
\label{fig:lineargaussian_SM_autocorr2}
\end{figure*}

\clearpage
\begin{figure*}[t]  
\centering
\includegraphics[width=1\linewidth,page=3]{figures/autocorrhighdegreepaper_par_corr_all.pdf}
\caption{
Numerical experiments with
linear Gaussian  setup
for varying 
autocorrelation $a$
   and $T=1000$
. The left (right) column shows results for significance level $\alpha=0.01$ ($\alpha=0.05$). 
The rows depict results for $N=2,\,3,\,5,\, 10$ (top to bottom).
All model and method parameters are indicated in the upper right of each panel.
}
\label{fig:lineargaussian_SM_autocorr3}
\end{figure*}

\clearpage
\begin{figure*}[t]  
\centering
\includegraphics[width=1\linewidth,page=1]{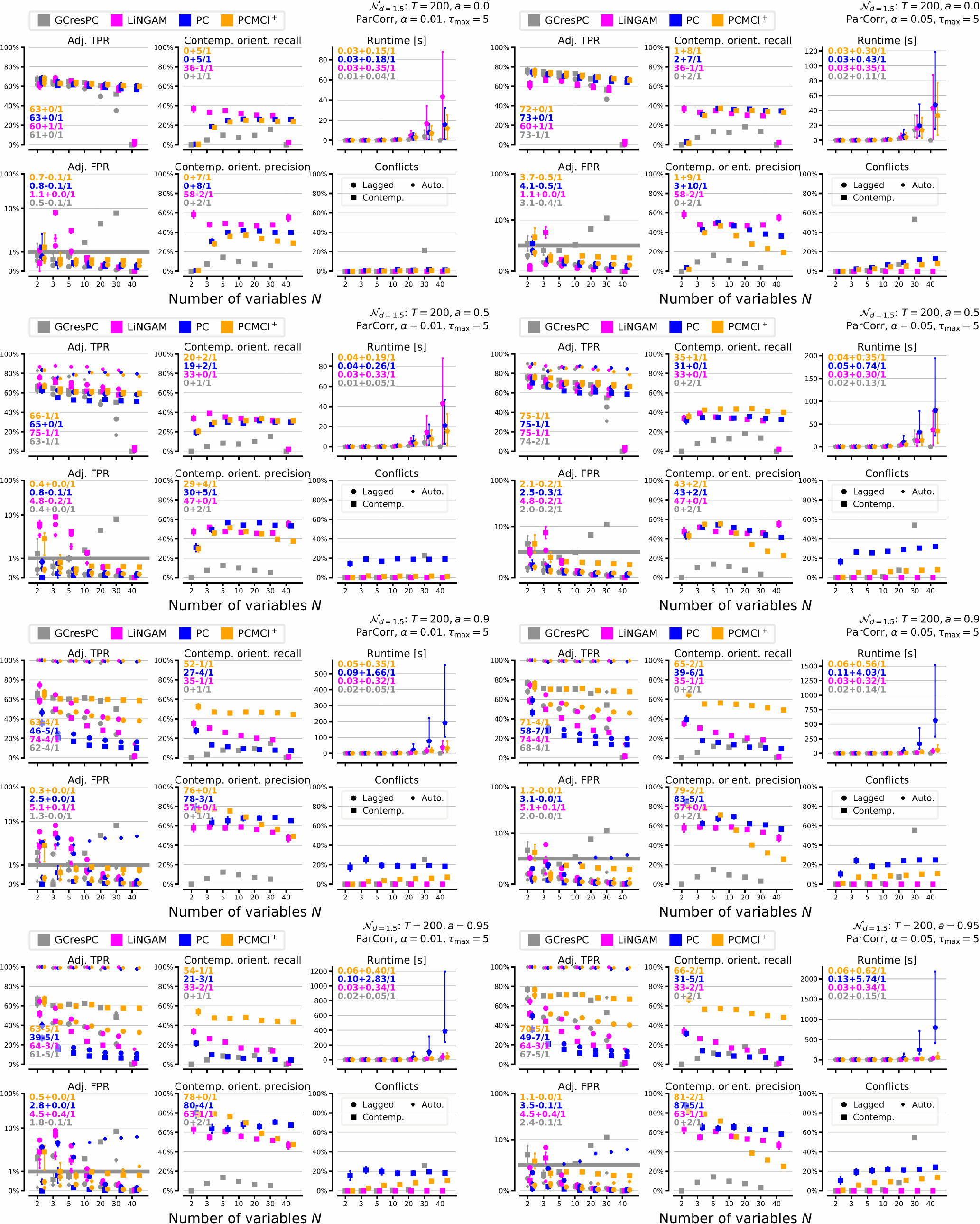}
\caption{
Numerical experiments with 
linear Gaussian  setup
for varying 
 number of variables $N$
 and $T=200$
. The left (right) column shows results for significance level $\alpha=0.01$ ($\alpha=0.05$). 
The rows depict results for increasing autocorrelations $a$ (top to bottom).
All model and method parameters are indicated in the upper right of each panel.
}
\label{fig:lineargaussian_SM_highdim1}
\end{figure*}

\clearpage
\begin{figure*}[t]  
\centering
\includegraphics[width=1\linewidth,page=2]{figures/highdimhighdegreepaper_par_corr_all.pdf}
\caption{
Numerical experiments with 
linear Gaussian  setup
for varying 
 number of variables $N$
  and $T=500$
. The left (right) column shows results for significance level $\alpha=0.01$ ($\alpha=0.05$). 
The rows depict results for increasing autocorrelations $a$ (top to bottom).
All model and method parameters are indicated in the upper right of each panel.
}
\label{fig:lineargaussian_SM_highdim2}
\end{figure*}

\clearpage
\begin{figure*}[t]  
\centering
\includegraphics[width=1\linewidth,page=3]{figures/highdimhighdegreepaper_par_corr_all.pdf}
\caption{
Numerical experiments with 
linear Gaussian  setup
for varying 
 number of variables $N$
   and $T=1000$
. The left (right) column shows results for significance level $\alpha=0.01$ ($\alpha=0.05$). 
The rows depict results for increasing autocorrelations $a$ (top to bottom).
All model and method parameters are indicated in the upper right of each panel.
}
\label{fig:lineargaussian_SM_highdim3}
\end{figure*}

\clearpage
\begin{figure*}[t]  
\centering
\includegraphics[width=1\linewidth,page=1]{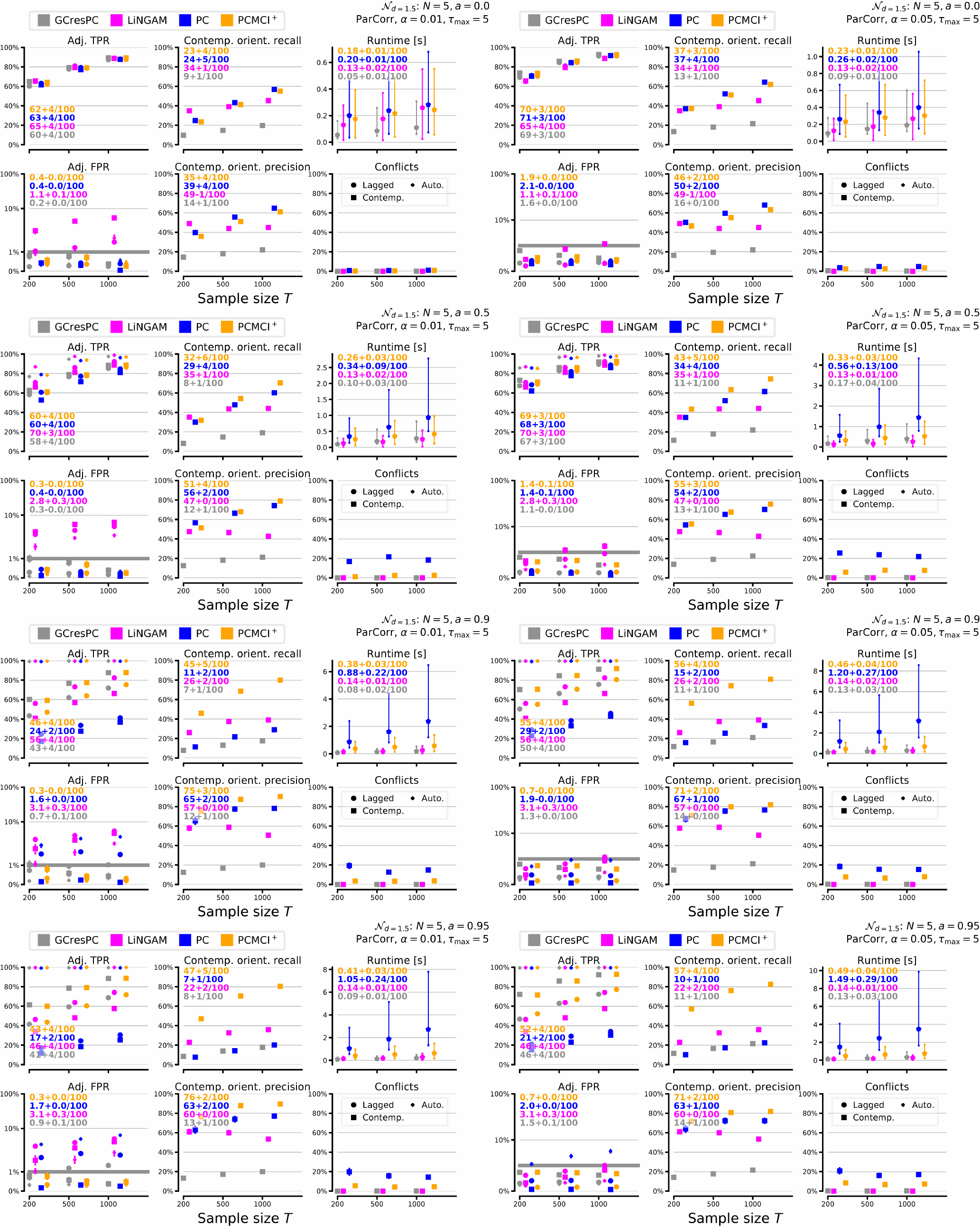}
\caption{
Numerical experiments with 
linear Gaussian  setup
for varying 
 sample size $T$
 for $N=5$ 
. The left (right) column shows results for significance level $\alpha=0.01$ ($\alpha=0.05$). 
The rows depict results for increasing autocorrelations $a$ (top to bottom).
All model and method parameters are indicated in the upper right of each panel.
}
\label{fig:lineargaussian_SM_sample_size1}
\end{figure*}

\clearpage
\begin{figure*}[t]  
\centering
\includegraphics[width=1\linewidth,page=2]{figures/sample_sizehighdegreepaper_par_corr_all.pdf}
\caption{
Numerical experiments with 
linear Gaussian  setup
for varying 
 sample size $T$
 for $N=10$ 
. The left (right) column shows results for significance level $\alpha=0.01$ ($\alpha=0.05$). 
The rows depict results for increasing autocorrelations $a$ (top to bottom).
All model and method parameters are indicated in the upper right of each panel.
}
\label{fig:lineargaussian_SM_sample_size2}
\end{figure*}

\clearpage
\begin{figure*}[t]  
\centering
\includegraphics[width=1\linewidth,page=3]{figures/sample_sizehighdegreepaper_par_corr_all.pdf}
\caption{
Numerical experiments with 
linear Gaussian  setup
for varying 
 sample size $T$
 for $N=20$ 
. The left (right) column shows results for significance level $\alpha=0.01$ ($\alpha=0.05$). 
The rows depict results for increasing autocorrelations $a$ (top to bottom).
All model and method parameters are indicated in the upper right of each panel.
}
\label{fig:lineargaussian_SM_sample_size3}
\end{figure*}

\clearpage
\begin{figure*}[t]  
\centering
\includegraphics[width=1\linewidth,page=1]{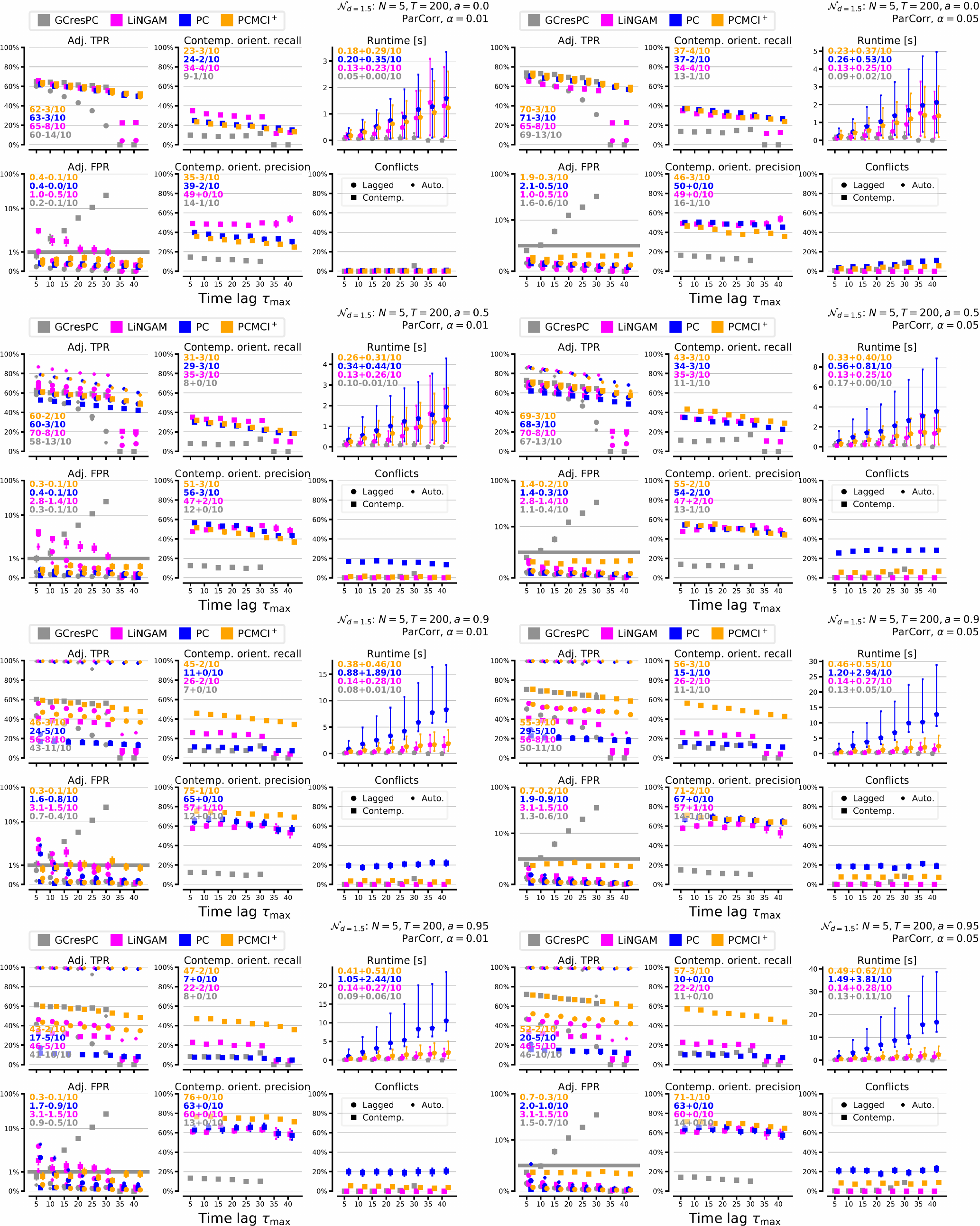}
\caption{
Numerical experiments with 
linear Gaussian  setup
for varying 
 maximum time lag $\tau_{\max}$ 
 and $T=200$
. The left (right) column shows results for significance level $\alpha=0.01$ ($\alpha=0.05$). 
The rows depict results for increasing autocorrelations $a$ (top to bottom).
All model and method parameters are indicated in the upper right of each panel.
}
\label{fig:lineargaussian_SM_tau_max1}
\end{figure*}

\clearpage
\begin{figure*}[t]  
\centering
\includegraphics[width=1\linewidth,page=2]{figures/tau_maxhighdegreepaper_par_corr_all.pdf}
\caption{
Numerical experiments with 
linear Gaussian  setup
for varying 
 maximum time lag $\tau_{\max}$ 
  and $T=500$
. The left (right) column shows results for significance level $\alpha=0.01$ ($\alpha=0.05$). 
The rows depict results for increasing autocorrelations $a$ (top to bottom).
All model and method parameters are indicated in the upper right of each panel.
}
\label{fig:lineargaussian_SM_tau_max2}
\end{figure*}

\clearpage
\begin{figure*}[t]  
\centering
\includegraphics[width=1\linewidth,page=3]{figures/tau_maxhighdegreepaper_par_corr_all.pdf}
\caption{
Numerical experiments with 
linear Gaussian  setup
for varying 
 maximum time lag $\tau_{\max}$ 
   and $T=1000$
. The left (right) column shows results for significance level $\alpha=0.01$ ($\alpha=0.05$). 
The rows depict results for increasing autocorrelations $a$ (top to bottom).
All model and method parameters are indicated in the upper right of each panel.
}
\label{fig:lineargaussian_SM_tau_max3}
\end{figure*}

\clearpage
\begin{figure*}[t]  
\centering
\includegraphics[width=1\linewidth,page=1]{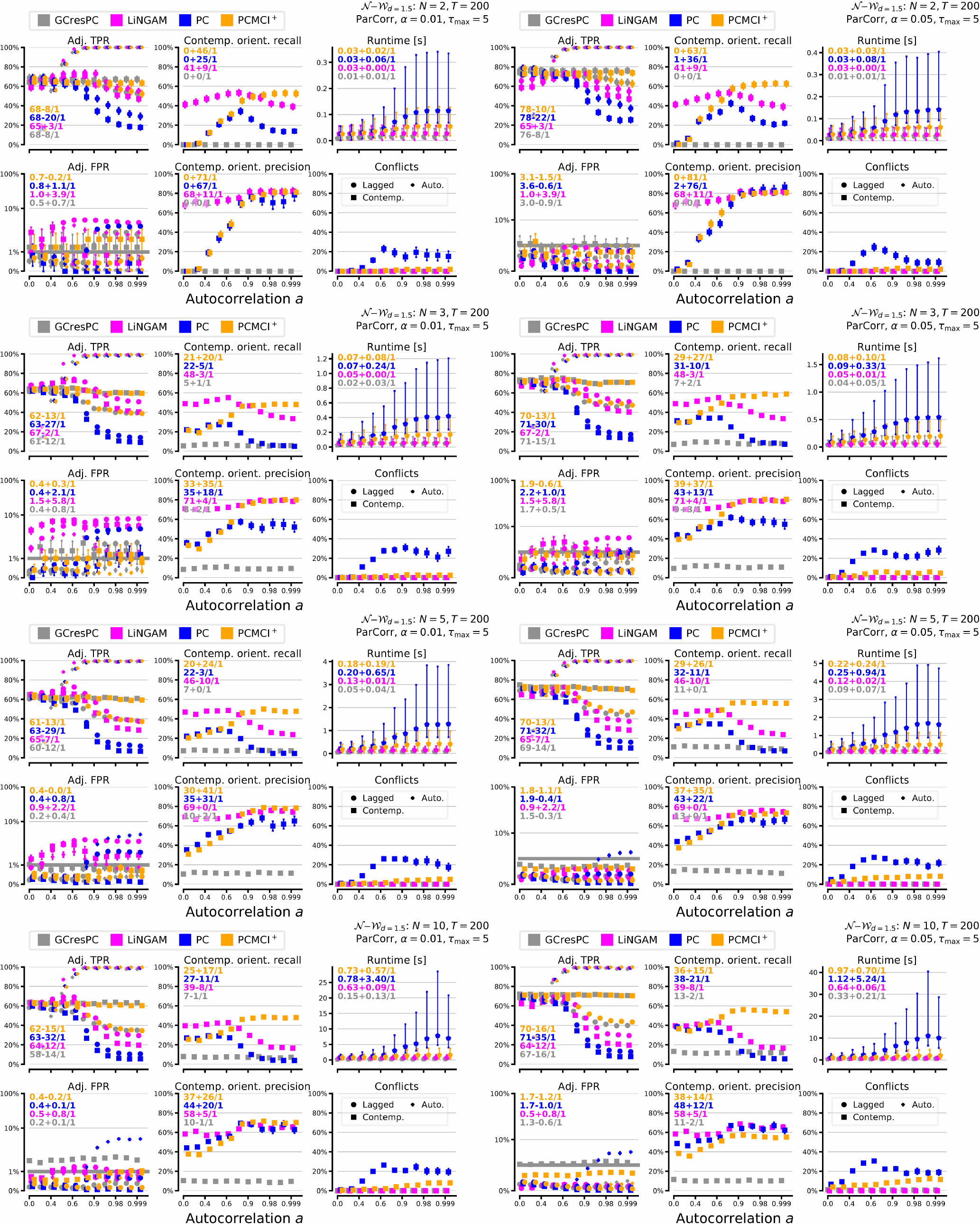}
\caption{
Numerical experiments with 
linear mixed noise setup 
for varying 
autocorrelation $a$
 and $T=200$
. The left (right) column shows results for significance level $\alpha=0.01$ ($\alpha=0.05$). 
The rows depict results for $N=2,\,3,\,5,\, 10$ (top to bottom).
All model and method parameters are indicated in the upper right of each panel.
}
\label{fig:linearmixed_SM_autocorr1}
\end{figure*}

\clearpage
\begin{figure*}[t]  
\centering
\includegraphics[width=1\linewidth,page=2]{figures/autocorrmixedhighdegreepaper_par_corr_all.pdf}
\caption{
Numerical experiments with 
linear mixed noise setup 
for varying 
autocorrelation $a$
  and $T=500$
. The left (right) column shows results for significance level $\alpha=0.01$ ($\alpha=0.05$). 
The rows depict results for $N=2,\,3,\,5,\, 10$ (top to bottom).
All model and method parameters are indicated in the upper right of each panel.
}
\label{fig:linearmixed_SM_autocorr2}
\end{figure*}

\clearpage
\begin{figure*}[t]  
\centering
\includegraphics[width=1\linewidth,page=3]{figures/autocorrmixedhighdegreepaper_par_corr_all.pdf}
\caption{
Numerical experiments with
linear mixed noise setup  
for varying 
autocorrelation $a$
   and $T=1000$
. The left (right) column shows results for significance level $\alpha=0.01$ ($\alpha=0.05$). 
The rows depict results for $N=2,\,3,\,5,\, 10$ (top to bottom).
All model and method parameters are indicated in the upper right of each panel.
}
\label{fig:linearmixed_SM_autocorr3}
\end{figure*}

\clearpage
\begin{figure*}[t]  
\centering
\includegraphics[width=1\linewidth,page=1]{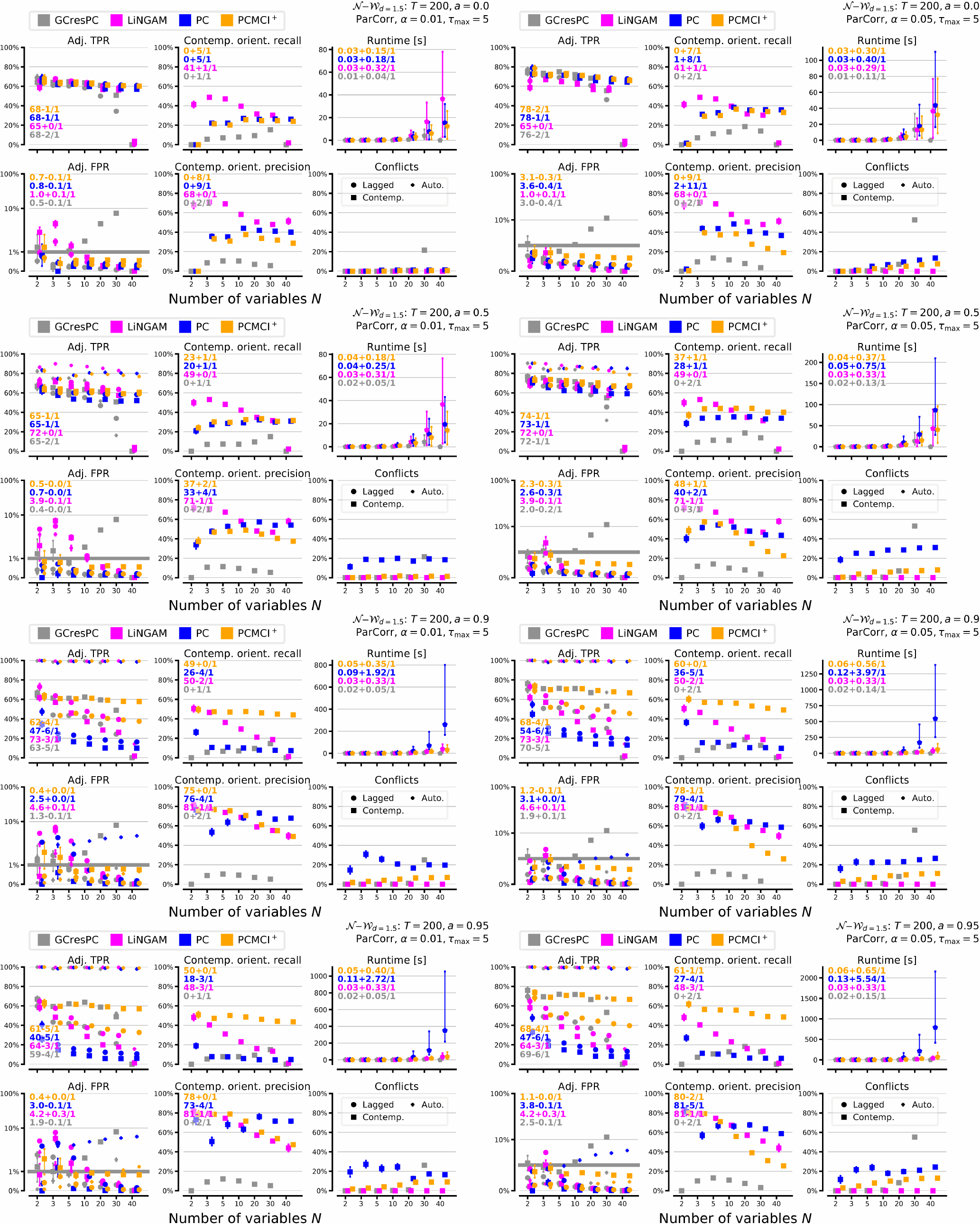}
\caption{
Numerical experiments with 
linear mixed noise setup 
for varying 
 number of variables $N$
 and $T=200$
. The left (right) column shows results for significance level $\alpha=0.01$ ($\alpha=0.05$). 
The rows depict results for increasing autocorrelations $a$ (top to bottom).
All model and method parameters are indicated in the upper right of each panel.
}
\label{fig:linearmixed_SM_highdim1}
\end{figure*}

\clearpage
\begin{figure*}[t]  
\centering
\includegraphics[width=1\linewidth,page=2]{figures/highdimmixedhighdegreepaper_par_corr_all.pdf}
\caption{
Numerical experiments with 
linear mixed noise setup 
for varying 
 number of variables $N$
  and $T=500$
. The left (right) column shows results for significance level $\alpha=0.01$ ($\alpha=0.05$). 
The rows depict results for increasing autocorrelations $a$ (top to bottom).
All model and method parameters are indicated in the upper right of each panel.
}
\label{fig:linearmixed_SM_highdim2}
\end{figure*}

\clearpage
\begin{figure*}[t]  
\centering
\includegraphics[width=1\linewidth,page=3]{figures/highdimmixedhighdegreepaper_par_corr_all.pdf}
\caption{
Numerical experiments with 
linear mixed noise setup 
for varying 
 number of variables $N$
   and $T=1000$
. The left (right) column shows results for significance level $\alpha=0.01$ ($\alpha=0.05$). 
The rows depict results for increasing autocorrelations $a$ (top to bottom).
All model and method parameters are indicated in the upper right of each panel.
}
\label{fig:linearmixed_SM_highdim3}
\end{figure*}

\clearpage
\begin{figure*}[t]  
\centering
\includegraphics[width=1\linewidth,page=1]{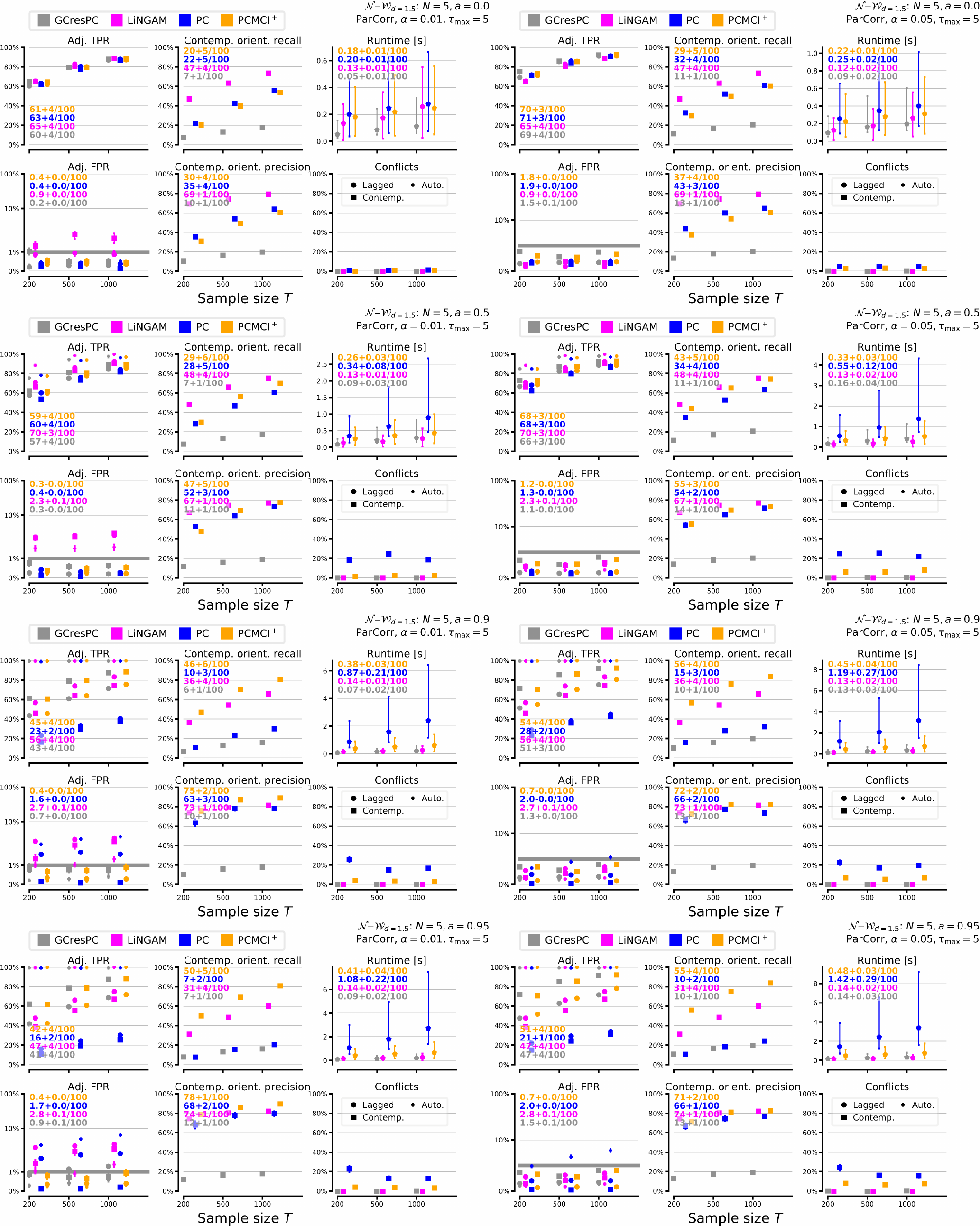}
\caption{
Numerical experiments with 
linear mixed noise setup 
for varying 
 sample size $T$
 for $N=5$ 
. The left (right) column shows results for significance level $\alpha=0.01$ ($\alpha=0.05$). 
The rows depict results for increasing autocorrelations $a$ (top to bottom).
All model and method parameters are indicated in the upper right of each panel.
}
\label{fig:linearmixed_SM_sample_size1}
\end{figure*}

\clearpage
\begin{figure*}[t]  
\centering
\includegraphics[width=1\linewidth,page=2]{figures/sample_sizemixedhighdegreepaper_par_corr_all.pdf}
\caption{
Numerical experiments with 
linear mixed noise setup  
for varying 
 sample size $T$
 for $N=10$ 
. The left (right) column shows results for significance level $\alpha=0.01$ ($\alpha=0.05$). 
The rows depict results for increasing autocorrelations $a$ (top to bottom).
All model and method parameters are indicated in the upper right of each panel.
}
\label{fig:linearmixed_SM_sample_size2}
\end{figure*}

\clearpage
\begin{figure*}[t]  
\centering
\includegraphics[width=1\linewidth,page=3]{figures/sample_sizemixedhighdegreepaper_par_corr_all.pdf}
\caption{
Numerical experiments with 
linear mixed noise setup 
for varying 
 sample size $T$
 for $N=20$ 
. The left (right) column shows results for significance level $\alpha=0.01$ ($\alpha=0.05$). 
The rows depict results for increasing autocorrelations $a$ (top to bottom).
All model and method parameters are indicated in the upper right of each panel.
}
\label{fig:linearmixed_SM_sample_size3}
\end{figure*}

\clearpage
\begin{figure*}[t]  
\centering
\includegraphics[width=1\linewidth,page=1]{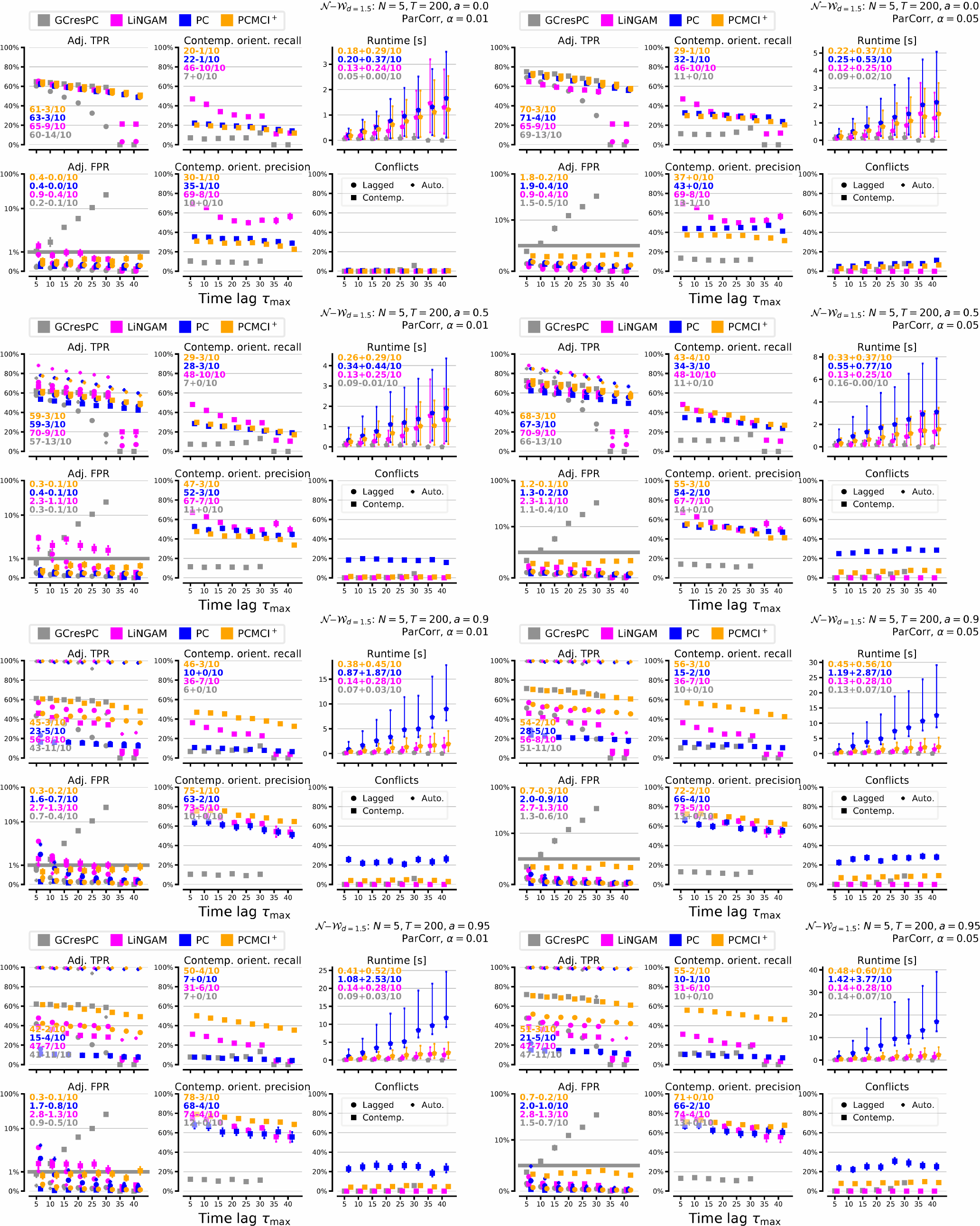}
\caption{
Numerical experiments with 
linear mixed noise setup 
for varying 
 maximum time lag $\tau_{\max}$ 
 and $T=200$
. The left (right) column shows results for significance level $\alpha=0.01$ ($\alpha=0.05$). 
The rows depict results for increasing autocorrelations $a$ (top to bottom).
All model and method parameters are indicated in the upper right of each panel.
}
\label{fig:linearmixed_SM_tau_max1}
\end{figure*}

\clearpage
\begin{figure*}[t]  
\centering
\includegraphics[width=1\linewidth,page=2]{figures/tau_maxmixedhighdegreepaper_par_corr_all.pdf}
\caption{
Numerical experiments with 
linear mixed noise setup 
for varying 
 maximum time lag $\tau_{\max}$ 
  and $T=500$
. The left (right) column shows results for significance level $\alpha=0.01$ ($\alpha=0.05$). 
The rows depict results for increasing autocorrelations $a$ (top to bottom).
All model and method parameters are indicated in the upper right of each panel.
}
\label{fig:linearmixed_SM_tau_max2}
\end{figure*}

\clearpage
\begin{figure*}[t]  
\centering
\includegraphics[width=1\linewidth,page=3]{figures/tau_maxmixedhighdegreepaper_par_corr_all.pdf}
\caption{
Numerical experiments with 
linear mixed noise setup 
for varying 
 maximum time lag $\tau_{\max}$ 
   and $T=1000$
. The left (right) column shows results for significance level $\alpha=0.01$ ($\alpha=0.05$). 
The rows depict results for increasing autocorrelations $a$ (top to bottom).
All model and method parameters are indicated in the upper right of each panel.
}
\label{fig:linearmixed_SM_tau_max3}
\end{figure*}

\end{document}